\newcommand{\ignore}[1]{}
\newcommand{\hspp}{\hspace{0.05in} }
\newcommand{\hsppp}{\hspace{0.02in} }
\newtheorem{prop}{Proposition}
\newtheorem{thm}{Theorem}
\newsavebox{\savepar}
\begin{document}
%\title{Limited Feedback Codebook Design for Multi-User
%Interference Management}
%\title{Limited Feedback Design for the Two-User Spatially
%Correlated Broadcast Channel}
\title{Beamforming Tradeoffs for Initial UE Discovery in Millimeter-Wave MIMO Systems}
\author{\large Vasanthan Raghavan, Juergen Cezanne, Sundar Subramanian, Ashwin Sampath,
and Ozge Koymen \\
Qualcomm Corporate R\&D, Bridgewater, NJ 08807 \\
E-mail: \{vraghava, jcezanne, sundars, asampath, okoymen@qti.qualcomm.com\}
}

\maketitle
%\IEEEpeerreviewmaketitle
\vspace{-10mm}

\begin{abstract}
\noindent
Millimeter-wave (mmW) multi-input multi-output (MIMO) systems have gained
increasing traction towards the goal of meeting the high data-rate requirements
for next-generation wireless systems. The focus of this work is on low-complexity
beamforming approaches for initial user equipment (UE) discovery in such systems.
Towards this goal, we first note the structure of the optimal beamformer with
per-antenna gain and phase control and establish the structure of good beamformers
with per-antenna phase-only control. Learning these right singular vector (RSV)-type
beamforming structures in mmW systems is fraught with considerable complexities such
as the need for a non-broadcast system design, the sensitivity of the beamformer
approximants to small path length changes, inefficiencies due to power amplifier
backoff, etc. To overcome these issues, we establish a physical interpretation
between the RSV-type beamformer structures and the angles of departure/arrival (AoD/AoA)
of the dominant path(s) capturing the scattering environment. This physical
interpretation provides a theoretical underpinning to the emerging interest on
directional beamforming approaches that are less sensitive to small path length changes.
While classical approaches for direction learning such as MUltiple SIgnal Classification
(MUSIC) have been well-understood, they suffer from many practical difficulties in a mmW
context such as a non-broadcast system design and high computational complexity. A simpler
broadcast-based solution for mmW systems is the adaptation of limited feedback-type
directional codebooks for beamforming at the two ends. We establish fundamental limits
for the best beam broadening codebooks and propose a construction motivated by a virtual
subarray architecture that is within a couple of dB of the best tradeoff curve at all
useful beam broadening factors. We finally provide the received ${\sf SNR}$ loss-UE
discovery latency tradeoff with the proposed beam broadening constructions. Our results
show that users with a reasonable link margin can be quickly discovered by the proposed
design with a smooth roll-off in performance as the link margin deteriorates. While these
designs are poorer in performance than the RSV learning approaches or MUSIC for cell-edge
users, their
low-complexity that leads to a broadcast system design makes them a useful candidate for
practical mmW systems.
\end{abstract}

\begin{keywords}
\noindent Millimeter-wave systems, MIMO, initial UE discovery, beamforming, beam
broadening, MUSIC, right singular vector, noisy power iteration, sparse channels.
\end{keywords}

\section{Introduction}
\label{sec1}
The ubiquitous nature of communications made possible by the smart-phone and social
media revolutions has meant that the data-rate requirements will continue to grow at an
exponential pace. On the other hand, even under the most optimistic assumptions, system
resources can continue to scale at best at a linear rate, leading to enormous mismatches
between supply and demand. Given this backdrop, many candidate solutions have been
proposed~\cite{rusek,qualcomm,boccardi1} to mesh into the patchwork that addresses the
$1000$-{\sf X} data challenge~\cite{qualcomm1} --- an intermediate stepping stone towards
bridging this burgeoning gap.

One such solution that has gained increasing traction over the last few years is
communications over the millimeter-wave (mmW)
regime~\cite{khan,torkildson,rappaport,rangan,roh,akdeniz} where the carrier frequency
is in the $30$ to $300$ GHz range. Spectrum crunch, which is the major bottleneck at
lower/cellular carrier frequencies, is less problematic at higher carrier frequencies
due to the availability of large (either unlicensed or lightly licensed) bandwidths.
However, the high frequency-dependent propagation and shadowing losses (that can offset
the link margin substantially) complicate the exploitation of these large bandwidths. It
is visualized that these losses can be mitigated by limiting coverage to small areas and
leveraging the small wavelengths that allows the deployment of a large number of antennas
in a fixed array aperture.

Despite the possibility of multi-input multi-output (MIMO) communications, mmW signaling
differs significantly from traditional MIMO architectures at cellular frequencies. Current
implementations\footnote{In a downlink setting, the first dimension corresponds to the
number of antennas at the user equipment end and the second at the base-station end.} at
cellular frequencies are on the order of $4 \times 8$ with a precoder rank (number of
layers) of $1$ to $4$; see, e.g.,~\cite{lim}. Higher rank signaling requires multiple
radio-frequency (RF) chains\footnote{An RF chain includes (but is not limited to)
analog-to-digital and digital-to-analog converters, power and low-noise amplifiers,
upconverters/mixers, etc.} which are easier to realize at lower carrier frequencies
than at the mmW regime. Thus, there has been a growing interest in understanding the
capabilities of low-complexity approaches such as beamforming (that require only a
single RF chain) in mmW
systems~\cite{torkildson,venkateswaran,brady,oelayach,hur,alkhateeb,sun}. On the other
hand, smaller form factors at mmW frequencies ensure\footnote{For example, an $N_t = 64$
element uniform linear array (ULA) at $30$ GHz requires an aperture of $\sim1$ foot at
the critical $\lambda/2$ spacing --- a constraint that can be realized at the base-station
end.} that configurations such as $4 \times 64$ (or even higher dimensionalities) are
realistic. Such high antenna dimensionalities as well as the considerably large bandwidths
at mmW frequencies result in a higher resolvability of the multipath and thus, the MIMO
channel is naturally {\em sparser} in the mmW regime than at cellular
frequencies~\cite{akdeniz,zhao,oelayach,akbar,vasanth_jstsp,vasanth_it2}.

While the optimal right singular vector (RSV) beamforming structure has been known in
the MIMO literature~\cite{tky_lo}, an explicit characterization of the connection of
this structure to the underlying physical scattering environment has not been
well-understood. We start with such an explicit physical interpretation by showing
that the optimal beamformer structure corresponds to beam steering across the different
paths with appropriate power allocation and phase compensation confirming many recent
observations~\cite{oelayach,oelayach1}. Despite using only a single RF chain, the
optimal beamformer requires per-antenna phase and gain control (in general), which
could render this scheme disadvantageous from a cost perspective. Thus, we also
characterize the structure and performance of good beamformer structures with
per-antenna phase-only control~\cite{hur,david_egc,pi,oelayach1}.

Either of these structures can be realized in practice via an (iterative) RSV learning
scheme. To the best of our understanding, specific instantiations of RSV learning such
as power iteration %and Arnoldi/Lanczos iteration methods
have not been
studied in the literature (even numerically), except in the noise-less case~\cite{dahl}.
A low-complexity proxy to the RSV-type beamformer structures is directional beamforming
along the dominant path at the millimeter-wave base-station (MWB) and the user equipment
(UE). The directional beamforming structure is particularly relevant in mmW systems due
to the sparse nature of the channel, and this structure is not expected to be optimal at
non-sparse cellular narrowband frequencies. Our studies show that directional beamforming
suffers only a minimal loss in performance relative to the optimal structures, rendering
the importance of direction learning for practical mmW MIMO systems, again confirming many
recent observations~\cite{oelayach,oelayach1,rangan,swindlehurst_mmw,brady,hur,alkhateeb,sun}.

Such schemes can be realized in practice via direction learning techniques. Direction
learning methods such as MUltiple SIgnal Classification (MUSIC), Estimation of Signal
Parameters via Rotational Invariance Techniques (ESPRIT) and related
approaches~\cite{schmidt,roy_kailath,krim} have been well-understood in the signal
processing literature, albeit primarily in the context of military/radar applications.
Their utility in these applications is as a ``super-resolution'' method to discern
multiple obstacles/targets given that the pre-beamforming ${\sf SNR}$ is moderate-to-high,
at the expense of array aperture (a large number of antennas) and computations/energy.
While MUSIC has been suggested as a possible candidate beamforming strategy for mmW
applications, it is of interest to fairly compare different beamforming strategies given
a specific objective (such as initial UE discovery).

The novelty of this work is on such a fair comparison between different strategies in
terms of: i) architecture of system design (broadcast/unicast solution), ii)
resilience/robustness to low pre-beamforming ${\sf SNR}$'s expected in mmW systems,
iii) performance loss relative to the optimal beamforming scheme, iv) adaptability to
operating on different points of the tradeoff curve of initial UE discovery latency
vs.\ accrued beamforming gain, v) scalability to beam refinement as a part of the data
transfer process, etc. Our broad conclusions are as follows. The fundamental difficulty
with any RSV learning scheme is its extreme sensitivity to small path length changes
that could result in a full-cycle phase change across paths, which becomes increasingly
likely at mmW carrier frequencies. Further, these methods suffer from implementation
difficulties as seen from a system-level standpoint such as a non-broadcast design,
poor performance at low link margins, power amplifier (PA) backoff, poor adaptability to
different beamforming architectures, etc. In commonality with noisy power iteration,
MUSIC, ESPRIT and related approaches also suffer from a non-broadcast solution, poor
performance at low link margins and computational complexity.

To overcome these difficulties, inspired by the limited feedback
literature~\cite{david_review,david_grass,mukkavilli}, we study the received
${\sf SNR}$ performance with the use of a globally known directional beamforming
codebook at the MWB and UE ends. The simplest codebook of beamformers made of constant
phase offset (CPO) array steering vectors (see Fig.~\ref{fig4c} for illustration)
requires an increasing number of codebook elements as the number of antennas increases
to cover a certain coverage area, thereby corresponding to a proportional increase in
the UE discovery latency. We study the beam broadening problem of trading off UE
discovery latency at the cost of the peak gain in a beam's coverage
area~\cite{hur,rajagopal}. We establish fundamental performance limits for this problem,
as well as realizable constructions that are within a couple of dB of this limit at all
beam broadening factors (and considerably lower at most beam broadening factors). With
beams so constructed, we show that directional beamforming can tradeoff the UE discovery
latency substantially for a good fraction of the users with a slow roll-off in performance
as the link margin deteriorates. While the codebook-based approach is sub-optimal relative
to noisy power iteration or MUSIC, its simplicity of system design and adaptability to
different beamforming architectures and scalability to beam refinement makes it a viable
candidate for initial UE discovery in practical mmW beamforming implementations. Our work
provides further impetus to the initial UE discovery problem (See Sec.~\ref{sec_5e} for a
discussion) that has attracted attention from many related recent
works~\cite{jeong_et_al,ghadikolaei,barati}.

\noindent {\bf \em \underline{Notations:}} Lower- (${\bf x}$) and upper-case block
(${\bf X}$) letters denote vectors and matrices with ${\bf x}(i)$ and ${\bf X}(i, j)$
denoting the $i$-th and $(i, j)$-th entries of ${\bf x}$ and ${\bf X}$, respectively.
$\| {\bf x} \|_2$ and $\| {\bf x} \|_{\infty}$ denote the $2$-norm and $\infty$-norm
of a vector ${\bf x}$, whereas ${\bf x}^H$, ${\bf x}^T$ and ${\bf x}^{\star}$ denote
the complex conjugate Hermitian transpose, regular transpose and complex
conjugation operations of ${\bf x}$, respectively. We use ${\mathbb{C}}$ to denote the
field of complex numbers, ${\mathbb{E}}$ to denote the expectation operation and
$\chi({\cal A})$ to denote the indicator function of a set ${\cal A}$.

\section{System Setup}
\label{sec2}

We consider the downlink setting where the MWB is equipped with $N_t$ transmit antennas
and the UE is equipped with $N_r$ receive antennas. Let ${\bf H}$ denote the $N_r \times N_t$
channel matrix capturing the scattering between the MWB and the UE. We are interested in
beamforming (rank-$1$ signaling) over ${\bf H}$ with the unit-norm $N_t \times 1$
beamforming vector ${\bf f}$. The system model in this setting is given as
\begin{eqnarray}
{\bf y} = \sqrt{ \rho_{\sf f} } \cdot {\bf H} {\bf f} s + {\bf n}
%\nonumber
\label{eq1}
\end{eqnarray}
where $\rho_{\sf f}$ is the pre-beamforming\footnote{The pre-beamforming ${\sf SNR}$
is the received ${\sf SNR}$ seen with antenna selection at both ends of the link and
under the wide-sense stationary uncorrelated scattering (WSSUS) assumption. This
${\sf SNR}$ is the same independent of which antenna is selected at either end.}
${\sf SNR}$, $s$ is the symbol chosen from an appropriate
constellation for signaling with ${\bf E}[s] = 0$ and ${\bf E}[|s^2|] = 1$, and
${\bf n}$ is the $N_r \times 1$ proper complex white
Gaussian noise vector (that is, ${\bf n} \sim {\cal CN}( {\bf 0}, \hsppp {\bf I})$)
added at the UE. The
symbol $s$ is decoded by beamforming at the receiver along the $N_r \times 1$
unit-norm vector ${\bf g}$ to obtain
\begin{eqnarray}
\widehat{s} = {\bf g}^H {\bf y} = \sqrt{ \rho_{\sf f} } \cdot {\bf g}^H {\bf H} {\bf f} s
+ {\bf g}^H {\bf n}.
%\nonumber
\label{eq2}
\end{eqnarray}

For the channel, we assume an extended Saleh-Valenzuela geometric model~\cite{saleh}
in the ideal setting where ${\bf H}$ is determined by scattering over
$L$ clusters/paths with no near-field impairments at the UE end and is denoted as follows:
\begin{eqnarray}
{\bf H} = \sqrt{ \frac{N_r N_t}{L} } \cdot \sum_{\ell = 1}^L \alpha_{\ell} \cdot
{\bf u}_{\ell} {\bf v}_{\ell}^H.
%\nonumber
\label{eq_1}
\end{eqnarray}
In~(\ref{eq_1}), $\alpha_{\ell} \sim {\cal CN}(0, 1)$ denotes the complex
gain\footnote{We assume a complex Gaussian model for $\alpha_{\ell}$ only for the
sake of illustration of the main results. However, all the results straightforwardly
carry over to more general models.}, ${\bf u}_{\ell}$ denotes the $N_r \times 1$
receive array steering vector, and ${\bf v}_{\ell}$ denotes the $N_t \times 1$
transmit array steering vector, all corresponding to the $\ell$-th path. With this
assumption, the normalization constant $\sqrt{ \frac{N_r N_t}{L} }$ in ${\bf H}$
ensures that the standard channel power normalization\footnote{The standard
normalization that has been used in MIMO system studies is ${\bf E} \left[
{\sf Tr}( {\bf H} {\bf H}^H ) \right] = N_rN_t$. However, as $\{ N_r, N_t \}$
increases as is the case with massive MIMO systems such as those in mmW signaling,
this normalization violates physical laws and needs to be modified appropriately;
see~\cite{vasanth_it2,sayeed_itw} and references therein. Such a modification will
not alter the results herein since the main focus is on a performance comparison
between different schemes. Thus, we will not concern ourselves with these technical
details here.} in MIMO system studies holds. As a typical example of the case where
a uniform linear array (ULA) of antennas are deployed at both ends of the link (and
without loss of generality pointing along the ${\sf X}$-axis in a certain global reference
frame), the array steering vectors ${\bf u}_{\ell}$ and ${\bf v}_{\ell}$ corresponding
to angle of arrival (AoA) $\phi_{{\sf R}, \ell}$ and angle of departure (AoD)
$\phi_{ {\sf T}, \ell}$ in the azimuth in that reference frame (assuming an elevation
angle $\theta_{ {\sf R}, \ell} = \theta_{ {\sf T}, \ell} = 90^{\sf o}$) are given as
\begin{eqnarray}
{\bf u}_{\ell} & = & \frac{1}{ \sqrt{N_r} } \cdot \Big[1 , \hspp
e^{j k d_{\sf R} \cos(\phi_{ {\sf R}, \ell })}, \hspp \cdots \hspp ,
e^{j (N_r - 1) k d_{\sf R} \cos(\phi_{ {\sf R}, \ell })} \Big]^T
%\nonumber
\label{eq4}
\\
{\bf v}_{\ell} & = & \frac{1}{ \sqrt{N_t} } \cdot \Big[1 , \hspp
e^{j k d_{\sf T} \cos(\phi_{ {\sf T}, \ell })}, \hspp \cdots \hspp ,
e^{j (N_t - 1) k d_{\sf T} \cos(\phi_{ {\sf T}, \ell })} \Big]^T
%\nonumber
\label{eq5}
\end{eqnarray}
where $k = \frac{2\pi}{\lambda}$ is the wave number with $\lambda$ the wavelength of
propagation, and $d_{\sf R}$ and $d_{\sf T}$ are the inter-antenna element
spacing\footnote{With the typical $d_{\sf R} = d_{\sf T} = \frac{\lambda}{2}$
spacing, we have $kd_{\sf R} = kd_{\sf T} = \pi$.} at the receive and transmit sides,
respectively. To simplify the notations and to capture the {\em constant phase offset}
(CPO)-nature of the array-steering vectors and the correspondence with their respective
physical angles, we will henceforth denote ${\bf u}_{\ell}$ and ${\bf v}_{\ell}$
above as ${\sf CPO}(\phi_{ {\sf R},\ell})$ and ${\sf CPO}(\phi_{ {\sf T},\ell})$,
respectively.

\section{Optimal Beamforming and RSV Learning}
\label{sec3}

In terms of performance metric, we are interested in the received ${\sf SNR}$ in
the instantaneous channel setting (that is, ${\bf H} = {\sf H}$), denoted as
${\sf SNR}_{\sf rx}$ and defined as,
\begin{eqnarray}
{\sf SNR}_{\sf rx} \triangleq \rho_{\sf f} \cdot
\frac{ |{\bf g}^H \hsppp {\sf H} \hsppp {\bf f}|^2 \cdot
{\bf E} [ |s|^2 ] }{ {\bf E} \left[ |{\bf g}^H {\bf n}|^2 \right]}
= \rho_{\sf f} \cdot \frac{ |{\bf g}^H \hsppp {\sf H} \hsppp {\bf f}|^2}
{ {\bf g}^H {\bf g} }
%\nonumber
\label{eq6}
\end{eqnarray}
since the achievable rate as well as the error probability in estimating $s$ are
captured by this quantity~\cite{david_review,gesbert_review,andrea_review}. We are
interested in studying the performance loss between the optimal beamforming scheme
based on the RSV of the channel and a low-complexity directional beamforming scheme.
Towards this goal, we start by studying the structure of the optimal beamforming
scheme under various RF hardware constraints.
For the link margin, we are interested\footnote{Consider the following
back-of-the-envelope calculation. Let us assume a nominal
path loss corresponding to a $100$-$200$ m cell-radius of $130$ dB, and mmW-specific
shadowing and other losses of $20$ dB. We assume a bandwidth of $500$ MHz with a
noise figure of $7$ dB to result in a thermal noise floor of $-80$ dBm. With an
equivalent isotropically radiated power (EIRP) of $40$ to $55$ dBm in an $N_t = 64$
antenna setting, the pre-beamforming ${\sf SNR}$ corresponding to $64$-level
time-repetition (processing) gain (of $18$ dB) is $-30$ to $-15$ dB. This
suggests that low pre-beamforming ${\sf SNR}$s are the norm in mmW systems.} in low
pre-beamforming ${\sf SNR}$'s.

\subsection{Optimal Beamforming with Full Amplitude and Phase Control}
\label{sec3_a}
We start with the setting where there is full amplitude and phase control of the
beamforming vector coefficients at both the MWB and the UE. Let ${\cal F}_{2}^{N_t}$
denote the class of energy-constrained beamforming vectors reflecting this assumption.
That is, ${\cal F}_{2}^{N_t} \triangleq \left\{ {\bf f} \in {\mathbb{C}}^{N_t} \hsppp
: \hsppp \| {\bf f} \|_2 \leq 1 \right\}$. Under perfect channel state information
(CSI) of ${\sf H}$ at both the MWB and the UE, optimal beamforming vectors
${\bf f}_{\sf opt}$ and ${\bf g}_{\sf opt}$ are to be designed from ${\cal F}_2^{N_t}$ and
${\cal F}_2^{N_r}$ to maximize ${\sf SNR}_{\sf rx}$~\cite{tky_lo}. Clearly,
${\sf SNR}_{\sf rx}$ is maximized with $\| {\bf f}_{\sf opt} \|_2 = 1$, otherwise
energy is unused in beamforming. Further, a simple application~\cite{tky_lo} of the 
Cauchy-Schwarz inequality shows that ${\bf g}_{\sf opt}$ is a matched filter combiner 
at the receiver and assuming that $\| {\bf g}_{\sf opt}\|_2 = 1$ (for convenience), we 
obtain ${\sf SNR}_{\sf rx} = \rho_{\sf f} \cdot {\bf f}_{\sf opt}^H {\sf H}^H {\sf H} 
{\bf f}_{\sf opt}$. We thus have
\begin{eqnarray}
{\bf f}_{\sf opt} = {\sf v}_1, \hspp {\bf g}_{\sf opt} = \frac{ {\sf H} \hsppp {\sf v}_1 }
{ \| {\sf H} \hsppp {\sf v}_1 \|_2},
%\nonumber
\label{eq_2}
\end{eqnarray}
where ${\sf v}_1$ denotes a dominant unit-norm right singular vector (RSV) of ${\sf H}$.
Here, the singular value decomposition of ${\sf H}$ is given as ${\sf H} = {\sf U}
\hsppp {\sf \Lambda} \hsppp {\sf V}^H$ with ${\sf U}$ and ${\sf V}$ being $N_r \times N_r$
and $N_t \times N_t$ unitary matrices of left and right singular vectors, respectively,
and arranged so that the corresponding leading diagonal entries of the $N_r \times N_t$
singular value matrix ${\sf \Lambda}$ are in non-increasing order.

The following result establishes the connection between the physical directions
$\{ \phi_{ {\sf R},\ell}, \hsppp \phi_{ {\sf T}, \ell} \}$ in the ULA channel model
in~(\ref{eq_1}) and $\left\{ {\bf f}_{\sf opt},  \hsppp
{\bf g}_{\sf opt} \right\}$ in~(\ref{eq_2}) and confirms the observations in many
recent works~\cite{oelayach,oelayach1}.
\begin{thm}
\label{thm1}
With ${\bf H} = {\sf H}$ and the channel model in~(\ref{eq_1}), all the eigenvectors
of ${\sf H}^H{\sf H}$ can be represented as linear combinations of ${\bf v}_1, \cdots ,
{\bf v}_L$ (the transmit array steering vectors). Thus, ${\bf f}_{\sf opt}$ is a
linear combination of ${\bf v}_1, \cdots , {\bf v}_L$ and ${\bf g}_{\sf opt}$ is a
linear combination of ${\bf u}_1, \cdots , {\bf u}_L$.
%\qed
\end{thm}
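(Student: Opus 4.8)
The plan is to prove the statement by a direct range/rank argument exploiting the fact that~(\ref{eq_1}) is a rank-(at most-)$L$ factorization with the transmit steering vectors on the right. First I would substitute~(\ref{eq_1}) into ${\sf H}^H {\sf H}$ to obtain
\[
{\sf H}^H {\sf H} = \frac{N_r N_t}{L} \sum_{\ell = 1}^L \sum_{m=1}^L
\alpha_{\ell}^{\star} \alpha_m \, \bl( {\bf u}_{\ell}^H {\bf u}_m \br) \, {\bf v}_{\ell} {\bf v}_m^H ,
\]
and observe that every column of the right-hand side is a linear combination of ${\bf v}_1, \cdots, {\bf v}_L$; equivalently, the column space (range) of ${\sf H}^H {\sf H}$ is contained in ${\cal V} \triangleq {\sf span}\{ {\bf v}_1, \cdots, {\bf v}_L \}$. (This is just the observation that the range of ${\sf H}^H$, hence of ${\sf H}^H{\sf H}$, lies in ${\cal V}$.)

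Next, for any eigenpair $({\bf x}, \la)$ of ${\sf H}^H {\sf H}$ with $\la \neq 0$, I would write ${\bf x} = \la^{-1} \bl( {\sf H}^H {\sf H} \br) {\bf x}$, which exhibits ${\bf x}$ as an element of the range of ${\sf H}^H{\sf H}$ and hence of ${\cal V}$. Thus every eigenvector of ${\sf H}^H{\sf H}$ associated with a nonzero eigenvalue is a linear combination of the transmit steering vectors. In particular the dominant RSV ${\sf v}_1$ — whose squared singular value is the largest eigenvalue of ${\sf H}^H{\sf H}$, which is almost surely strictly positive under the model~(\ref{eq_1}) — lies in ${\cal V}$, so by~(\ref{eq_2}) the optimal transmit beamformer ${\bf f}_{\sf opt} = {\sf v}_1$ is a linear combination of ${\bf v}_1, \cdots, {\bf v}_L$.

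For ${\bf g}_{\sf opt}$ I would use~(\ref{eq_2}) directly: ${\bf g}_{\sf opt} \propto {\sf H} {\sf v}_1 = \sqrt{N_r N_t / L} \sum_{\ell=1}^L \alpha_{\ell} \bl( {\bf v}_{\ell}^H {\sf v}_1 \br) {\bf u}_{\ell}$, which is manifestly a linear combination of ${\bf u}_1, \cdots, {\bf u}_L$, and the unit-norm normalization in~(\ref{eq_2}) preserves this. (Equivalently, the range of ${\sf H}$ is contained in ${\sf span}\{ {\bf u}_1, \cdots, {\bf u}_L \}$, and every left singular vector with nonzero singular value lies in this range.) The argument is essentially elementary linear algebra, so I do not expect a real obstacle; the only point requiring a word of care is that the inclusion in ${\cal V}$ holds only for eigenvectors of \emph{nonzero} eigenvalues — the zero-eigenvalue eigenspace is the null space of ${\sf H}^H{\sf H}$, i.e.\ ${\cal V}^{\perp}$ when ${\bf v}_1,\cdots,{\bf v}_L$ are linearly independent — and noting that this is exactly the case relevant to ${\bf f}_{\sf opt}$, the dominant RSV.
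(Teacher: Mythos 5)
Your proof is correct, but it takes a genuinely different (and more elementary) route than the paper. You argue via the range: since ${\sf H}^H{\sf H} = \frac{N_rN_t}{L}\sum_{\ell,m}\alpha_\ell^{\star}\alpha_m ({\bf u}_\ell^H{\bf u}_m)\,{\bf v}_\ell{\bf v}_m^H$, its range lies in ${\cal V} = {\sf span}\{{\bf v}_1,\cdots,{\bf v}_L\}$, and any eigenvector with nonzero eigenvalue equals $\lambda^{-1}({\sf H}^H{\sf H}){\bf x}$ and hence lies in ${\cal V}$. The paper instead constructs the eigenvectors explicitly: writing ${\sf H}^H{\sf H} = \frac{N_rN_t}{L}{\sf V}{\sf A}{\sf V}^H$ with ${\sf V} = [\alpha_1^{\star}{\bf v}_1,\cdots,\alpha_L^{\star}{\bf v}_L]$ and ${\sf A}(i,j)={\bf u}_i^H{\bf u}_j$, it takes an eigenvector matrix ${\sf X}$ of the $L\times L$ matrix ${\sf A}{\sf V}^H{\sf V}$ and shows by pre-multiplication that ${\sf V}{\sf X}$ is an eigenvector matrix of ${\sf H}^H{\sf H}$ with the same eigenvalues. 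The two arguments establish the same containment; yours is shorter and makes transparent exactly where the nonzero-eigenvalue restriction enters (which you correctly flag, and which the paper also acknowledges in its remark about the non-unitary basis for the positive eigen-space). What the paper's construction buys in exchange is an explicit reduction of the $N_t$-dimensional eigenproblem to an $L\times L$ one, which yields the expansion coefficients directly and underlies the closed-form $L=2$ expressions for $\beta_{\sf opt}$ given after the theorem. The treatment of ${\bf g}_{\sf opt}$ via ${\sf H}{\sf v}_1 \in {\sf span}\{{\bf u}_1,\cdots,{\bf u}_L\}$ is identical in both.
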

\begin{proof}
See Appendix~\ref{appA}.
%\qed
\end{proof}

Theorem~\ref{thm1} suggests the efficacy of directional beamforming when the channels
are sparse~\cite{oelayach,zhao,akbar,vasanth_jstsp,vasanth_it2}, as is likely the
case in mmW systems. On the other hand, limited feedback schemes commonly used at
cellular frequencies~\cite{david_review,david_grass,mukkavilli} for CSI acquisition
are similar in spirit to directional beamforming schemes for mmW systems. In
particular, the typically non-sparse nature of channels at cellular frequencies
(corresponding to a large number of paths) washes away any of the underlying Fourier
structure~\cite{akbar} of the steering vectors with a uniformly-spaced array.
Without any specific structure on the space of optimal beamforming vectors, a
good limited feedback codebook such as a Grassmannian line packing solution
uniformly quantizes the space of all beamforming vectors.

On a technical note, Theorem~\ref{thm1} provides a non-unitary basis\footnote{The
use of non-unitary bases in the context of MIMO system studies is not new; see,
e.g.,~\cite{vasanth_asilomar,vasanth_arxiv2011}.} for the eigen-space of
${\sf H}^H {\sf H}$ (with eigenvalues greater than $0$)
when $L \leq N_t$ (in the $L > N_t$ case, $\{ {\bf v}_1, \cdots , {\bf v}_L \}$ span
the eigen-space but do not form a basis). The intuitive meaning of ${\bf f}_{\sf opt}$
and ${\bf g}_{\sf opt}$ is that they perform ``coherent'' beam-combining by appropriate
phase compensation to maximize the energy delivered to the receiver. As an illustration
of Theorem~\ref{thm1}, in the special case of $L = 2$ paths, when ${\bf v}_1$ and ${\bf v}_2$
are orthogonal (${\bf v}_1^H {\bf v}_2 = 0$), the non-unit-norm version of
${\bf f}_{\sf opt}$ and ${\bf g}_{\sf opt}$ are given as
\begin{eqnarray}
{\bf f}_{\sf opt} & = & \beta_{\sf opt} \cdot {\bf v}_1 +
e^{j \left(\angle{\alpha_1} - \angle{\alpha_2} - \angle{ {\bf u}_1^H {\bf u}_2 } \right) }
\sqrt{1 - \beta_{\sf opt}^2}  \cdot {\bf v}_2
%\nonumber
\label{eq8}
\\
{\bf g}_{\sf opt} & = & \alpha_1 \beta_{\sf opt} \cdot {\bf u}_1 +
e^{j \left(\angle{\alpha_1} - \angle{\alpha_2} - \angle{ {\bf u}_1^H {\bf u}_2 } \right) }
\alpha_2 \sqrt{1 - \beta_{\sf opt}^2}  \cdot {\bf u}_2
%\nonumber
\label{eq9}
\end{eqnarray}
where
\begin{eqnarray}
\beta_{\sf opt}^2  = %\frac{1}{2}  + \frac{ |\alpha_1|^2 - |\alpha_2|^2}
%{ 2 \cdot \sqrt{ |\alpha_1|^4 + |\alpha_2|^4 + 2 |\alpha_1|^2 |\alpha_2|^2 \cdot
%\left(2 |{\bf u}_1^H {\bf u}_2|^2 - 1 \right) }}.
\frac{1}{ 2} \cdot \left[ 1 +
\frac{ |\alpha_1|^2 - |\alpha_2|^2}
{ \sqrt{  \left( |\alpha_1|^2 - |\alpha_2|^2 \right)^2 + 4
|\alpha_1|^2 |\alpha_2|^2 \cdot |{\bf u}_1^H {\bf u}_2|^2 }}
\right] .
%\nonumber
\label{eq10}
\end{eqnarray}
In addition, if ${\bf u}_1$ and ${\bf u}_2$ are orthogonal, it can be seen that
$\beta_{\sf opt}$ is either $1$ or $0$ with full power allocated to the dominant path.
At the other extreme of near-parallel ${\bf u}_1$ and ${\bf u}_2$, the optimal scheme
converges to proportional power allocation. That is,
\begin{eqnarray}
|{\bf u}_1^H {\bf u}_2| \rightarrow 1 \Longrightarrow \beta_{\sf opt}^2
\rightarrow \frac{ |\alpha_1|^2} { |\alpha_1|^2 + |\alpha_2|^2}.
%\nonumber
\label{eq11}
\end{eqnarray}

If ${\bf u}_1$ and ${\bf u}_2$ are orthogonal (${\bf u}_1^H {\bf u}_2 = 0$),
the non-unit-norm version of ${\bf f}_{\sf opt}$ and ${\bf g}_{\sf opt}$ are given as
\begin{align}
&{\bf f}_{\sf opt} = \beta_{\sf opt} \cdot {\bf v}_1 +
e^{ - j \angle{ {\bf v}_1^{H} {\bf v}_2}} \sqrt{1 - \beta_{\sf opt}^2} \cdot
{\bf v}_2 %\nonumber
\label{eq12}
\\
%%%%
%%%%
&{\bf g}_{\sf opt} =  \alpha_1 \left( \beta_{\sf opt} + | {\bf v}_1^H {\bf v}_2 |
 \sqrt{1 - \beta_{\sf opt}^2} \right) \cdot {\bf u}_1 +
e^{ - j \angle{ {\bf v}_1^{H} {\bf v}_2}} \alpha_2
\left( | {\bf v}_1^H {\bf v}_2 | \beta_{\sf opt} +
 \sqrt{1 - \beta_{\sf opt}^2} \right) \cdot {\bf u}_2
%\nonumber
\label{eq13}
\end{align}
where
\begin{eqnarray}
\beta_{\sf opt}^2 & = & \left\{
\begin{array}{cc}
\frac{ {\cal A} + \sqrt{ {\cal B}} }{2 {\hspace{0.01in}} {\cal C}}
& {\rm if} {\hspace{0.1in}} |\alpha_1| \geq |\alpha_2| \\
\frac{ {\cal A} - \sqrt{ {\cal B}} }{2 {\hspace{0.01in}} {\cal C}}
& {\rm if} {\hspace{0.1in}} |\alpha_1| < |\alpha_2|
\end{array}
{\hspace{0.10in}} {\rm with} \right.
\nonumber \\
{\cal A} & = & \frac{ (|\alpha_1|^2 - |\alpha_2|^2)^2 }
{ |{\bf v}_1^{H} {\bf v}_2|^2 } + 2 |\alpha_1|^2 \cdot ( |\alpha_1|^2 + |\alpha_2|^2 )
\nonumber \\
{\cal B} & = & \frac{ (|\alpha_1|^2 - |\alpha_2|^2 )^4 }
{ |{\bf v}_1^{H} {\bf v}_2|^4 } +
\frac{ 4 |\alpha_1|^2 |\alpha_2|^2 }{ | {\bf v}_1^{H} {\bf v}_2|^2 }
%\cdot \left( 1 - \frac{1}{ |{\bf v}_1^{H} {\bf v}_2|^2} \right)
\cdot \left( |\alpha_1|^2 - |\alpha_2|^2 \right)^2
\nonumber \\
{\cal C} & = & \left( 1 + \frac{1}{ |{\bf v}_1^{H} {\bf v}_2|^2} \right)
\cdot \left( |\alpha_1|^2 + |\alpha_2|^2  \right)^2 -
\frac{ 4 |\alpha_1|^2 |\alpha_2|^2 }{ |{\bf v}_1^{H} {\bf v}_2 |^2 }.
%\nonumber
\label{eq14}
\end{eqnarray}
For specific examples, note that $\beta_{\sf opt}$ converges to $1$ or $0$ as
${\bf v}_1$ and ${\bf v}_2$ become more orthogonal. On the other hand, the optimal
scheme converges to proportional squared power allocation as ${\bf v}_1$ and
${\bf v}_2$ become more parallel. That is,
\begin{eqnarray}
|{\bf v}_1^{H} {\bf v}_2| \rightarrow 1 \Longrightarrow
\beta_{\sf opt}^2 \rightarrow \frac{ |\alpha_1|^4 }
{ |\alpha_1|^4 + |\alpha_2|^4 }.  %\nonumber
\label{eq15}
\end{eqnarray}
Similar expressions can be found in~\cite{vasanth_gcom15} for the cases where ${\bf v}_1$
and ${\bf v}_2$ are near-parallel (${\bf v}_1^H {\bf v}_2 \approx 1$), ${\bf u}_1$
and ${\bf u}_2$ are near-parallel (${\bf u}_1^H {\bf u}_2 \approx 1$), etc.

\subsection{Optimal Beamforming with Phase-Only Control}
\label{sec3_b}

In practice, the antenna arrays at the MWB and UE ends are often controlled by a
common PA disallowing per-antenna power control. Thus, there is a need to
understand the performance with phase-only control at both the ends. Let
${\cal F}_{\infty}^{N_t}$ denote the class of amplitude-constrained beamforming
vectors with phase-only control reflecting such an assumption. That is,
${\cal F}_{\infty}^{N_t} \triangleq \{ {\bf f} \in {\mathbb{C}}^{N_t} \hsppp :
\hsppp \| {\bf f} \|_{\infty} \leq \frac{1}{ \sqrt{N_t} } \}$. We now consider
the problem of optimal beamforming with ${\bf f} \in {\cal F}_{\infty}^{N_t}$
and ${\bf g} \in {\cal F}_{\infty}^{N_r}$. Note that if ${\bf f} \in
{\cal F}_{\infty}^{N_t}$, then ${\bf f} \in {\cal F}_{2}^{N_t}$. However, unlike
the optimization over ${\cal F}_{2}^{N_t}$, it is not clear that the received
${\sf SNR}$ is maximized by a choice ${\bf f}_{\sf opt}$ with
$\| {\bf f}_{\sf opt} \|_{\infty} = \frac{1}{ \sqrt{N_t} }$. Nor is it clear that
$\| {\bf f}_{\sf opt} \|_2 = 1$. With this background, we have the following result.
\begin{thm}
\label{thm2}
The optimal choice ${\bf f}_{\sf opt}$ from ${\cal F}_{\infty}^{N_t}$ is an equal
gain transmission scheme. That is, $| {\bf f}_{\sf opt}(i) | = \frac{1}{ \sqrt{N_t} },
\hspp i = 1, \cdots, N_t$. The optimal choice ${\bf g}_{\sf opt}$ from ${\cal F}_{\infty}^{N_r}$
satisfies
\begin{eqnarray}
{\bf g}_{\sf opt} = \frac{1}{ \sqrt{N_r} } \cdot  \frac{ {\sf H} \hsppp {\bf f}_{\sf opt} }
{ \| {\sf H} \hsppp {\bf f}_{\sf opt} \|_{\infty} }.
\label{eq_4}
\end{eqnarray}
%\qed
\end{thm}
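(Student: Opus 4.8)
The plan is to eliminate ${\bf g}$ first, reducing the problem to maximizing a quadratic form over the polydisc ${\cal F}_{\infty}^{N_t}$, and then to invoke convexity of that form. For the receive side, I would fix ${\bf f}$ and set ${\bf z} \triangleq {\sf H}{\bf f}$. By~(\ref{eq6}), ${\sf SNR}_{\sf rx} = \rho_{\sf f}\,|{\bf g}^H{\bf z}|^2/\|{\bf g}\|_2^2$, which is invariant under ${\bf g}\mapsto c{\bf g}$ for every $c\neq 0$; since ${\cal F}_{\infty}^{N_r}$ contains a neighbourhood of the origin it contains a representative of every direction, so the supremum of ${\sf SNR}_{\sf rx}$ over nonzero ${\bf g}\in{\cal F}_{\infty}^{N_r}$ equals its supremum over all nonzero ${\bf g}\in{\mathbb C}^{N_r}$. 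By Cauchy--Schwarz that value is $\rho_{\sf f}\|{\bf z}\|_2^2$, attained exactly when ${\bf g}\parallel{\bf z}$, and the vector $\frac{1}{\sqrt{N_r}}\,{\bf z}/\|{\bf z}\|_\infty$ is such a representative lying in ${\cal F}_{\infty}^{N_r}$ (it meets the amplitude constraint with equality); this is~(\ref{eq_4}). The degenerate case ${\bf z}={\bf 0}$ makes ${\sf SNR}_{\sf rx}=0$ for every ${\bf g}$ and needs no separate treatment.

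With a matched-filter ${\bf g}$ in place, and dropping the positive constant $\rho_{\sf f}$, the problem collapses to
\begin{equation}
\max_{ {\bf f} \in {\cal F}_{\infty}^{N_t} }\ {\bf f}^H {\bf A} {\bf f}, \qquad {\bf A} \triangleq {\sf H}^H {\sf H},
\label{eq:thm2reduce}
\end{equation}
where ${\bf A}$ is Hermitian positive semidefinite and ${\cal F}_{\infty}^{N_t}=\{{\bf f}:|{\bf f}(i)|\le 1/\sqrt{N_t},\ i=1,\dots,N_t\}$ is a compact convex polydisc. The objective ${\bf f}\mapsto{\bf f}^H{\bf A}{\bf f}$ is a positive semidefinite quadratic form, hence convex on ${\mathbb C}^{N_t}$ viewed as ${\mathbb R}^{2N_t}$, and a convex function attains its maximum over a compact convex set at an extreme point. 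I would then observe that the extreme points of a product of discs are precisely the vectors all of whose coordinates lie on the bounding circles, i.e.\ $|{\bf f}(i)|=1/\sqrt{N_t}$ for every $i$: a coordinate with $|{\bf f}(i)|<1/\sqrt{N_t}$ admits a feasible two-sided perturbation (so ${\bf f}$ is not extreme), while if every coordinate lies on its circle the coordinate-wise triangle inequality forces any convex decomposition of ${\bf f}$ to be trivial. Therefore some optimal ${\bf f}_{\sf opt}$ satisfies $|{\bf f}_{\sf opt}(i)|=1/\sqrt{N_t}$ for all $i$ --- equal-gain transmission --- and in particular $\|{\bf f}_{\sf opt}\|_\infty=1/\sqrt{N_t}$ and $\|{\bf f}_{\sf opt}\|_2=1$, which settles the two points flagged just before the theorem.

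An equivalent, more hands-on route avoids extreme-point theory: freeze all entries of ${\bf f}$ except ${\bf f}(i_0)=z$, so the objective in~(\ref{eq:thm2reduce}) becomes ${\bf A}(i_0,i_0)|z|^2+2\,{\rm Re}(\bar{b} z)+d$ for some $b\in{\mathbb C}$, $d\in{\mathbb R}$ and ${\bf A}(i_0,i_0)\ge 0$; this is a convex function of $z$ on the disc $|z|\le 1/\sqrt{N_t}$ and hence is maximized on the bounding circle. Replacing ${\bf f}(i_0)$ by such a boundary value never decreases ${\sf SNR}_{\sf rx}$ and leaves the other coordinates untouched, so sweeping $i_0=1,\dots,N_t$ produces, in at most $N_t$ steps, an optimizer whose every coordinate has magnitude $1/\sqrt{N_t}$.

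No step here is genuinely difficult; the care needed is bookkeeping in two places. First, one must note the asymmetry between the two ends: the $\ell_\infty$ ball on ${\bf g}$ is immaterial to the attainable ${\sf SNR}_{\sf rx}$ (scale invariance), whereas the $\ell_\infty$ ball on ${\bf f}$ genuinely binds --- that binding \emph{is} the equal-gain conclusion. Second, the conclusion should be read as an existence statement: the argument exhibits an equal-gain optimal beamformer, not a unique one. When ${\sf H}$ is rank-deficient --- for instance if a column of ${\sf H}$ vanishes --- there are maximizers of~(\ref{eq:thm2reduce}) that are not equal gain, so the precise reading is that one may always take ${\bf f}_{\sf opt}$ to be equal gain (and, correspondingly, any positive rescaling of~(\ref{eq_4}) that stays in ${\cal F}_{\infty}^{N_r}$ is equally optimal on the receive side).
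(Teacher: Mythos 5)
Your receive-side argument is the paper's own: exploit scale invariance of the Rayleigh-type objective, apply Cauchy--Schwarz to get the bound $\rho_{\sf f}\|{\sf H}{\bf f}\|_2^2$, and exhibit $\frac{1}{\sqrt{N_r}}\,{\sf H}{\bf f}/\|{\sf H}{\bf f}\|_\infty$ as a feasible vector attaining it. On the transmit side you diverge. The paper argues by contradiction: it freezes all but one coordinate, partitions ${\sf H}^H{\sf H}$, aligns the phase of the free coordinate with the cross-term so the restricted objective becomes $|{\bf f}(1)|^2{\sf h}+2|{\bf f}(1)|\,|{\bf f}_{\sf r}^H{\bf h}_{\sf r}|+{\bf f}_{\sf r}^H{\sf H}_{\sf r}{\bf f}_{\sf r}$, and concludes from monotonicity in $|{\bf f}(1)|$ that the magnitude must saturate. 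Your second, ``hands-on'' route is this same argument repackaged: convexity of $a|z|^2+2\,{\rm Re}(\bar b z)+d$ on a disc is exactly the phase-alignment-plus-monotonicity step, and your explicit coordinate sweep actually patches a small looseness in the paper's contradiction (which tacitly assumes strict monotonicity). Your first route --- Bauer's maximum principle for the convex form ${\bf f}^H{\sf H}^H{\sf H}{\bf f}$ over the polydisc, plus the identification of the extreme points of a product of discs --- is genuinely different and arguably cleaner: it replaces the coordinate-by-coordinate bookkeeping with one structural fact, at the cost of invoking extreme-point machinery. Your remark that the conclusion is properly an existence statement (degenerate ${\sf H}$, e.g.\ a zero column, admits non-equal-gain maximizers) is a correct refinement that the paper's proof-by-contradiction phrasing obscures. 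No gaps.
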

\begin{proof}
See Appendix~\ref{appB}.
%\qed
\end{proof}

Note that as with the proof of the optimal beamforming structure from
${\cal F}_2^{N_t}$ and ${\cal F}_2^{N_r}$ where given a fixed ${\bf f} \in
{\cal F}_2^{N_t}$, the matched filter combiner corresponding to it is optimal
from ${\cal F}_2^{N_r}$, the matched filter combiner structure in~(\ref{eq_4}) is
optimal for any fixed ${\bf f} \in {\cal F}_{\infty}^{N_t}$. Further, (by
construction), ${\bf f}_{\sf opt}$ corresponds to an equal gain transmission
scheme, which is also power efficient. That is, $\| {\bf f}_{\sf opt} \|_2 = 1$
and $\| {\bf f}_{\sf opt} \|_{\infty} = \frac{ 1 }{ \sqrt{N_t} }$. On the other
hand, while $\| {\bf g}_{\sf opt} \|_{\infty} = \frac{1} { \sqrt{N_r} }$, it is
unclear if ${\bf g}_{\sf opt}$ corresponds to an equal gain reception scheme, let
alone a power efficient scheme. That is, not only can $\| {\bf g}_{\sf opt} \|_2$
be smaller than $1$, but also $| {\bf g}_{\sf opt}(i)|$ need not be $\frac{1}
{ \sqrt{N_r} }$ for some $i$.

While Theorem~\ref{thm2} specifies the amplitudes of $\left\{ {\bf f}_{\sf opt}(i) \right\}$,
it is unclear on the phases of $\left\{ {\bf f}_{\sf opt}(i) \right\}$. In general,
the search for the optimal phases of ${\bf f}_{\sf opt}(i)$ appears to be a quadratic
programming problem with attendant issues on initialization and convergence to local
optima. We now provide two good solutions (as evidenced by their performance
relative to the optimal scheme from Sec.~\ref{sec3_a} in subsequent numerical studies)
to the received ${\sf SNR}$ maximization problem from ${\cal F}_{\infty}^{N_t}$ and
${\cal F}_{\infty}^{N_r}$. The first solution is the equal-gain RSV and its matched
filter as candidate beamforming vectors at the two ends (${\bf f}_{\sf cand, \hsppp 1}$
and ${\bf g}_{\sf cand, \hsppp 1}$):
\begin{eqnarray}
{\bf f}_{\sf cand, \hsppp 1}(i) = \frac{1}{ \sqrt{N_t} } \cdot \angle{ {\sf v}_1(i) },
\hspp
{\bf g}_{\sf cand, \hsppp 1} = \frac{1}{ \sqrt{N_r} } \cdot \frac{ {\sf H}
\hsppp {\bf f}_{\sf cand, \hsppp 1}}
{ \| {\sf H} \hsppp {\bf f}_{\sf cand, \hsppp 1} \|_{\infty} }.
\label{eq_5}
\end{eqnarray}
For the second solution, we have the following statement.
\begin{prop}
Let ${\sf H}$ be decomposed along the column vectors as ${\sf H} = \left[
{\sf h}_1, \cdots, {\sf h}_{N_t} \right]$. With $\theta_1 = 0$, let $\theta_i$
be recursively defined as
\begin{eqnarray}
\theta_i = \angle{ \left( \sum_{k = 1}^{i-1} e^{j \theta_k} \cdot
{\sf h}_i^H {\sf h}_k \right).}
%\nonumber
\label{eq18}
\end{eqnarray}
The beamforming vector ${\bf f}_{\sf cand, \hsppp 2}$ where ${\bf f}_{\sf cand, \hsppp 2}(i)
= \frac{1} { \sqrt{N_t} } \cdot e^{j \theta_i}$ and ${\bf g}_{\sf cand,\hsppp 2} =
\frac{1} { \sqrt{N_t} } \cdot \frac{ {\sf H} \hsppp {\bf f}_{\sf cand, \hsppp 2} }
{ \| {\sf H} \hsppp {\bf f}_{\sf cand, \hsppp 2} \|_{\infty} }$ lead to a good
beamforming solution for the problem considered in this section.
\label{prop1}
\end{prop}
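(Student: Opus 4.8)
The plan is to exhibit $\mathbf{f}_{\sf cand,\hsppp 2}$ as the output of a greedy, column‑by‑column solution of the phase‑optimization problem left open after Theorem~\ref{thm2}. First I would record the reduction implied by Theorem~\ref{thm2}: for any fixed ${\bf f}$ the matched‑filter combiner $\bl(\text{cf.}~(\ref{eq_4})\br)$ yields ${\sf SNR}_{\sf rx} = \rho_{\sf f}\cdot\norm{{\sf H}{\bf f}}_2^2$ (the scaling of ${\bf g}$ cancels in~(\ref{eq6})), and since Theorem~\ref{thm2} already forces $\abs{{\bf f}_{\sf opt}(i)} = 1/\sqrt{N_t}$, one may write ${\bf f}(i) = \tfrac{1}{\sqrt{N_t}}e^{j\theta_i}$ and fix $\theta_1 = 0$ without loss of generality (a global phase does not affect $\norm{{\sf H}{\bf f}}_2$). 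The only remaining task is then to choose $\theta_2,\dots,\theta_{N_t}$ to maximize $\norm{\sum_{i=1}^{N_t} e^{j\theta_i}\,{\sf h}_i}_2^2$, a quadratic program over unit‑modulus phases for which no closed form is expected; this is precisely what motivates the greedy recursion.

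Next I would introduce the running partial beamformed vector ${\bf p}_i \triangleq \sum_{k=1}^{i} e^{j\theta_k}\,{\sf h}_k$, ${\bf p}_0 = {\bf 0}$, and expand the one‑step increment
\[
\norm{{\bf p}_i}_2^2 = \norm{{\bf p}_{i-1}}_2^2 + \norm{{\sf h}_i}_2^2 + 2\,{\rm Re}\!\brc{e^{-j\theta_i}\,{\sf h}_i^H {\bf p}_{i-1}}.
\]
Because the first two terms do not depend on $\theta_i$, the partial objective $\norm{{\bf p}_i}_2^2$ is maximized exactly by the phase‑alignment choice $\theta_i = \angle\brc{{\sf h}_i^H {\bf p}_{i-1}} = \angle\brc{\sum_{k=1}^{i-1} e^{j\theta_k}\,{\sf h}_i^H {\sf h}_k}$, which is~(\ref{eq18}); with this choice the cross term equals $2\,\abs{{\sf h}_i^H {\bf p}_{i-1}} \ge 0$. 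Two consequences follow at once: (i) $\set{\norm{{\bf p}_i}_2^2}$ is non‑decreasing in $i$, so the greedy pass never sacrifices accrued array gain; and (ii) telescoping from ${\bf p}_1 = e^{j\theta_1}{\sf h}_1$ gives $\norm{{\sf H}{\bf f}_{\sf cand,\hsppp 2}}_2^2 = \tfrac{1}{N_t}\norm{{\bf p}_{N_t}}_2^2 \ge \tfrac{1}{N_t}\sum_{i=1}^{N_t}\norm{{\sf h}_i}_2^2 = \tfrac{1}{N_t}\,{\sf Tr}({\sf H}^H{\sf H})$, a deterministic floor on ${\sf SNR}_{\sf rx}$ for the proposed scheme. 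Taking ${\bf g}_{\sf cand,\hsppp 2}$ to be the matched filter for the fixed ${\bf f}_{\sf cand,\hsppp 2}$ is then optimal by Theorem~\ref{thm2}, which completes the description.

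The main obstacle is that ``good'' cannot honestly be strengthened to ``optimal'': the greedy pass yields only a one‑coordinate‑ascent iterate — not even a full cyclic local optimum of the unit‑modulus quadratic program — and the output depends on the column ordering ${\sf h}_1,\dots,{\sf h}_{N_t}$, for which no ordering is provably best in general. Hence, beyond the trace lower bound above, I would not attempt a tight approximation‑ratio bound against the ${\cal F}_2$ optimum $\la_1({\sf H}^H{\sf H})$; instead I would invoke the sparsity of ${\sf H}$ in the mmW regime, where a single dominant term $e^{j\theta_i}{\sf h}_i$ already carries most of the channel energy so that greedy phase alignment forfeits little, and defer the quantitative comparison with ${\bf f}_{\sf opt}$ of Sec.~\ref{sec3_a} and with ${\bf f}_{\sf cand,\hsppp 1}$ to the numerical results, exactly as the statement anticipates.
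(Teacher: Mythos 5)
Your argument is correct and is essentially the paper's own proof: the paper rewrites ${\bf f}^H{\sf H}^H{\sf H}{\bf f}$ as $\sum_i {\sf h}_i^H{\sf h}_i + 2\sum_i {\rm Re}\bigl(e^{-j\theta_i}\sum_{k<i}e^{j\theta_k}{\sf h}_i^H{\sf h}_k\bigr)$ and observes that the recursion~(\ref{eq18}) maximizes each term given the earlier phases, which is exactly your greedy maximization of the partial norms $\|{\bf p}_i\|_2^2$, with the same caveat that term-by-term maximization need not maximize the sum. Your telescoped floor $\|{\sf H}{\bf f}_{\sf cand,\hsppp 2}\|_2^2 \ge \tfrac{1}{N_t}{\sf Tr}({\sf H}^H{\sf H})$ is a small correct addition the paper does not state.
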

\begin{proof}
See Appendix~\ref{appC}.
%\qed
\end{proof}

The importance of the beamforming structure from Prop.~\ref{prop1} relative to the
one in~(\ref{eq_5}) is that while the latter is just a equal-gain quantization
of~(\ref{eq_2}) and thus requires the entire ${\sf H}$ for its design, the former
is more transparent in terms of the column vectors of ${\sf H}$ and can thus be
designed via a simple uplink training scheme.

Fig.~\ref{fig1}(a) plots the complementary cumulative distribution function (complementary
CDF) of the loss in ${\sf SNR}_{\sf rx}$ between the optimal beamforming scheme in~(\ref{eq_2})
and four candidate beamforming schemes: i) equal-gain RSV and matched filter scheme
from~(\ref{eq_5}), ii) beamforming scheme from Prop.~\ref{prop1}, iii) beamforming
along the dominant direction at the MWB and the matched filter to the dominant
direction at the UE, and iv) beamforming along the dominant directions at the MWB
and the UE in a $N_r = 4, N_t = 64$ system generated by $L = 2$ clusters whose AoAs/AoDs
are independently and identically distributed (i.i.d.) in the $120^{\sf o}$ field-of-view
(coverage area) of the arrays in the azimuth. From this study, we see that the
performance of the scheme from Prop.~\ref{prop1} is similar to that from~(\ref{eq_5}),
as is the replacement of matched filter at the UE end with the dominant direction.

\begin{figure*}[htb!]
\begin{center}
\begin{tabular}{cc}
\includegraphics[height=2.7in,width=3.15in] {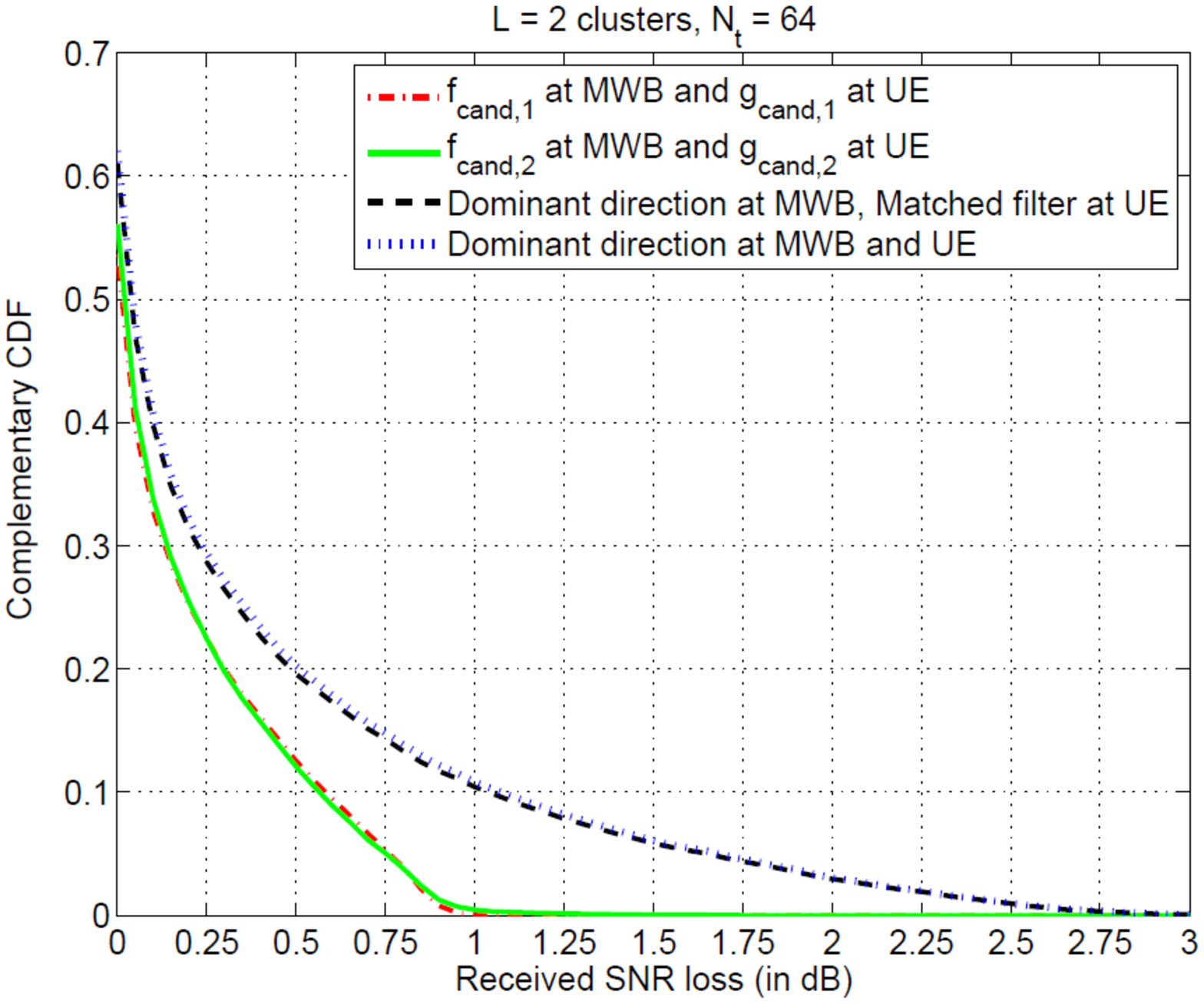}
&
\includegraphics[height=2.7in,width=3.15in] {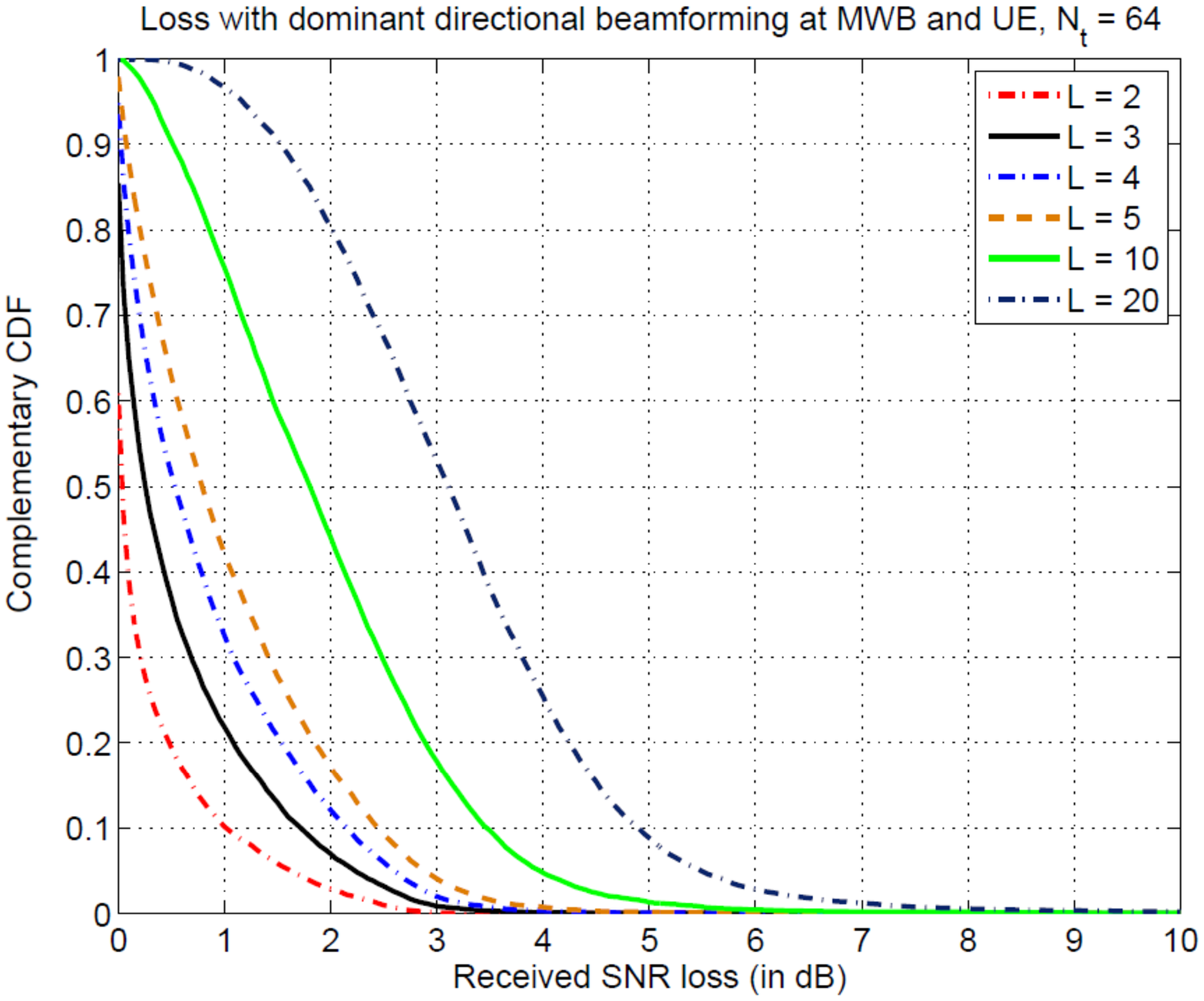}
\\ {\hspace{0.35in}} (a) & {\hspace{0.12in}} (b)
\end{tabular}
\caption{\label{fig1}
Complementary CDF of loss in ${\sf SNR}_{\sf rx}$ between the optimal beamforming
scheme from ${\cal F}_2^{N_t}$ in~(\ref{eq_2}) relative to: (a) different
beamforming solutions with $L = 2$ clusters, and (b) a dominant directional
beamforming scheme with different choices of $L$. }
\end{center}
\vspace{-5mm}
\end{figure*}

\subsection{Issues with RSV Learning}
\label{sec_3c}

The (near-)optimality of the RSV-type solutions from ${\cal F}_2^{N_t}$ and
${\cal F}_{\infty}^{N_t}$ suggests that a reasonable approach for beamformer design
is to let the MWB and UE learn an approximation to ${\bf f}_{\sf opt}$ and
${\bf g}_{\sf opt}$, respectively. A similar approach is adopted at cellular
frequencies under the rubric of limited feedback schemes that approximate the RSV
of the channel from a codebook of beamforming vectors. We specialize this approach
and elaborate on their appropriateness for mmW systems.

A well-known RSV learning scheme that exploits the time division duplexing(TDD)-reciprocity
of the channel ${\sf H}$ is power iteration~\cite[Sec.\ 7.3]{golub} which iterates
($i = 0, 1, \cdots$) a randomly initialized beamforming vector (${\bf f}_0$) over the
channel as follows:
\begin{eqnarray}
\widetilde{\bf g}_{i+1} & = & \sqrt{ \rho_{\sf f} } \cdot {\sf H} {\bf f}_i +
{\bf n}_{ {\sf f}, \hsppp i +1}
\label{eq_7}
\\
{\bf g}_{i+1} & = & \frac{ \widetilde{\bf g}_{i+1} } { \| \widetilde{\bf g}_{i+1} \|_2}
= \frac{ \sqrt{ \rho_{\sf f} } \cdot {\sf H}  {\bf f}_i + {\bf n}_{ {\sf f}, \hsppp i +1} }
{ \| \sqrt{ \rho_{\sf f} } \cdot {\sf H} {\bf f}_i + {\bf n}_{ {\sf f}, \hsppp i +1} \|_2 }
\label{eq_8}
\\
{\bf z}_{i+1} & = & \sqrt{ \rho_{\sf r} } \cdot {\sf H}^T {\bf g}_{i + 1}^{\star} +
{\bf n}_{ {\sf r}, \hsppp i+1}^{\star}
\label{eq_9}
\\
{\bf f}_{i+1} & = & \frac{ {\bf z}_{i+1}^{\star} } { \| {\bf z}_{i+1} \|_2 }
= \frac{ \sqrt{ \rho_{\sf r} } \cdot {\sf H}^H {\bf g}_{i + 1} + {\bf n}_{ {\sf r}, \hsppp i+1} }
{ \| \sqrt{ \rho_{\sf r} } \cdot {\sf H}^H {\bf g}_{i + 1} + {\bf n}_{ {\sf r}, \hsppp i+1} \|_2 }.
\label{eq_10}
\end{eqnarray}
In~(\ref{eq_7})-(\ref{eq_10}), $\rho_{\sf f}$ and $\rho_{\sf r}$ stand for the
pre-beamforming ${\sf SNR}$s of the forward and reverse links, respectively. A
straightforward simplification shows that
\begin{eqnarray}
{\bf f}_{i+1} = \frac{
\sqrt{ \rho_{\sf f} \rho_{\sf r} } \cdot {\sf H}^H {\sf H} \hsppp {\bf f}_i +
\sqrt{\rho_{\sf r} } \hsppp {\sf H}^H {\bf n}_{ {\sf f}, \hsppp i + 1} +
\| \sqrt{ \rho_{\sf f}} \hsppp {\sf H} \hsppp {\bf f}_i + {\bf n}_{ {\sf f}, \hsppp i + 1} \|_2
\cdot {\bf n}_{ {\sf r}, \hsppp i + 1} }
{ \| \sqrt{ \rho_{\sf f} \rho_{\sf r} } \cdot {\sf H}^H {\sf H} \hsppp {\bf f}_i +
\sqrt{\rho_{\sf r} } \hsppp {\sf H}^H {\bf n}_{ {\sf f}, \hsppp i + 1} +
\| \sqrt{ \rho_{\sf f}} \hsppp {\sf H} \hsppp {\bf f}_i + {\bf n}_{ {\sf f}, \hsppp i + 1} \|_2
\cdot {\bf n}_{ {\sf r}, \hsppp i + 1} \|_2 }.
%\nonumber
\label{eq23}
\end{eqnarray}
When the system is noise-free ($\{ \rho_{\sf f} , \hsppp \rho_{\sf r} \} \rightarrow \infty$),
the above algorithm reduces to
\begin{eqnarray}
{\bf f}_{i+1} & \rightarrow & \frac{ {\sf H}^H {\sf H} \hsppp {\bf f}_i }
{ \| {\sf H}^H {\sf H} \hsppp {\bf f}_i \|_2 }  \Longrightarrow {\bf f}_{i+1}
\rightarrow \frac{ \left( {\sf H}^H {\sf H} \right)^{i+1} {\bf f}_0 }
{ \| \left( {\sf H}^H {\sf H} \right)^{i+1} {\bf f}_0  \|_2}
\label{eq24}
\\
{\bf g}_{i+1} & \rightarrow & \frac{ {\sf H} \hsppp {\bf f}_i }
{ \| {\sf H} \hsppp {\bf f}_i \|_2 }
\Longrightarrow {\bf g}_{i+1} \rightarrow
 \frac{ {\sf H} \left( {\sf H}^H {\sf H} \right)^{i+1} {\bf f}_0 }
{ \| {\sf H} \left( {\sf H}^H {\sf H} \right)^{i+1} {\bf f}_0  \|_2} .
\label{eq25}
\end{eqnarray}
With ${\sf H}^H {\sf H} = {\sf V} {\sf \Lambda} {\sf V}^H$, we have
\begin{eqnarray}
{\bf f}_{i+1} = \frac{ \sum_{j = 1}^{N_t} {\sf v}_j \cdot
\left( {\sf v}_j^H {\bf f}_0 \right) \cdot \left( {\sf \Lambda}_j \right)^{i+1} }
{ \|\sum_{j = 1}^{N_t} {\sf v}_j \cdot
\left( {\sf v}_j^H {\bf f}_0 \right) \cdot \left( {\sf \Lambda}_j \right)^{i+1} \|_2 }.
\label{eq26}
\end{eqnarray}

\begin{figure*}[htb!]
\begin{center}
\begin{tabular}{cc}
\includegraphics[height=2.7in,width=3.15in] {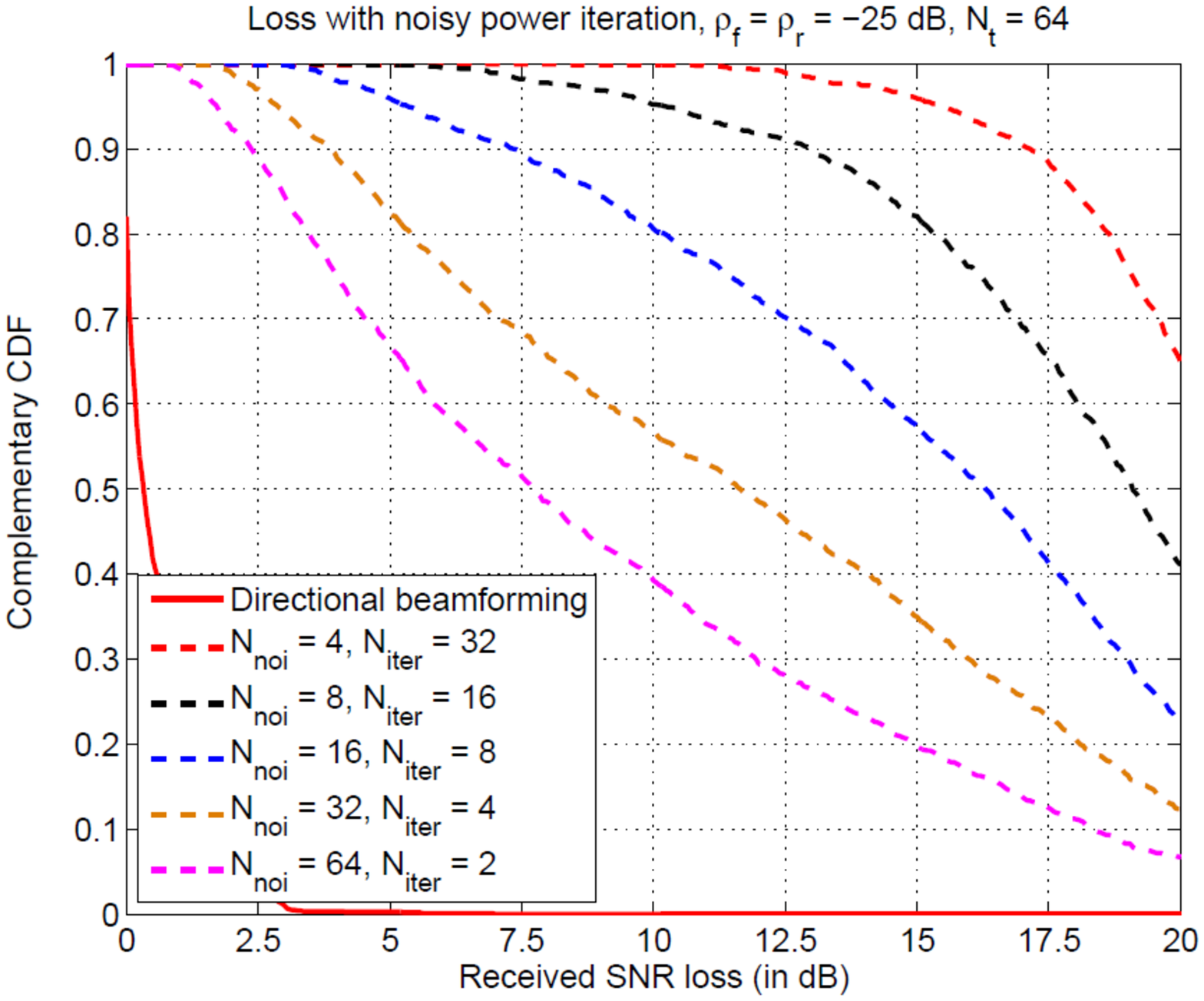}
&
\includegraphics[height=2.7in,width=3.15in] {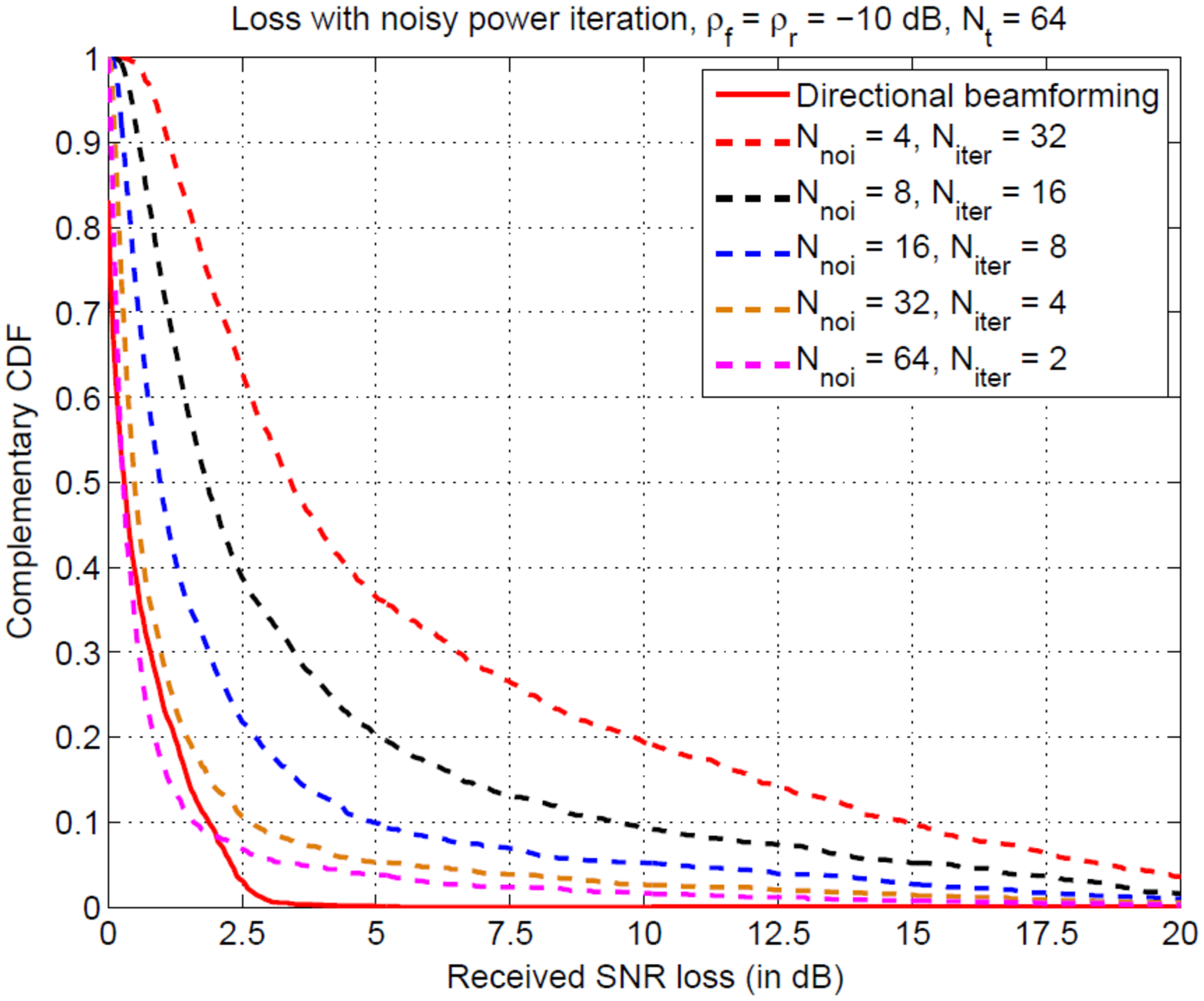}
\\ {\hspace{0.35in}} (a) & {\hspace{0.12in}} (b)
\end{tabular}
\caption{\label{fig2}
Complementary CDF of loss in ${\sf SNR}_{\sf rx}$ between the optimal beamforming
scheme from ${\cal F}_2^{N_t}$ in~(\ref{eq_2}) and the noisy power iteration scheme
with (a) $\rho_{\sf f} = \rho_{\sf r} = -25$ dB and (b) $\rho_{\sf f} = \rho_{\sf r}
= -10$ dB.}
\end{center}
\vspace{-5mm}
\end{figure*}

While the noise-free power iteration scheme has been proposed in the MIMO context
in~\cite{dahl}, understanding the performance tradeoff of the noisy case analytically
appears to be a difficult problem, in general. To surmount this difficulty, we
numerically study the performance of the noisy case at a low pre-beamforming ${\sf SNR}$
on the order of $-25$ to $-10$ dB. We consider the case where $N_{\sf npi} = 256$
samples are used for RSV learning and these samples are partitioned in different
ways\footnote{Note that the $2$ factor in the partition of $N_{\sf npi}$ arises because
power iteration is bi-directional.} as $N_{\sf npi} = 2N_{\sf noi} \times N_{\sf iter}$.
Here, $N_{\sf noi}$ samples are used to improve $\rho_{\sf f}$ by noise averaging and
$N_{\sf iter}$ samples are used for beamformer iteration. In particular, we consider the
following choices for $N_{\sf noi}$ in our study: $\{4, 8, 16, 32, 64 \}$ with $\rho_{\sf f}
= \rho_{\sf r} = \{-25, \hsppp -10 \}$ dB and Figs.~\ref{fig2}(a)-(b) plot the
complementary cumulative distribution function (CDF)
of the loss in ${\sf SNR}_{\sf rx}$ for these two scenarios in a $L = 2, N_r = 4$ and
$N_t = 64$ system with averaging over the random choice of ${\bf f}_0$. From these two
plots, we observe that given $N_{\sf npi}$ samples, noise averaging is a task of higher
importance at low pre-beamforming ${\sf SNR}$s than beamformer iteration. Nevertheless,
in spite of the best noise averaging, the noisy power iteration scheme has a poor
performance at low ${\sf SNR}$s (for a large fraction of the users at $-25$ dB and a good
fraction at $-10$ dB) as noise is amplified in the iteration process rather than the
channel's RSV.

The RSV learning scheme also suffers from other problems that make its applicability
in mmW systems difficult. Since each user's RSV has to be learned via a bi-directional
iteration, it is not amenable (in this form) as a common broadcast solution for the
downlink setting. This is particularly disadvantageous and impractical if each MWB has
to initiate a unicast session with a UE that is yet to be discovered. Further, the need
to sample each antenna individually (at both ends) can considerably slow down the iteration
process with RF hardware constraints (e.g., when there are fewer RF chains than antennas).
In addition, this approach requires calibration of the receive-side RF chain relative to
the transmit-side RF chain with respect to phase and amplitude as well as phase coherence
during the iteration. More importantly, this approach critically depends on TDD
reciprocity, which could be complicated in certain deployment scenarios that do not allow
this possibility~\cite{roh}.

An alternate approach given the RSV structure in Theorem~\ref{thm1} is to learn the
dominant directions at the MWB end $\{ \widehat{\phi}_{ {\sf T}, \ell} \}$ and then
combine the beams with appropriate weights $\{ \widehat{\alpha}_{\ell} \}$ to result
in a %non-unit-norm
beamforming vector:
\begin{eqnarray}
{\bf f}_{\sf comb} = \frac{  \sum_{\ell = 1}^L \widehat{\alpha}_{\ell} \hspp
{\sf CPO} ( \widehat{ \phi}_{ {\sf T}, \ell } ) }
{\| \sum_{\ell = 1}^L \widehat{\alpha}_{\ell} \hspp
{\sf CPO} ( \widehat{ \phi}_{ {\sf T}, \ell } )\|_2 }.
\label{eq27}
\end{eqnarray}
The difficulty with this approach is that it suffers from PA inefficiency (not all
the PAs operate at maximal power). Fig.~\ref{fig3}(a) plots the complementary CDF of
the peak-to-average ratio %\footnote{Note that PAR can also be written as
%${\sf PAR} = N_t \cdot \left( \frac{ \| {\bf f}_{\sf comb} \|_{\infty} }
%{ \| {\bf f}_{\sf comb} \|_2  }  \right)^2$.}
(PAR) of ${\bf f}_{\sf comb}$, defined as,
\begin{eqnarray}
{\sf PAR} \triangleq  %\left( \frac{ \| {\bf f}_{\sf comb} \|_{\infty} }
%{ \| {\bf f}_{\sf comb} \|_2  }  \right)^2
\frac{ \max_i | {\bf f}_{\sf comb}(i) |^2}{ 1/ N_t} =
N_t \cdot \max_i | {\bf f}_{\sf comb}(i)|^2
\end{eqnarray}
corresponding to beam combining along two randomly chosen, but known directions
with random weights. Fig.~\ref{fig3}(a) shows that a median PAR loss of over $2$ dB
is seen for $N_t \geq 8$ suggesting that the RSV gain relative to directional
beamforming of less than a dB (see Fig.~\ref{fig1}(a)) is significantly outweighed
by the PA inefficiency. In other words, the ${\sf SNR}_{\sf rx}$ loss with just
selecting the dominant direction at the MWB and the UE ends is far less than the PA
backoff due to combining multiple directions at the MWB or the UE.

\begin{figure*}[htb!]
\begin{center}
\begin{tabular}{cc}
\includegraphics[height=2.7in,width=3.15in] {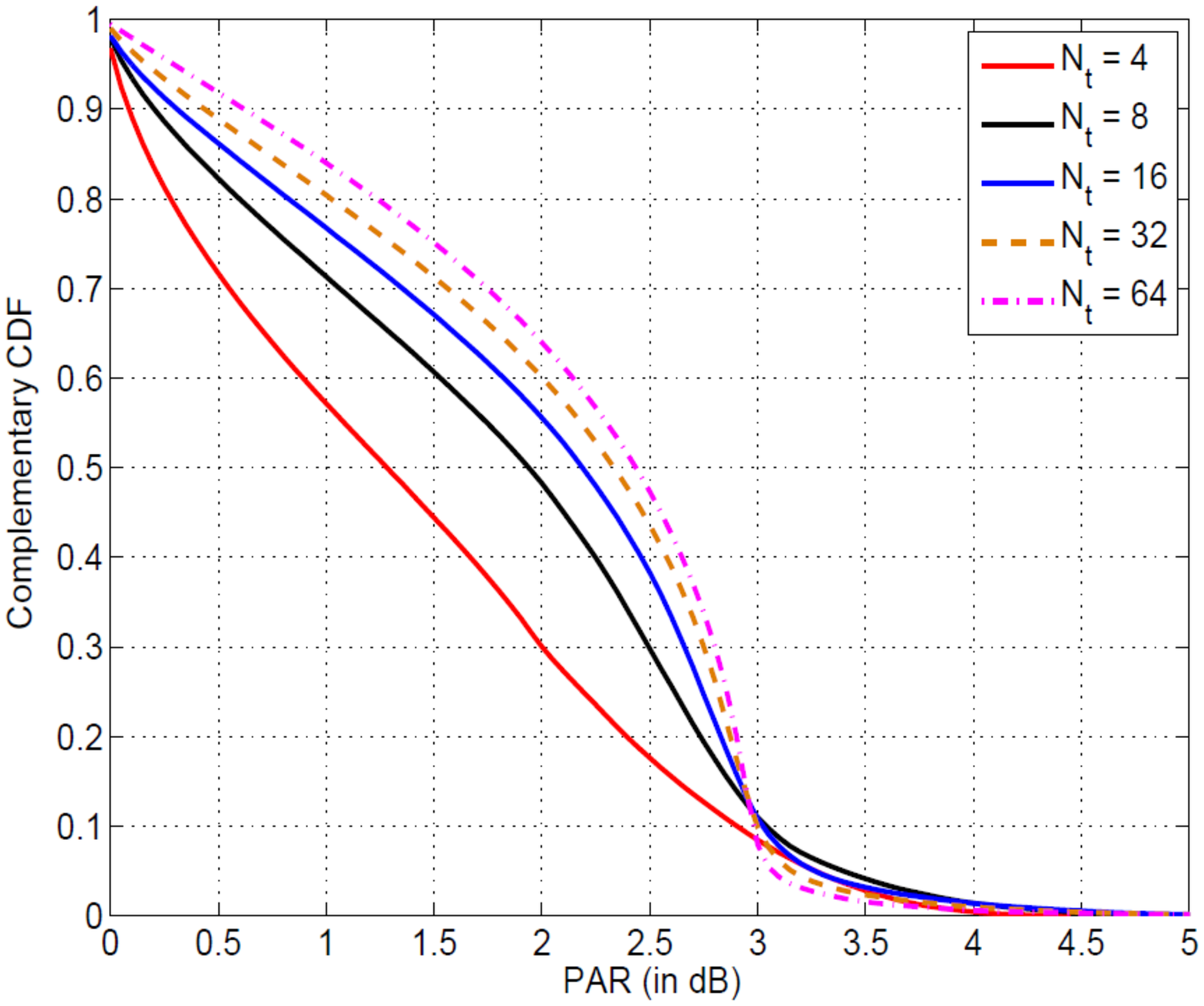}
&
\includegraphics[height=2.7in,width=3.15in] {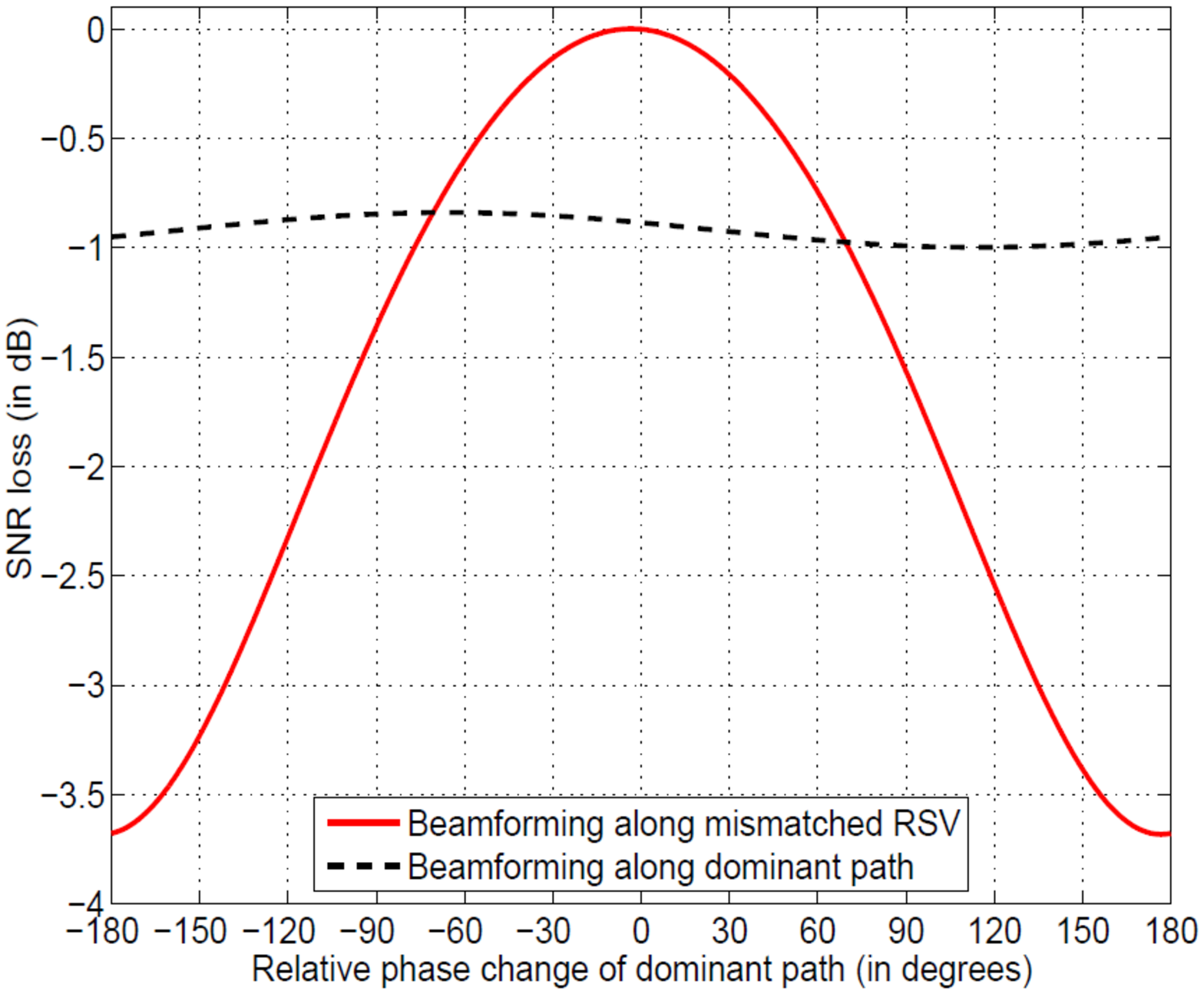}
\\ {\hspace{0.35in}} (a) & {\hspace{0.12in}} (b)
\end{tabular}
\caption{\label{fig3}
(a) Complementary CDF of PAR of the combined beamforming vector needed to mimic the
RSV structure. (b) Typical performance loss with mismatched RSV and mismatched
dominant directional beamforming schemes relative to optimal beamforming in the
perturbed case.}
\end{center}
\vspace{-5mm}
\end{figure*}

More generally, any RSV learning scheme is bound to be extremely sensitive\footnote{Note
that the higher sensitivity of the eigenvectors of a MIMO channel matrix (relative to
the eigenvalues) to small perturbations in the channel entries has been
well-understood~\cite{vasanth_isit05,vasanth_jsac}. See~\cite[Sec.\ 7.2]{golub} for
theoretical details.} to relative phase changes across paths. For example, Fig.~\ref{fig3}(b)
plots the typical behavior of loss in ${\sf SNR}_{\sf rx}$ with the mismatched reuse of
the optimal beamformer and the dominant directional beamformer (both from the unperturbed
case) relative to the optimal beamformer in the perturbed case as the phase of the dominant
path in a $L = 2, N_r = 4, N_t = 64$ system changes. In this example, the two paths are
such that $\phi_{ {\sf R},1} = 108.57^{\sf o}$, $\phi_{ {\sf T},1} = 83.74^{\sf o}$,
$\alpha_1 = 2.61$, $\phi_{ {\sf R},2} = 92.74^{\sf o}$, $\phi_{ {\sf T},2} = 94.26^{\sf o}$,
$\alpha_2 = 1.79$. We note that the RSV scheme takes a steep fall in performance as the
phase changes, whereas the directional scheme remains approximately stable in performance.
It is important to note that a $360^{\sf o}$ change in phase corresponds to a change in path
length of $\lambda$ (a small distance at higher carrier frequencies and hence an increasingly
likely possibility). Such a sensitivity for any RSV reconstruction scheme to phase changes
renders this approach's utility in the mmW context questionable.

\section{Directional Beamforming and Direction Learning}
\label{sec4}

Instead of the RSV solution, we now consider the performance loss with a
low-complexity strategy that beamforms along the dominant direction at the MWB and
the UE. From the numerical study in Fig.~\ref{fig1}(a), we see that the dominant
directional beamforming scheme suffers only a minimal loss relative to even the best
scheme from ${\cal F}_2^{N_t}$ and ${\cal F}_2^{N_r}$ (a median loss of a fraction
of a dB and less than a dB even at the $90$-th percentile level). Further,
Fig.~\ref{fig1}(b) plots the complementary CDF of the loss in ${\sf SNR}_{\sf rx}$
between the optimal scheme in~(\ref{eq_2}) and the dominant directional beamforming
scheme with different choices of $L$: $L = 2, 3, 4, 5, 10$ or $20$. From this study,
we note that directional beamforming results in less than a dB loss for over $50\%$
of the users for even up to $L = 5$ clusters. Further, directional beamforming results
in no more than $2.5$ dB loss for even up to $90\%$ of the users. Thus, this study
suggests that learning the directions (AoAs/AoDs) along which the UE and MWB should
beamform is a useful strategy for initial UE discovery.

\subsection{Learning Dominant Directions via Subspace Methods}
\label{sec4_a}

AoA/AoD learning with multiple antenna arrays has a long and illustrious history
in the signal/array processing literature~\cite{krim}. In the simplest case of
estimating a single unknown source (signal direction) at the UE end with system
equation:
\begin{eqnarray}
{\bf y} = \alpha_1 {\bf u}( \phi_1) + {\bf n}
%\nonumber
\label{eq28}
\end{eqnarray}
where $\alpha_1$ is known, ${\bf u}(\cdot)$ denotes the array steering vector
and ${\bf n} \sim {\cal CN}( {\bf 0}, {\bf I})$, it can be seen that the density
function $\log \left( f( {\bf y} | \alpha_1, \phi_1) \right)$ can be written as ${\sf C} -
( {\bf y} - \alpha_1 {\bf u}(\phi_1) )^H ( {\bf y} - \alpha_1 {\bf u}(\phi_1) )$
for an appropriately defined constant ${\sf C}$. Thus, the maximum likelihood (ML) solution
($\widehat{\phi}_1$) that maximizes the density can be seen to be $\widehat{\phi}_1 =
\arg \max_{\phi} | {\bf u}(\phi)^H {\bf y} |^2$. Rephrasing, correlation of the
received vector ${\bf y}$ for the best signal strength results in the ML solution
for the problem of signal coming from one unknown direction.

In general, if there are multiple ($K$) sources with system equation
\begin{eqnarray}
{\bf y} = \sum_{k = 1}^K \alpha_k {\bf u}(\phi_k) + {\bf n},
%\nonumber
\label{eq29}
\end{eqnarray}
the density function of ${\bf y}$ is non-convex in the parameters resulting in a
numerical multi-dimensional search in the parameter space. In this context, the main
premise behind the %MUltiple SIgnal Classification (MUSIC)
MUSIC algorithm~\cite{schmidt} is
that the signal subspace is $K$-dimensional and is orthogonal to the noise subspace.
Furthermore, the $K$ largest eigenvalues of the estimated received covariance matrix,
$\widehat{\bf R}_{\bf y}$, correspond to the signal subspace and the other eigenvalues
to the noise subspace (provided the covariance matrix estimate is reliable). The MUSIC
algorithm then estimates the signal directions by finding the ($K$) peaks of the
pseudospectrum\footnote{In general, the choice of $K$ in~(\ref{eq_17}) has to be
estimated via an information theoretic criterion as in~\cite{wax_kailath} or via
minimum description length criteria such as those due to Rissanen or Schwartz.},
defined as,
\begin{eqnarray}
P_{\sf MUSIC}(\phi) \triangleq \frac{1}{ \sum_{n = K+1}^{N_r} | {\bf u}(\phi)^H
\widehat{\bf q}_n|^2 }
\label{eq_17}
\end{eqnarray}
where $\{ \widehat{\bf q}_{K+1}, \cdots, \widehat{\bf q}_{N_r} \}$ denote the
eigenvectors of the noise subspace of $\widehat{\bf R}_{\bf y}$. The principal advantage
of the MUSIC algorithm is that the signal maximization task has been recasted as a noise
minimization task, a one-dimensional line search problem albeit at the cost of computing
the eigenvectors of $\widehat{\bf R}_{\bf y}$. Nevertheless, since $\{ \widehat{\bf q}_1,
\cdots, \widehat{\bf q}_{N_r}\}$ can be chosen to form a unitary basis, it is seen that
MUSIC attempts to maximize $\sum_{n = 1}^K | {\bf u}(\phi)^H \widehat{\bf q}_n|^2$ (or in
other words, it assigns equal weights to all the components of the signal subspace and is
hence not ML-optimal).

We now apply the MUSIC algorithm to direction learning at the MWB and UE by a bi-directional
approach where the MWB learns the AoD by estimating the uplink covariance matrix (where the
UE trains the MWB), and the UE learns the AoA by estimating the downlink covariance matrix
(where the MWB trains the UE). We consider the case where $N_{\sf music} = 256$ samples
are used for direction learning. Since the MWB is equipped with more antennas than the UE,
we partition $N_{\sf music}$ into $N_{\sf up} = 192$ samples for uplink (AoD) training and
$N_{\sf down} = 64$ samples for downlink (AoA) training. As before, we partition $N_{\sf up}$
in different ways as $N_{\sf up} = N_{\sf up,cov} \times N_{\sf up,noi}$ where $N_{\sf up,noi}$
samples are used for link margin improvement and $N_{\sf up,cov}$ samples are used for uplink
covariance matrix estimation. In particular, we study the following choices here:
$N_{\sf up,cov} = \{12, 24, 32, 48, 64, 96 \}$. $N_{\sf down} = 64$ is partitioned as
$N_{\sf down,noi} = N_{\sf down,cov} = 8$. Figs.~\ref{fig4}(a) and (b) plot the complementary
CDF of ${\sf SNR}_{\sf rx}$ with such a bi-directional MUSIC algorithm for $\rho_{\sf f}
= \rho_{\sf r} = -25$ dB and $-10$ dB, respectively. In general, $\rho_{\sf f} > \rho_{\sf r}$
and Fig.~\ref{fig4} serves as a more optimistic characterization of the MUSIC scheme for mmW
systems.

\begin{figure*}[htb!]
\begin{center}
\begin{tabular}{cc}
\includegraphics[height=2.7in,width=3.15in] {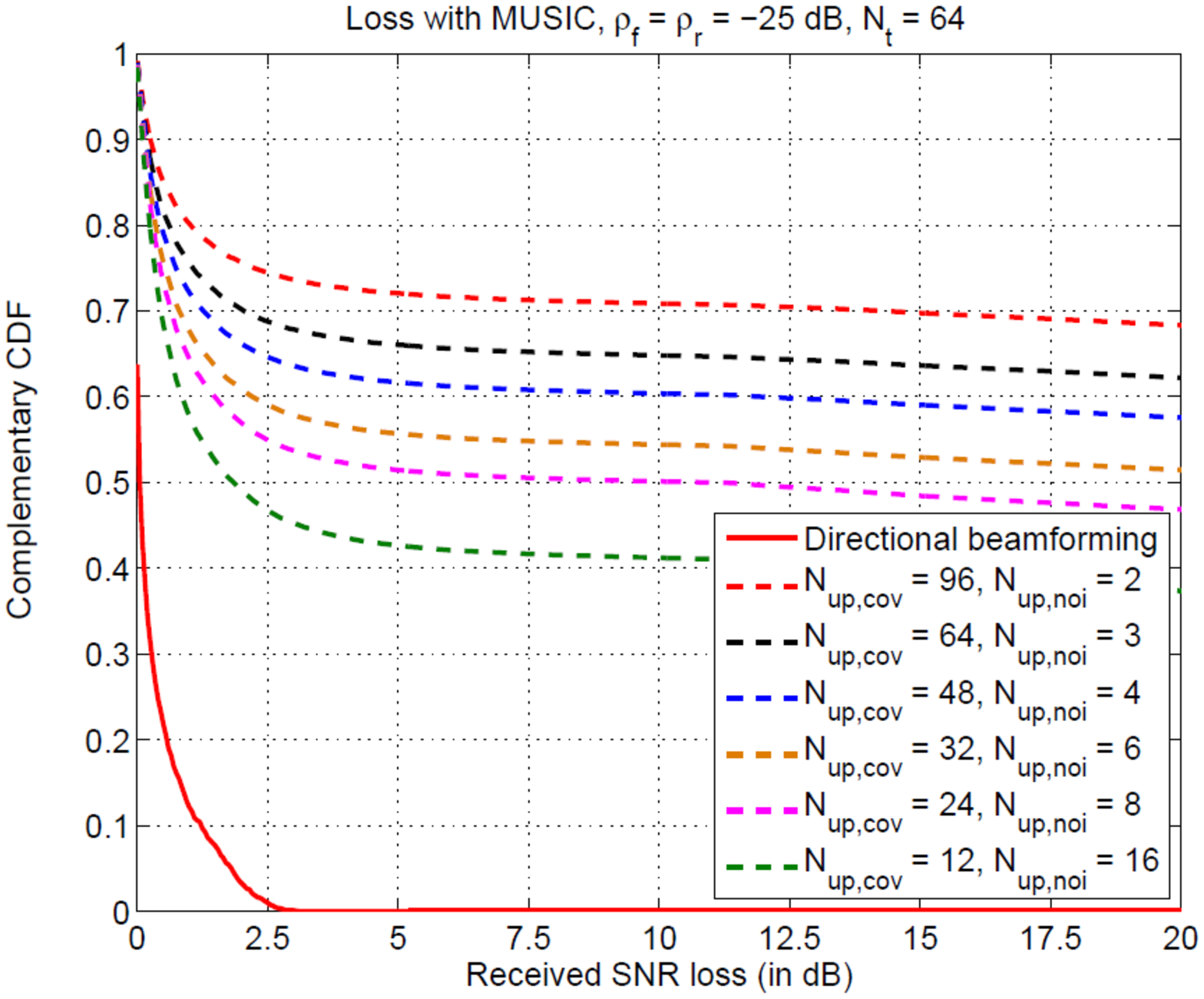}
&
\includegraphics[height=2.7in,width=3.15in] {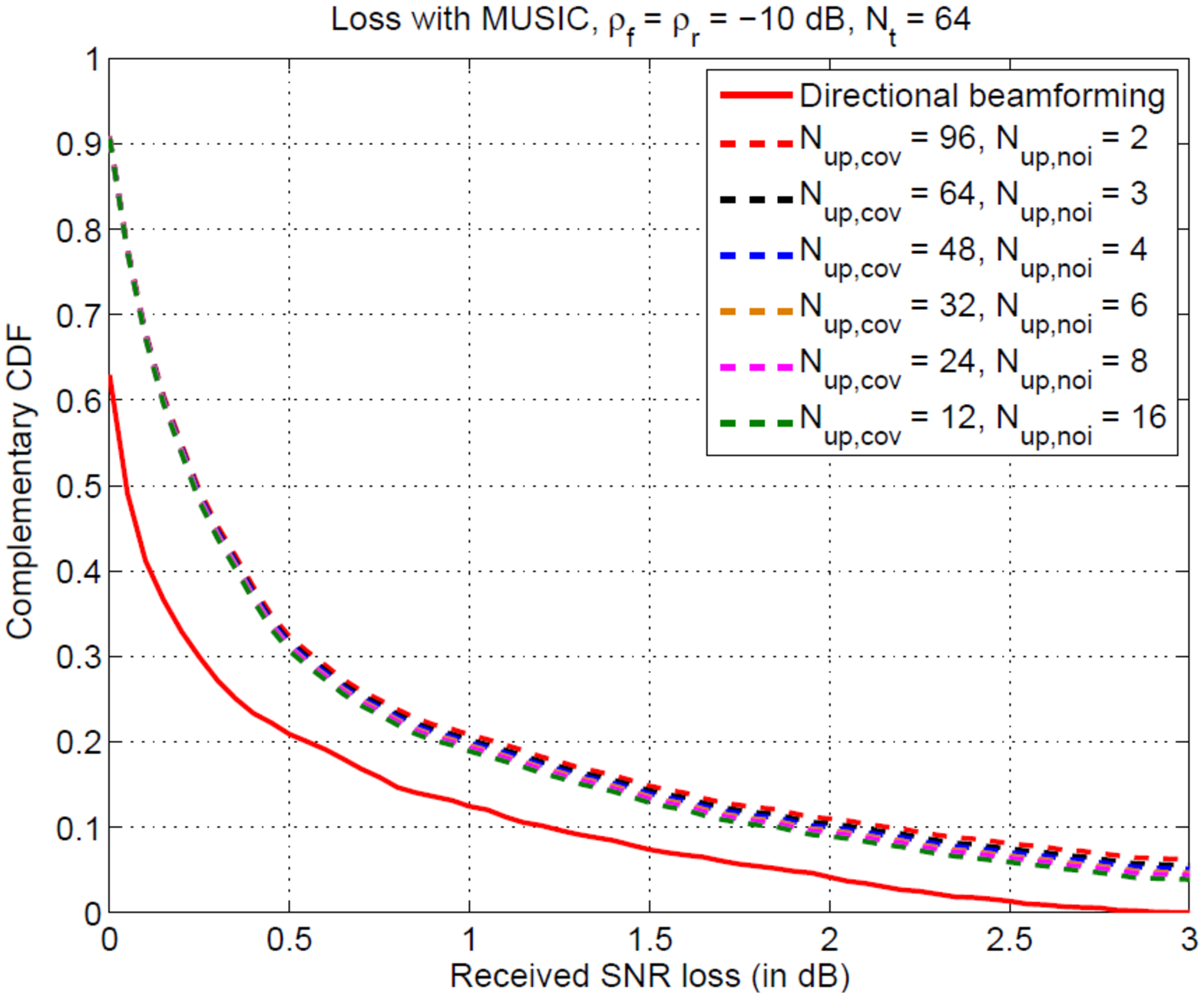}
\\ {\hspace{0.35in}} (a) & {\hspace{0.12in}} (b)
\end{tabular}
\caption{\label{fig4}
Complementary CDF of loss in ${\sf SNR}_{\sf rx}$ between the optimal
beamforming scheme in ${\cal F}_2^{N_t}$ in~(\ref{eq_2}) and MUSIC with
(a) $\rho_{\sf f} = \rho_{\sf r} = -25$ dB and (b) $\rho_{\sf f} =
\rho_{\sf r} = -10$ dB.}
\end{center}
\vspace{-5mm}
\end{figure*}

From Fig.~\ref{fig4}, we note that the performance is rather poor at low link margins, but
significantly better as the link margin improves. An important reason for the poor performance
of the MUSIC approach is that consistent covariance matrix estimation becomes a difficult
exercise with very few samples, especially as the antenna dimensions increase at the MWB end.
Furthermore, as with the noisy power iteration scheme, MUSIC also requires a non-broadcast
system design. It also suffers from a high computational complexity (dominated by the
eigen-decomposition of an $N_t \times N_t$ matrix in uplink training). In general, the
computational complexity of MUSIC can be traded off by constraining the antenna array
structure in various ways. Nevertheless, we expect the computational complexity of other
such constrained AoA/AoD learning techniques such as Estimation of Signal Parameters via
Rotational Invariance Techniques (ESPRIT) algorithm~\cite{roy_kailath}, Space-Alternating
Generalized Expectation maximization (SAGE) algorithm~\cite{fessler,fleury}, higher-order
singular value decomposition, RIMAX~\cite{richter}, and compressive sensing techniques that
employ nuclear norm optimization~\cite{ramasamy,schniter,recht} to be of similar nature as the MUSIC algorithm.
All these reasons suggest that while the MUSIC algorithm may be useful for beam refinement
after the UE has been discovered, its utility in UE discovery is questionable.

%i{\vspace{-15mm}}
\begin{figure*}[htb!]
\begin{center}
\includegraphics[height=3.0in,width=4.5in]{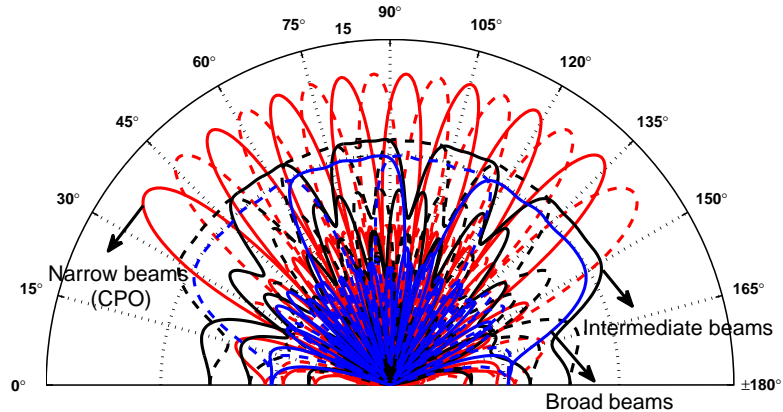}
\caption{\label{fig4c}
Main idea of beam broadening illustrated with the overlaid beam patterns of beamforming
vectors from three (narrow/${\sf CPO}$ beams, intermediate beams and broad beams) codebooks.}
\end{center}
\vspace{-5mm}
\end{figure*}

\subsection{Beam Broadening for Initial UE Discovery}
\label{sec_4b}
Let $\Omega$ denote the beamspace transformation at the MWB side, $\Omega = k d_{\sf T}
\cos(\phi_{\sf T}) = \pi \cos( \phi_{\sf T})$, corresponding to an inter-antenna spacing of
$\lambda/2$. Towards the goal of alternate direction learning strategies, we consider
the problem of understanding the tradeoff in the design of beamforming vectors that cover a
beamspace area of $\overline{\Omega}$ with as few beamforming vectors as possible without
sacrificing the worst-case beamforming gain in the coverage area~\cite{hur,rajagopal}.

The basic idea of beam broadening is illustrated in Fig.~\ref{fig4c} where three
different codebooks of beamforming vectors are used to cover a coverage area of
$120^{\sf o}$ (from $30^{\sf o}$ to $150^{\sf o}$). The first codebook (illustrated
in red) consists of narrow ${\sf CPO}$ beams (pointing at optimally chosen directions
over the coverage area) which leads to a peak beamforming gain of $10 \log_{10}(N_t)$
dB as well as a reasonably high worst-case beamforming gain over the coverage area,
although at the cost of a high UE discovery latency corresponding to a beam sweep over
$16$ directions/beams. The second codebook (illustrated in black) consists of intermediate
beams which leads to a lower peak beamforming gain (as well as a worst-case gain) over
the coverage area, but the UE discovery period is shortened as it now consists of a beam
sweep over $8$ directions/beams. The third codebook (illustrated in blue) consists of
broad beams which leads to a more reduced peak beamforming gain, but the UE discovery
period is a sweep over only $4$ directions/beams. Either codebook could be useful
for initial UE discovery depending on the link margin of the UE's involved. For example,
a UE geographically close to the MWB and suffering minimal path loss can accommodate a
broad beam codebook and be quickly discovered, whereas a UE at the cell-edge or suffering
from huge blocking losses may need the narrow ${\sf CPO}$ beam codebook to even close the
link with the MWB. The intermediate codebook trades off these two properties.

We now recast the above idea in the form of a well-posed optimization problem. For this,
given a beamspace coverage area of $\Omega_0$ for a single beam (centered around
$\Omega_{\sf c} = \pi/2$, without loss in generality), %for each beamforming vector ${\bf f}$,
we seek the design of:
\begin{eqnarray}
{\bf f}_{ \Omega_0} \triangleq \arg \max \limits_{ {\bf f} \hsppp \in
\hsppp {\cal F}_{\infty}^{N_t} } \min \limits_{ \Omega \hsppp \in \hsppp
\left[ \Omega_{\sf c} - \frac{ \Omega_0}{2} , \hsppp \Omega_{\sf c} +
\frac{ \Omega_0}{2} \right] } | {\bf F}(\Omega)|^2
\label{eq31}
\end{eqnarray}
where ${\bf F}(\Omega) = \sum_{n = 0}^{N_t - 1} {\bf f}(n) e^{-j \Omega n}
= {\bf a}(\Omega)^H {\bf f}$ with ${\bf a}(\Omega) = \left[ 1, \hsppp e^{j \Omega},
\cdots, e^{ j(N_t-1) \Omega} \right]^T$.
With ${\bf f}_{\Omega_0}$ as template, the number of beamforming vectors needed to
cover $\overline{\Omega}$ (say, a $120^{\sf o}$ field-of-view as in Fig.~\ref{fig4c})
is ${\sf No. \hsppp beams} = \frac{ \overline{\Omega}} { \Omega_0}$.

We start with an upper bound on the tradeoff between ${\sf No. \hsppp beams}$ and
the worst-case beamforming gain over $\overline{\Omega}$. From the Parseval identity,
we have the following trivial relationship for any ${\bf f}$:
\begin{eqnarray}
\frac{1}{2\pi} \int_{-\pi}^{\pi} | {\bf F}(\Omega)|^2 d \Omega =
\sum_{n = 0}^{N_t - 1} |{\bf f}(n)|^2 \leq 1.
\label{eq_19}
\end{eqnarray}
%For a hypothetical\footnote{This beamforming vector is realizable only as $N_t \rightarrow
%\infty$.} beamforming vector with $| {\bf F}(\Omega)| = \sqrt{P} \cdot
%\chi \left( \Omega \in \left[ \Omega_{\sf c} - \frac{ \Omega_0 }{ 2} , \hsppp
%\Omega_{\sf c} + \frac{ \Omega_0 }{2} \right] \right)$, from~(\ref{eq_19}) we have
%\begin{eqnarray}
%\frac{1}{2 \pi} \cdot P \Omega_0 \leq 1 \Longrightarrow P \leq \frac{ 2 \pi}{ \Omega_0}.
%\label{eq33}
%\end{eqnarray}
If $\min \limits_{ \Omega \hsppp \in \hsppp \left[ \Omega_{\sf c} - \frac{ \Omega_0}{2} ,
\hsppp \Omega_{\sf c} + \frac{ \Omega_0}{2} \right] } | {\bf F}(\Omega)|^2 = P$, we have
\begin{eqnarray}
\frac{1}{2\pi} \int_{-\pi}^{\pi} | {\bf F}(\Omega)|^2 d \Omega
\geq \frac{1}{2\pi} \int_{ \Omega \hsppp \in \hsppp \left[ \Omega_{\sf c} - \frac{ \Omega_0}{2} ,
\hsppp \Omega_{\sf c} + \frac{ \Omega_0}{2} \right] } | {\bf F}(\Omega)|^2 d \Omega
\geq P \cdot \frac{ \Omega_0}{ 2 \pi}
\Longrightarrow P \leq \frac{ 2 \pi}{ \Omega_0}.
\label{eq33}
\end{eqnarray}
Further, $P$ is also constrained as $P \leq N_t$ since the maximal beamforming gain cannot
exceed $N_t$ in any direction. Thus, the worst-case beamforming gain over
this area (in dB) is upper bounded as
\begin{eqnarray}
{\sf BF \hspp Gain} & \triangleq & %10 \log_{10} \left(
%\min \limits_{ \Omega \in \left[ \Omega_{\sf c} - \frac{ \Omega_0 }{ 2} , \hsppp
%\Omega_{\sf c} + \frac{ \Omega_0 }{2} \right] } |{\bf F}(\Omega)|^2
%\right)
%\label{eq34}
%\\
%& = &
10 \log_{10}\left( P \right) \leq 10 \log_{10} \left( \min
\left(N_t, \hsppp \frac{ 2 \pi} { \overline{\Omega} } \cdot {\sf No. \hsppp beams}
\right) \right).
\label{eq35}
\end{eqnarray}
We now provide an alternate non-trivial approach based on computation of eigenvalues of
certain appropriately-defined matrices for a better upper bound of this tradeoff.

\begin{figure*}[htb!]
\begin{center}
\begin{tabular}{cc}
\includegraphics[height=2.7in,width=3.15in] {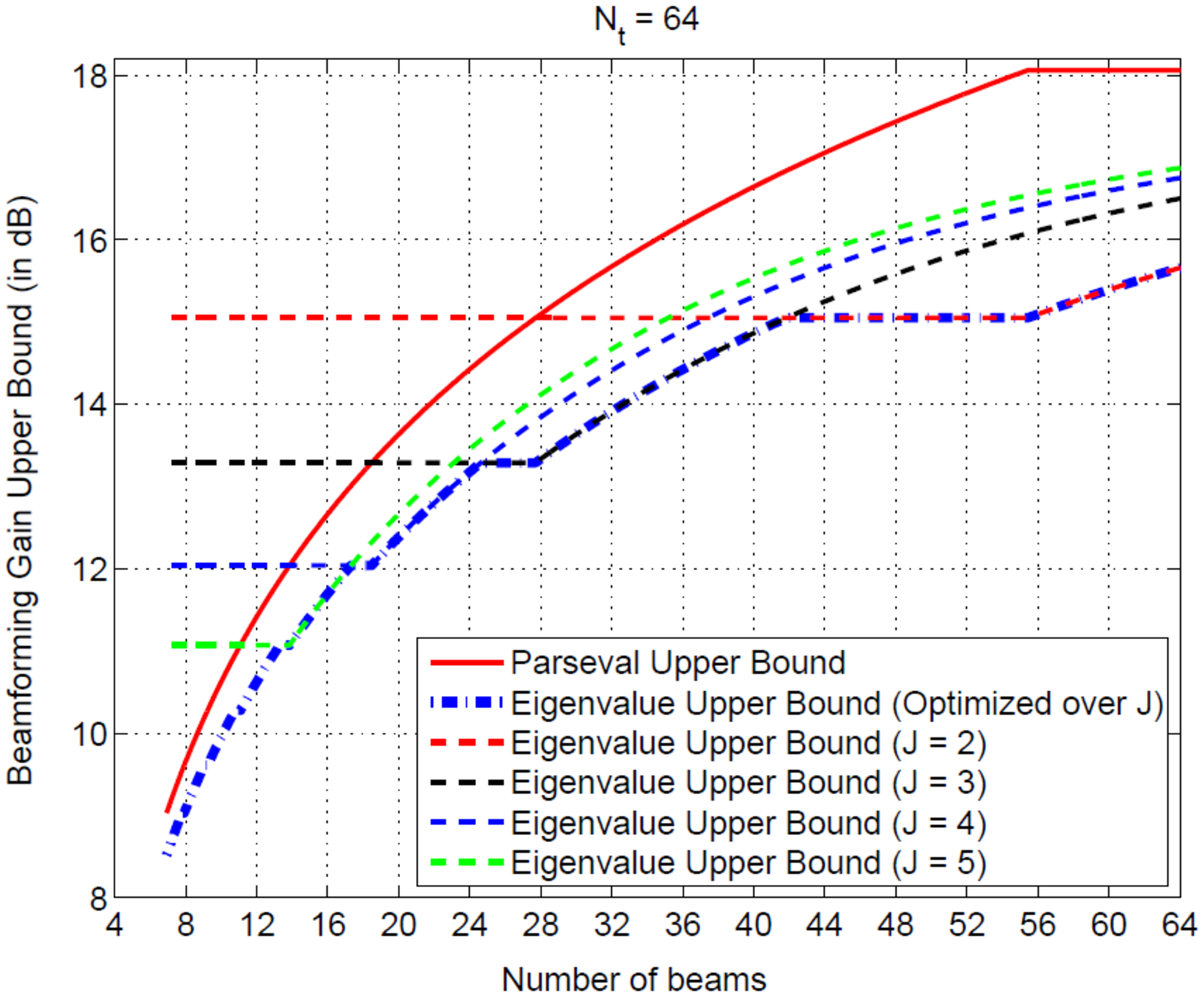}
&
\includegraphics[height=2.7in,width=3.15in] {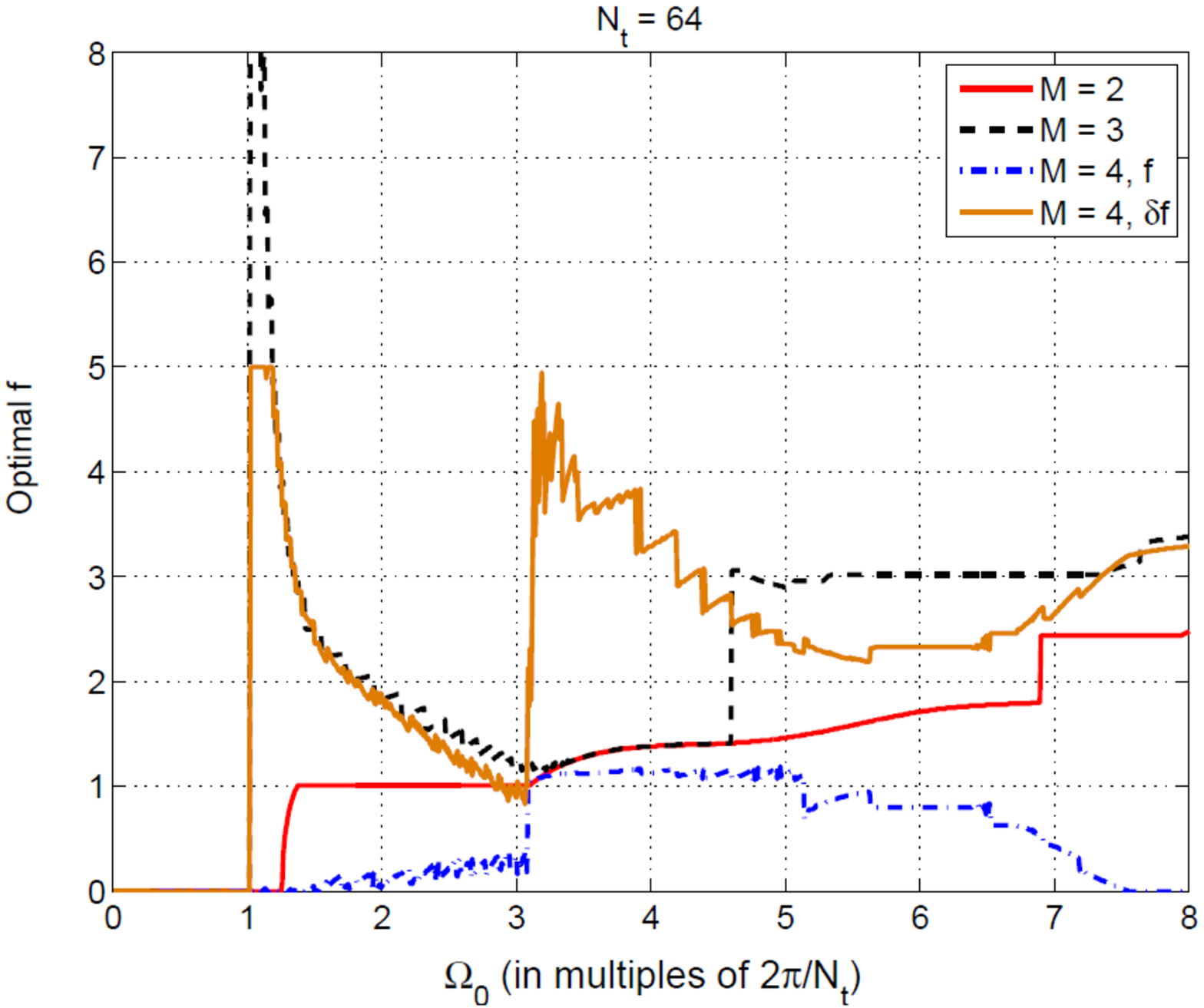}
\\ {\hspace{0.35in}} (a) & {\hspace{0.12in}} (b)
\\
\includegraphics[height=2.7in,width=3.15in] {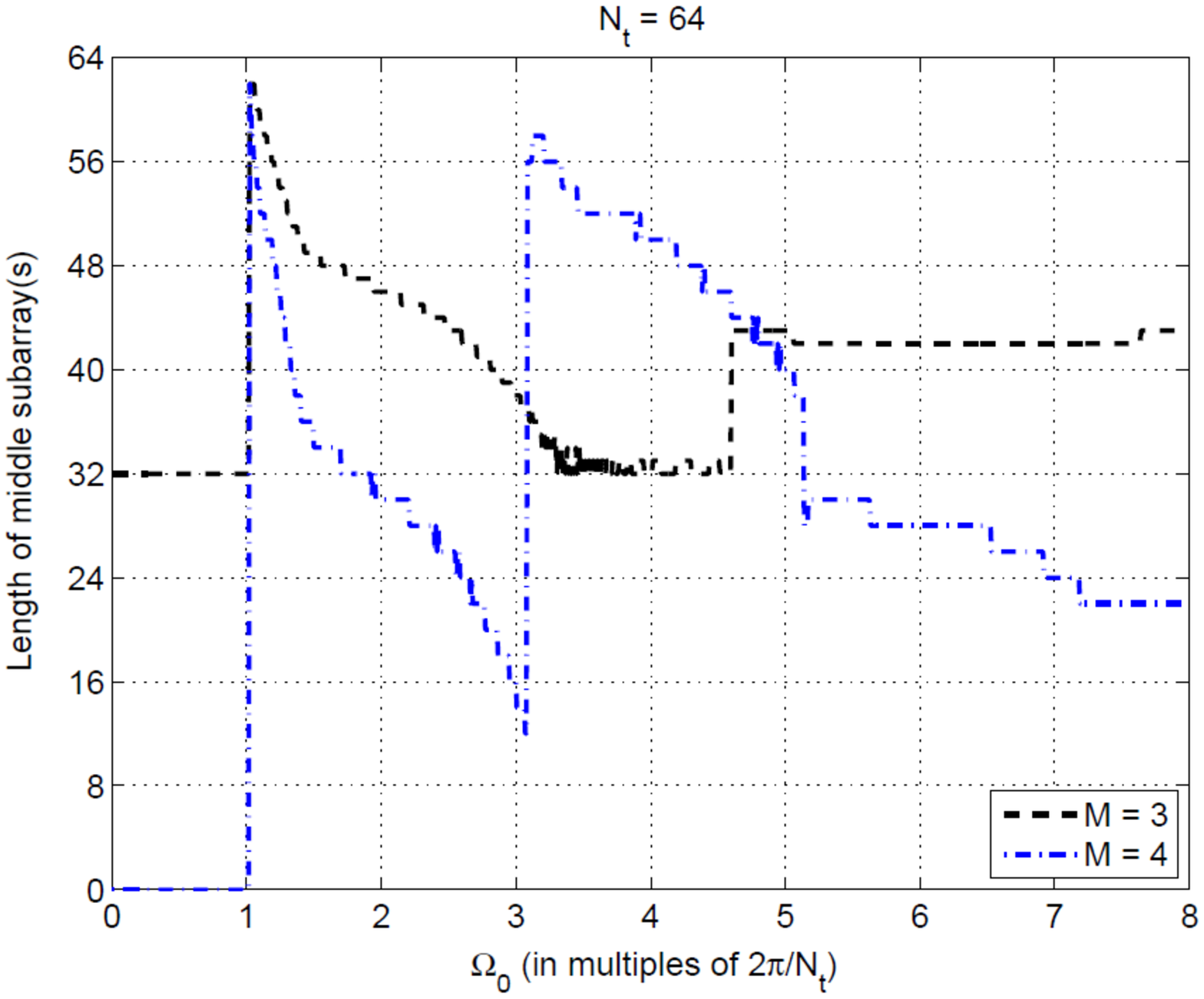}
&
\includegraphics[height=2.7in,width=3.15in] {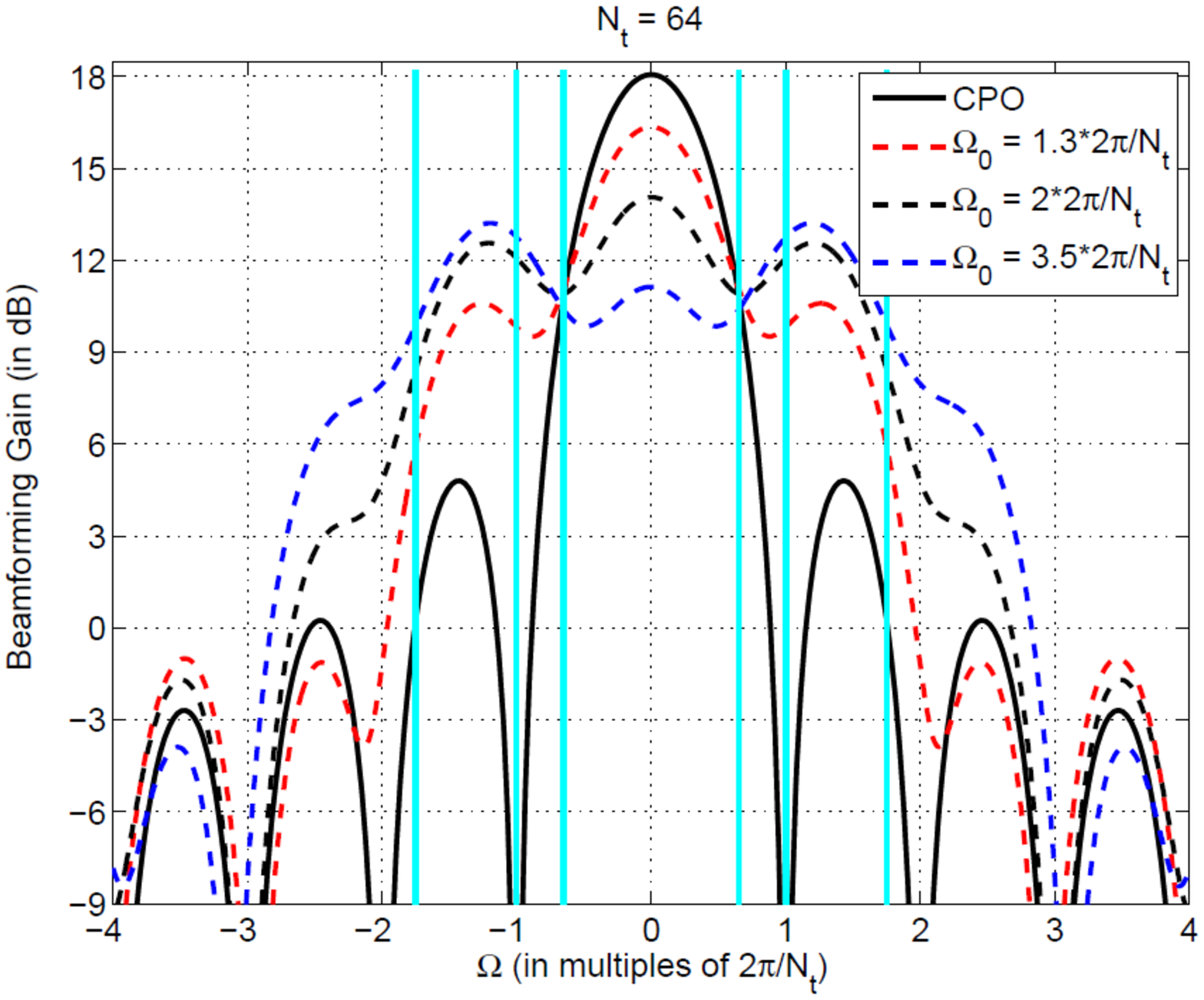}
\\ {\hspace{0.35in}} (c) & {\hspace{0.12in}} (d)
\end{tabular}
\caption{\label{fig5}
(a) Upper bound to the beamforming gain vs.\ number of beams tradeoff for
$N_t = 64$. (b) Optimal ${\sf f}$ and (c) length of middle subarray(s)
as a function of $\Omega_0$ for different choices of $M$. (d) Beam patterns
of broadened beams for certain choices of $\Omega_0$ in the $M = 2$ case.
}
\end{center}
\vspace{-5mm}
\end{figure*}

\begin{thm}
\label{thm3}
Let $\Omega_j, \hspp j = 1, \cdots , J$ be a set of sampling frequencies over the
beamspace area of $\Omega_0$ spanned by the beamforming vector ${\bf f}$. The worst-case
beamforming gain with ${\bf f}$ is upper bounded by the solution to the following
optimization:
\begin{eqnarray}
{\sf BF \hspp Gain} \leq 10 \log_{10} \left(
N_t, \hspp \min \limits_{ J, \hsppp \{ \Omega_j \} }
\frac{ \lambda_{\sf max} \left(
\sum_{j = 1}^J {\bf a}(\Omega_j) {\bf a}(\Omega_j)^H \right) }{J}
\right)
\label{eq_23}
\end{eqnarray}
where $\Omega_j \in \left[ \Omega_{\sf c} - \frac{ \Omega_0 }{ 2} , \hsppp
\Omega_{\sf c} + \frac{ \Omega_0 }{2} \right]$ with $\Omega_0 = \frac{ \overline{\Omega}}
{ {\sf No. \hsppp beams} }$ and $\lambda_{\sf max}(\bullet)$ stands for the
largest eigenvalue of the underlying positive semi-definite matrix.
\end{thm}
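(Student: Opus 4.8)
The plan is to exploit the elementary fact that the minimum of $|{\bf F}(\Omega)|^2$ over the continuum $\left[ \Omega_{\sf c} - \frac{\Omega_0}{2}, \hsppp \Omega_{\sf c} + \frac{\Omega_0}{2}\right]$ is dominated by the \emph{average} of $|{\bf F}(\Omega_j)|^2$ over any finite collection of sample points $\Omega_1, \dots, \Omega_J$ lying in that interval, and then to recognize this average as a quadratic form in ${\bf f}$ whose value is controlled by the largest eigenvalue of the associated (positive semi-definite) aggregate matrix. First I would write, for any admissible sampling set $\{\Omega_j\}_{j=1}^J$,
\begin{eqnarray}
\min_{\Omega} |{\bf F}(\Omega)|^2 & \le & \frac{1}{J} \sum_{j=1}^J |{\bf F}(\Omega_j)|^2 = \frac{1}{J} \, {\bf f}^H \Bigl( \sum_{j=1}^J {\bf a}(\Omega_j) {\bf a}(\Omega_j)^H \Bigr) {\bf f},
\nonumber
\end{eqnarray}
using ${\bf F}(\Omega) = {\bf a}(\Omega)^H {\bf f}$ so that $|{\bf F}(\Omega_j)|^2 = {\bf f}^H {\bf a}(\Omega_j){\bf a}(\Omega_j)^H {\bf f}$. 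Denoting the matrix in parentheses by ${\bf M} = {\bf M}(J, \{\Omega_j\})$, the Rayleigh-quotient bound gives ${\bf f}^H {\bf M} {\bf f} \le \lambda_{\sf max}({\bf M}) \cdot \| {\bf f} \|_2^2$.

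Next I would invoke the membership ${\bf f} \in {\cal F}_{\infty}^{N_t}$ to conclude $\| {\bf f} \|_2^2 = \sum_{i=1}^{N_t} |{\bf f}(i)|^2 \le N_t \cdot \| {\bf f} \|_\infty^2 \le N_t \cdot \frac{1}{N_t} = 1$, so that $\min_\Omega |{\bf F}(\Omega)|^2 \le \lambda_{\sf max}({\bf M})/J$. Since the left-hand side does not depend on the sampling set while the right-hand side does, one may minimize over $J$ and $\{\Omega_j\} \subset \left[ \Omega_{\sf c} - \frac{\Omega_0}{2}, \hsppp \Omega_{\sf c} + \frac{\Omega_0}{2}\right]$ to obtain the eigenvalue term inside the outer minimum of~(\ref{eq_23}). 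The competing term $N_t$ is just the trivial ceiling on array gain: by Cauchy--Schwarz, $|{\bf F}(\Omega)|^2 = |{\bf a}(\Omega)^H {\bf f}|^2 \le \| {\bf a}(\Omega) \|_2^2 \, \| {\bf f} \|_2^2 \le N_t$ for every $\Omega$, hence also for the worst-case direction. Taking $10\log_{10}(\cdot)$ of the smaller of the two quantities $N_t$ and $\min_{J,\{\Omega_j\}} \lambda_{\sf max}({\bf M})/J$, and recalling $\Omega_0 = \overline{\Omega}/{\sf No. \hsppp beams}$, yields~(\ref{eq_23}).

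There is no genuine technical obstacle here --- each inequality is one line --- so the substantive content is conceptual: the bound holds for \emph{every} choice of sampling frequencies, hence it is only as tight as the sampling set one selects, and it will be loose unless the $\{\Omega_j\}$ are placed so that the top eigenvector of ${\bf M}$ is (nearly) flat in magnitude, which is the regime in which an equal-gain ${\bf f}$ can come close to attaining it. The one point worth flagging in the write-up is that the $\ell_\infty$ constraint enters only through $\| {\bf f} \|_2 \le 1$; the identical argument therefore also upper bounds the worst-case gain over the larger class ${\cal F}_2^{N_t}$. I would also note that the optimal $J$ need not be large in practice --- a modest number of well-placed samples already produces a bound within a fraction of a dB of the realizable tradeoff, as the curves in Fig.~\ref{fig5}(a) illustrate.
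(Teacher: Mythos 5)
Your argument is correct and follows essentially the same route as the paper's Appendix D: bound the worst-case gain by the average of $|{\bf F}(\Omega_j)|^2$ over the sampling set, identify that average as a quadratic form, apply the Rayleigh-quotient bound together with $\|{\bf f}\|_2 \le 1$, and then minimize over $J$ and $\{\Omega_j\}$ (with the $N_t$ ceiling handled separately, as the paper does just before the theorem). Your explicit justification of $\|{\bf f}\|_2 \le 1$ from the $\ell_\infty$ constraint and of the $N_t$ term via Cauchy--Schwarz makes the write-up slightly more self-contained, but the substance is identical; the paper's additional $J=2$ closed-form analysis is a refinement beyond what the theorem statement requires.
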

\begin{proof}
See Appendix~\ref{appD}.
%\qed
\end{proof}

Fig.~\ref{fig5}(a) numerically optimizes the expression in~(\ref{eq_23}) and
plots the upper bound to the worst-case beamforming gain as a function of the
number of beams to cover a $120^{\sf o}$ field-of-view with $N_t = 64$. For
the eigenvalue-based approach, we plot the upper bound for specific choices of
$J$ with $\{ \Omega_j \}$ optimized, as well as the upper bound based on a joint
optimization over $J$ and $\{ \Omega_j \}$. Note that the horizontal segments in
the joint optimization correspond to the fact that the tradeoff with a larger
number of beams can be no worser than the tradeoff with a smaller number of beams.

In contrast to the upper bound, we now propose specific approaches towards the
goal of beam broadening. For this, we initially consider partitioning of the antenna
array at the MWB side into {\em virtual subarrays} where each virtual subarray is
used to beamform to a certain appropriately-chosen virtual direction. The expectation
from this approach is that the beam patterns from the individual virtual subarrays
combine to enhance the coverage area of the resultant beam with minimal loss in peak
gain due to reduction in the effective aperture of the subarrays.

\begin{figure*}[htb!]
\begin{center}
\begin{tabular}{cc}
\includegraphics[height=2.7in,width=3.15in] {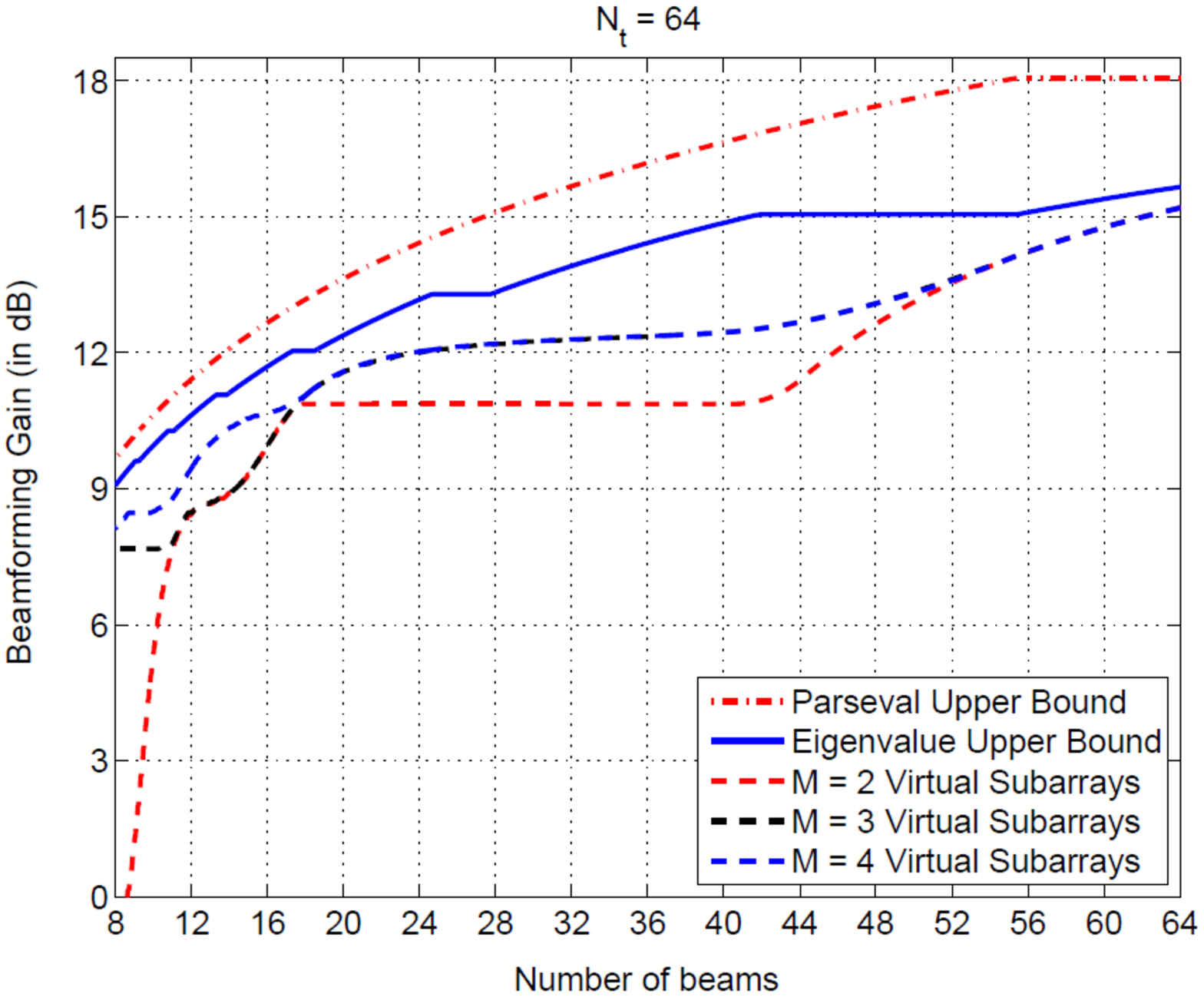}
&
\includegraphics[height=2.7in,width=3.15in] {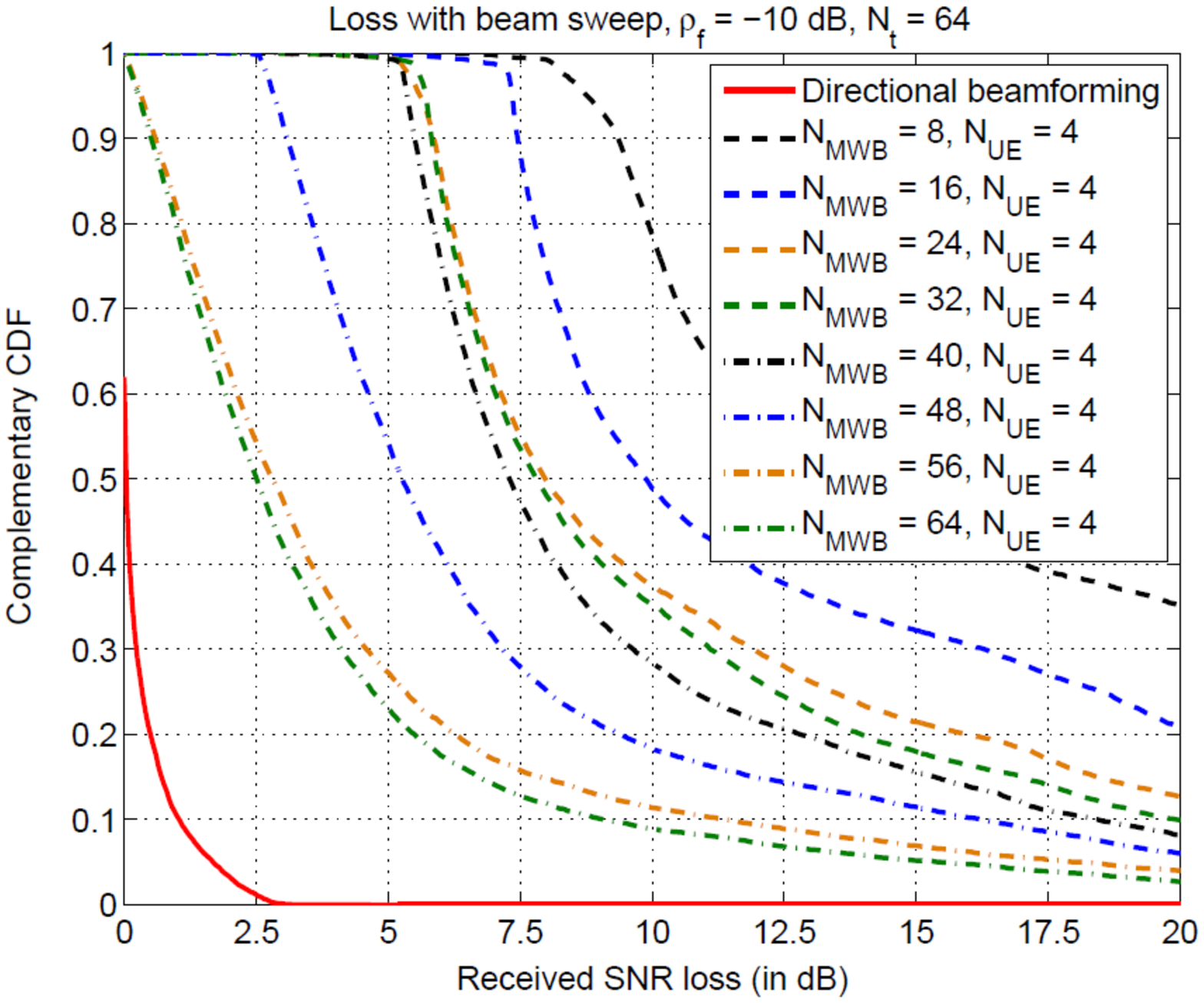}
\\ {\hspace{0.35in}} (a) & {\hspace{0.12in}} (b)
\end{tabular}
\caption{\label{fig6}
(a) Performance tradeoff with different beam broadening approaches for $N_t = 64$.
(b) Complementary CDF of loss in ${\sf SNR}_{\sf rx}$ between the optimal
beamforming scheme in ${\cal F}_2^{N_t}$ in~(\ref{eq_2}) and beam sweep with the
broadened beams as in Fig.~\ref{fig6}(a) at $\rho_{\sf f} = -10$ dB.}
\end{center}
\vspace{-5mm}
\end{figure*}

As a specific example, in the $M = 2$ virtual subarray setting, we propose the
following beamforming vector that orients along
$\pi - \cos^{-1} \left( \frac{2 {\sf f}} {N_t} \right)$ and
$\cos^{-1} \left (  \frac{2 {\sf f} } {N_t} \right)$ with each half of the array.
That is,
\begin{eqnarray}
{\bf f}(n) = \frac{1}{ \sqrt{N_t} } \left\{
\begin{array}{ll}
\exp \left( - \frac{j 2 \pi {\sf f}}{N_t} ( n - \frac{N_t}{2} + \frac{1}{2} ) \right)
& {\sf if} \hspp 0 \leq n \leq \frac{N_t}{2} - 1 \\
\exp \left( \frac{j 2 \pi {\sf f}}{N_t} ( n - \frac{N_t}{2} + \frac{1}{2} ) \right)
& {\sf if} \hspp \frac{N_t}{2} \leq n \leq N_t - 1
\end{array}
\right.
\label{eq_24}
\end{eqnarray}
where ${\bf f}$ is designed to be a broadened beam by optimally choosing ${\sf f}$.
Similarly, in the $M = 3$ and $M = 4$ settings, we propose the following beamforming
vectors with two parameters (${\sf f}$ and $0 \leq L \leq \frac{N_t}{2}$)
and three parameters (${\sf f}$, \hsppp $\delta {\sf f}$ and $0 \leq L \leq \frac{N_t}{2}$),
respectively:
\begin{eqnarray}
{\bf f} & = &
\frac{1}{ \sqrt{N_t} } \left\{
\begin{array}{ll}
\exp \left(- \frac{j 2 \pi {\sf f}}{N_t} ( n - \frac{N_t}{2} + \frac{1}{2} + L) \right)
& {\sf if} \hspp 0 \leq n \leq \frac{N_t}{2}  - L - 1 \\
1 & {\sf if} \hspp \frac{N_t}{2} - L \leq n \leq \frac{N_t}{2} + L - 1 \\
\exp \left( \frac{j 2 \pi {\sf f}}{N_t} ( n - \frac{N_t}{2} + \frac{1}{2} - L) \right)
& {\sf if} \hspp \frac{N_t}{2} + L \leq n \leq N_t - 1,
\end{array}
\right.
\label{eq_25}
\\
%%%%%
%%%%%
{\bf f} & = &
\frac{1}{ \sqrt{N_t} } \left\{
\begin{array}{ll}
\exp \left(- \frac{j 2 \pi ( {\sf f} + \delta {\sf f})  }{N_t}
( n - \frac{N_t}{2} + \frac{1}{2} ) - \frac{j 2 \pi \cdot \delta{\sf f}}
{N_t} \left( L - \frac{1}{2} \right) \right)
& {\sf if} \hspp 0 \leq n \leq \frac{N_t}{2} - L - 1 \\
\exp \left( - \frac{j 2 \pi {\sf f}} {N_t} (n - \frac{N_t}{2} + \frac{1}{2} ) \right)
& {\sf if} \hspp \frac{N_t}{2} - L \leq n \leq \frac{N_t}{2} - 1 \\
\exp \left(  \frac{j 2 \pi {\sf f}} {N_t} (n - \frac{N_t}{2} + \frac{1}{2} ) \right)
& {\sf if} \hspp \frac{N_t}{2} \leq n \leq \frac{N_t}{2} + L - 1 \\
\exp \left( \frac{j 2 \pi ( {\sf f} + \delta {\sf f})  }{N_t}
( n - \frac{N_t}{2} + \frac{1}{2} ) - \frac{j 2 \pi \cdot \delta{\sf f}}
{N_t} \left( L - \frac{1}{2} \right) \right)
& {\sf if} \hspp \frac{N_t}{2} + L \leq n \leq N_t - 1.
\end{array}
\right.
\label{eq_26}
\end{eqnarray}
Note that~(\ref{eq_25}) reduces to the $M = 2$ setting in~(\ref{eq_24}) with $L = 0$
where the beams are pointed at ${\sf f}$, and with $L = \frac{N_t}{2}$ where the beams
are pointed at ${\sf f} = 0$. Similarly,~(\ref{eq_26}) reduces to the $M = 2$ setting
with $L = 0$ and $L = \frac{N_t}{2}$ where the beams are pointed at ${\sf f} + \delta {\sf f}$
and ${\sf f}$, respectively.

Fig.~\ref{fig5}(b) plots the optimal values of ${\sf f}$ (and $\delta {\sf f}$)
designed to maximize $\min |{\bf F}(\Omega)|^2$ as a function of $\Omega_0$ with
$M = 2, 3, 4$ and $N_t = 64$. Fig.~\ref{fig5}(c) plots the length of the middle
subarray in the $M = 3$ case (middle subarrays in the $M = 4$ case) as a function of
$\Omega_0$. From these two plots, we see that for small values of $\Omega_0$,
choosing ${\sf f} = 0$ is optimal, whereas the length of the middle subarray
decreases as $\Omega_0$ increases corresponding to gradual beam orientation away
from ${\sf f} = 0$. Fig.~\ref{fig5}(d) plots the shape of the broadened beams so
optimized for three choices of $\Omega_0$: $\frac{ \Omega_0}{ 2 \pi/N_t} =
\{ 1.3, \hspp 2, \hspp 3.5 \}$. Also, plotted are vertical lines at
$\frac{ \Omega_0}{ 2 \pi/N_t}  = \pm 1.3, \hspp \pm 2, \hspp \pm 3.5$. From this
plot, we see that within their corresponding regimes, each broadened beam maximizes
$\min |{\bf F}(\Omega)|^2$. In addition, Fig.~\ref{fig6}(a) captures the tradeoff
between the number of beamforming vectors and the worst-case beamforming gain with
the $M = 2, 3, 4$ subarray scheme in the $N_t = 64$ case. Clearly, across all
regimes of interest of ${\sf No. \hspp beams}$, the $M = 4$ subarray scheme is
within a couple of dB of the upper bound in terms of beamforming gain illustrating
the utility of the proposed approach.

\subsection{Learning Dominant Directions via Beam Sweep with Broadened Beam Codebooks}
\label{sec_4c}

We use the template broadened beamforming vectors designed in Sec.~\ref{sec_4b}
corresponding to different beam broadening factors (and their shifted versions) to
design a beam sweep codebook for the MWB side. In particular, we consider those beam
broadening factors that lead to $N_{\sf mwb} = \{8, \hspp 16, \hspp 24, \hspp 32, \hspp
40, \hspp 48, \hspp 56,  \hspp 64 \}$ elements in the beam sweep codebook at the MWB.
On the other hand, since $N_r = 4$ at the UE side, a simpler codebook of $N_{\sf ue}
= 4$ beamforming vectors corresponding to an equal partition of the field-of-view is
sufficient for the purpose of beam sweep. With these different codebook choices at
the MWB and UE, we find the best choice of beamforming vectors that maximize
${\sf SNR}_{\sf rx}$ and use them for subsequent beamforming/beam refinement. An
important advantage of the beam sweep approach is that it allows a broadcast solution
(the same codebook can be reused across multiple UEs within the field-of-view).

\begin{figure*}[htb!]
\begin{center}
\begin{tabular}{cc}
\includegraphics[height=2.7in,width=3.15in] {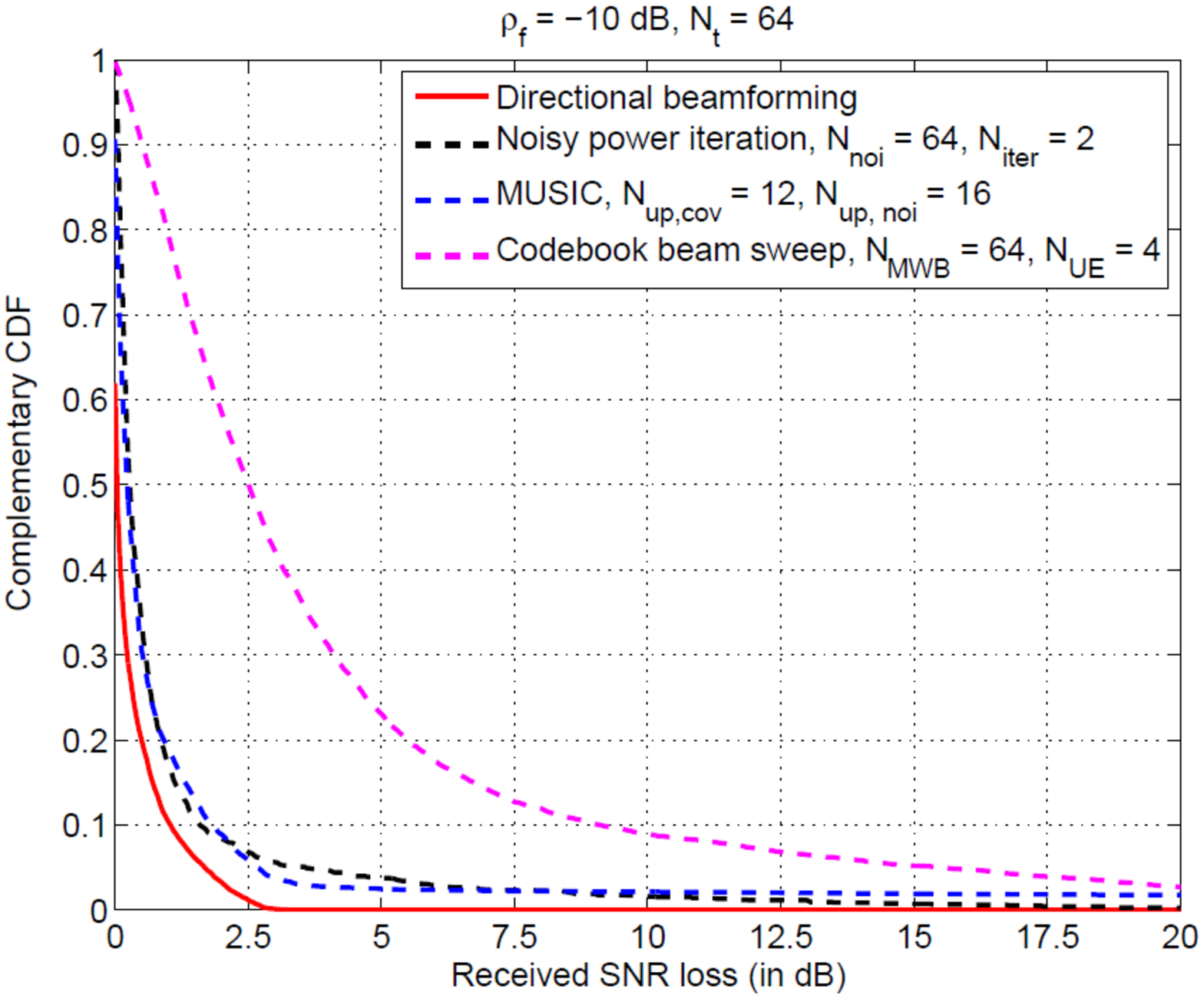}
&
\includegraphics[height=2.7in,width=3.15in] {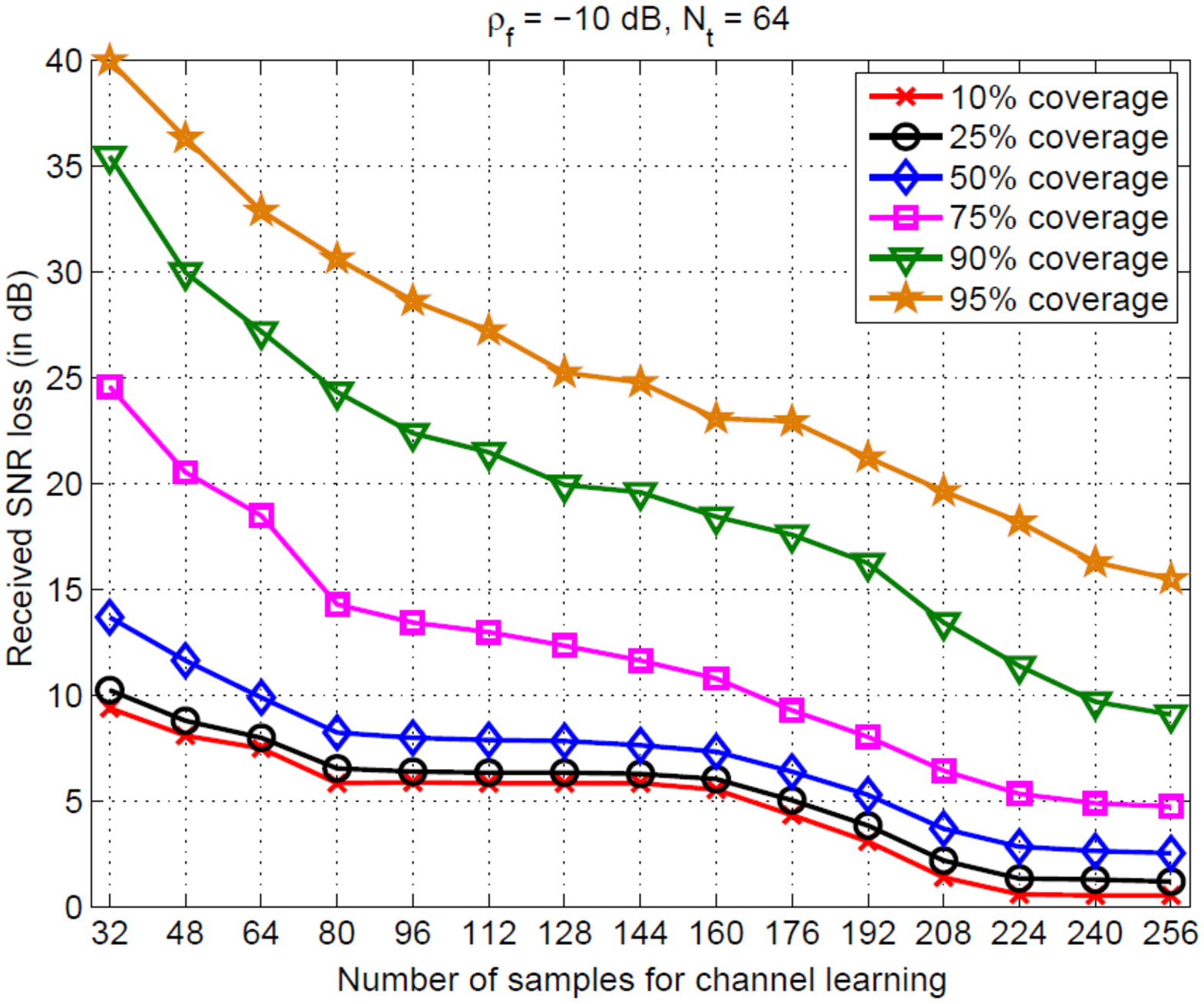}
\\ {\hspace{0.35in}} (a) & {\hspace{0.12in}} (b)
\end{tabular}
\caption{\label{fig7}
(a) Comparison of complementary CDF of ${\sf SNR}_{\sf rx}$ loss with different
beamforming approaches at $\rho_{\sf f} = \rho_{\sf r} = -10$ dB. (b) Tradeoff
between number of samples for channel learning with the beam sweep approach and loss
in ${\sf SNR}_{\sf rx}$ in initial UE discovery. }
\end{center}
\vspace{-5mm}
\end{figure*}

Fig.~\ref{fig6}(b) plots the complementary CDF of the loss in ${\sf SNR}_{\sf rx}$
with the beam sweep approach at $\rho_{\sf f} = -10$ dB. Fig.~\ref{fig7}(a) compares
the complementary CDF of loss in ${\sf SNR}_{\sf rx}$ with the best noisy power
iteration scheme, MUSIC algorithm and the beam sweep approach at the same $\rho_{\sf f}$
value. Clearly, the beam sweep approach has a poorer performance relative to the
other schemes, but its simplicity results in a better system design than possible with
the other approaches. Further, Fig.~\ref{fig7}(b) plots the loss in ${\sf SNR}_{\sf rx}$
at different percentile levels ($10\%, \hsppp 25\%, \hsppp 50\%, \hsppp 75\%, \hsppp
90\%$ and $95\%$) as a function of the number of samples used in channel learning with
the beam sweep scheme. From this study, we note that at small coverage levels, certain
codebook size choices are better in the tradeoff curve than other choices (for example,
a $N_{\sf mwb} = 20$ codebook over a $N_{\sf mwb} = 40$ codebook) and these advantages
correspond to the steepness of the achievability curve (see Fig.~\ref{fig6}(a)).

Further, while ${\sf SNR}_{\sf rx}$ improves as the codebook size increases, good
users with a better link margin (e.g., users with a smaller path loss) can be discovered
with a lower discovery latency (corresponding to a smaller codebook size) than those
cell-edge/blocked users with a worser link margin. Fig.~\ref{fig7}(b) suggests a
smooth roll-off in the discovery latency of the users with a worser link margin. That
said, the beam sweep approach could indeed suffer a significant loss in performance
especially with a cell-edge/blocked user. In such scenarios, the design for such a
user could include coding over long sequences for enhanced time-repetition/processing
gain, high MWB densification, a low-frequency overlay of multiple narrow CPO beams with
a worst-case beamforming gain (as close to the $10 \log_{10}(N_t)$ dB peak gain) in the
coverage area, among many approaches. However, such a design could lead to a significant
drag on the performance tradeoffs of the good/median user. Thus, they could be initiated
by the UE when it perceives a poor link on a unicast basis.

\begin{figure*}[htb!]
\begin{center}
\begin{tabular}{cc}
\includegraphics[height=2.7in,width=3.15in] {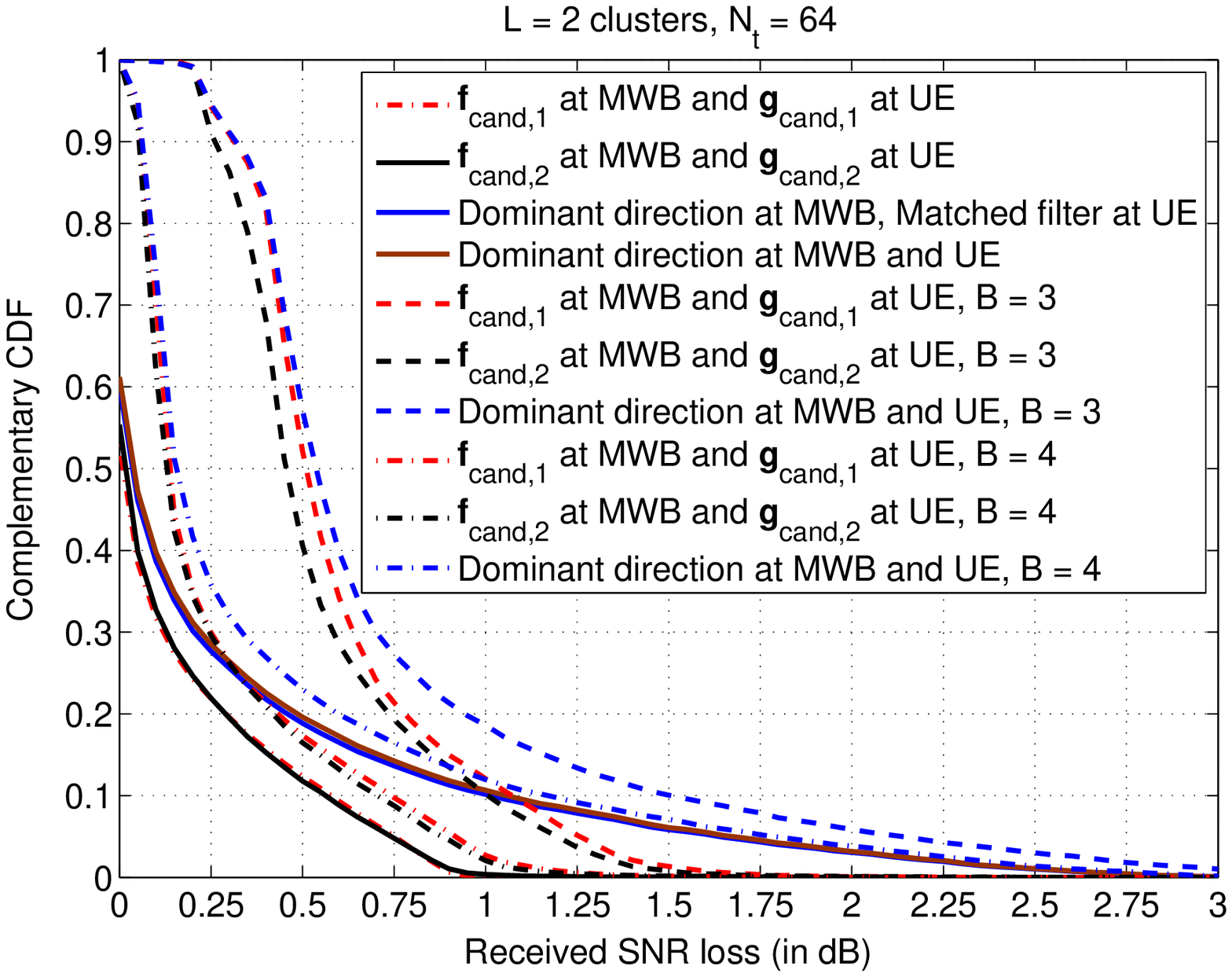}
&
\includegraphics[height=2.7in,width=3.15in] {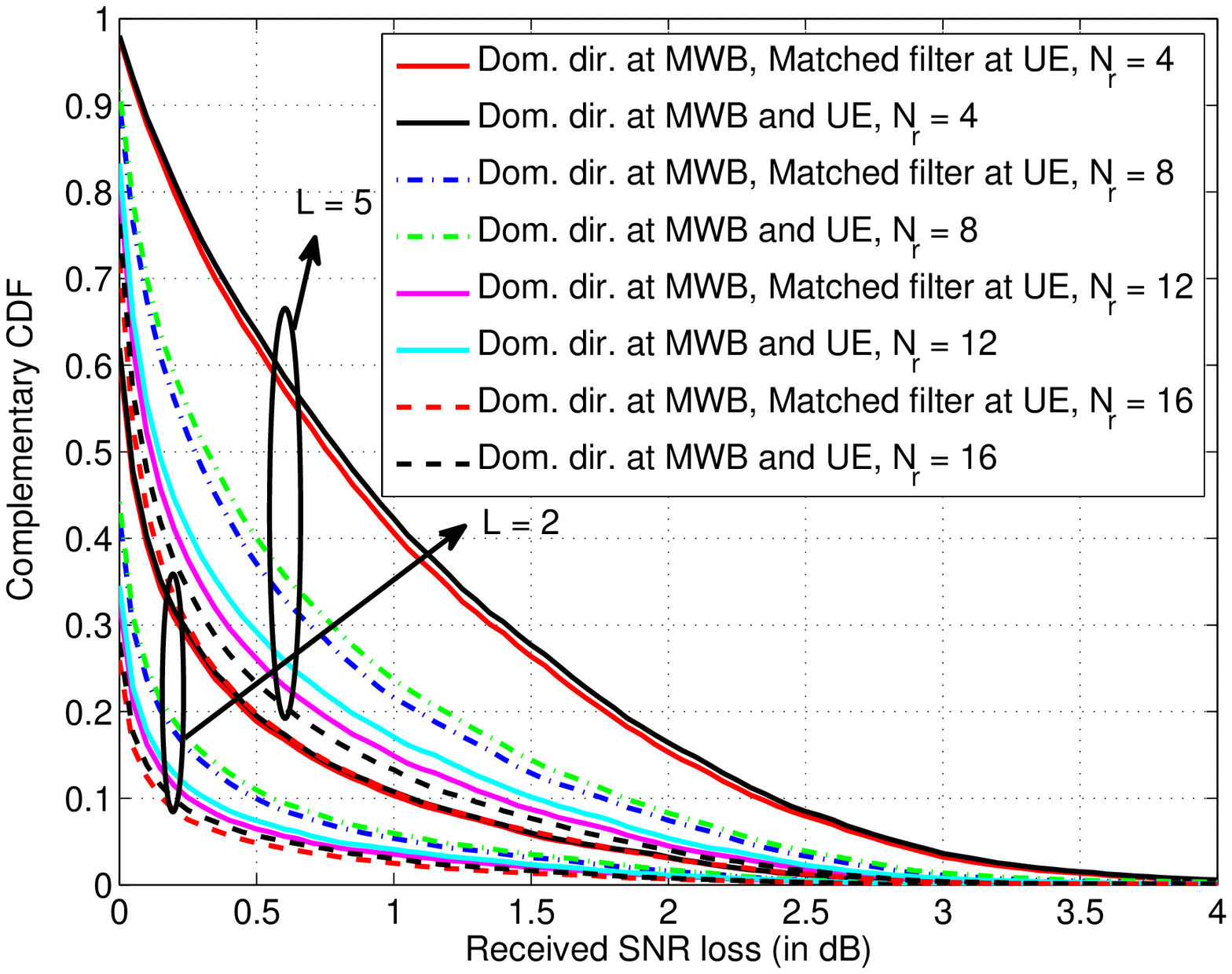}
\\ {\hspace{0.35in}} (a) & {\hspace{0.12in}} (b)
\end{tabular}
\caption{\label{fig8}
(a) Loss in ${\sf SNR}_{\sf rx}$ with different RSV-type beamforming schemes with
perfect phase of beamformers, and finite $B = 3$ and $B = 4$ bit phase quantization.
(b) Complementary CDF of loss in ${\sf SNR}_{\sf rx}$ as a function of $N_r$ with
$L = 2$ and $L = 5$ clusters.
}
\end{center}
\vspace{-5mm}
\end{figure*}

\section{Comments on Practical Applications}

\subsection{Finite-Bit Phase Shifters}
The entire focus of this work has been on beamformers that can be realized with
infinite-precision phase shifters allowing an arbitrary phase resolution for the
beamformer weights. However, in practice, beamformers at both the MWB and UE ends
are constrained to use finite-bit phase shifters. Nevertheless, our studies suggest
that even a $B = 3$ or $B = 4$ bit phase shifter (which is practically realizable
at low cost) is sufficient. Specifically, Fig.~\ref{fig8}(a) considers the $L = 2$,
$N_r = 4$ and $N_t = 64$ case considered in Fig.~\ref{fig1}(a) with $B = 3$ and
$B = 4$ bit phase shifters and plots the loss in ${\sf SNR}_{\sf rx}$ relative to
the infinite-precision optimal beamforming scheme. From Fig.~\ref{fig8}(a), we observe
a minimal loss (less than $0.25$ dB) with a $B = 4$ bit phase shifter, thereby
suggesting that quantization is not a serious detriment to the performance of mmW
MIMO systems. This justifies focus on infinite-precision beamformers in this work.

\subsection{More Antennas at the UE End}
A larger number of antenna elements (at the MWB and UE ends) could render mmW
systems more attractive in terms of data rates. With this backdrop, the UE-side
and MWB-side antenna numbers of $N_r = 4$ and $N_t = 64$ in our simulation studies
can be justified as follows. The MWB serves as a network resource with softer
constraints on the array aperture and hence, sectorized coverage ($90^{\sf o}$ or
$120^{\sf o}$ coverage per array at the MWB end) is likely leading to a larger number
of antennas at this end. On the other hand, multiple subarrays need to deployed at
the UE end to cover all the sectors leading to a smaller-dimensional subarray. For
example, a $16$ element antenna array consisting of $4$ subarrays of $4$ antennas
each with a $\lambda/2$ inter-antenna element spacing would still require an aperture of
$7.5$ cm at $30$ GHz --- a considerable expense in array aperture at the UE end. Thus,
the $N_r = 4$ assumption made in this work is not conservative, but quite realistic for
practical mmW systems.

As $N_r$ increases, the performance of all the directional schemes get better
relative to the optimal scheme (this is also true in terms of absolute values of
${\sf SNR}_{\sf rx}$ due to increased array gain with increasing $N_r$, but this
is not shown here) as can be seen from Fig.~\ref{fig8}(b) where the complementary
CDF of loss in ${\sf SNR}_{\sf rx}$ is plotted for the $N_t = 64$ case with
$L = 2$ and $L = 5$ clusters for four different choices of $N_r$: $N_r \in \{ 4,
\hsppp 8, \hsppp 12, \hsppp 16 \}$. While these plots correspond to perfect
directional beamforming, with reasonable $\rho_{\sf f}$, we expect the performance
of the directional learning schemes such as beam sweep or MUSIC to get similarly
better with increasing $N_r$.

\subsection{Coherence Time Constraints}
In addition to the different architectural tradeoffs in terms of system design,
the realizability of different beamforming strategies also critically depend on the
coherence time of the mmW channel. Initial measurement studies suggest that the
coherence time is on the order of a few milliseconds~\cite{adhikary,swindlehurst_mmw}.
While this coherence time is considerably (an order of magnitude) shorter than in
sub-$6$ GHz systems, this constraint only leads to a more favorable view of
directional beamforming approaches relative to RSV-type schemes due to the need for
a bi-directional feedback for the implementation of an RSV-type scheme.
Fig.~\ref{fig3}(b) also addresses the lack of robustness with an RSV-type scheme
due to small path length changes that could happen at a sub-coherence time level. On
the other hand, the essentially stable performance of the directional beamforming
scheme to such changes makes it an attractive candidate for initial UE discovery.

\subsection{Planar Antenna Arrays}
While the entire development in this paper so far assumed an ULA geometry for the
antennas, this was done primarily for the sake of illustrating the tradeoffs in the
beam broadening problem which would have been difficult with a smaller dimensional
planar array, e.g., $8 \times 8$. More general expressions can be written for the
array steering vectors when the antennas are laid out according to other geometric
configurations~\cite{oelayach,balanis}. In particular, the proposed development of
this paper can be easily extended to a planar array geometry.

\subsection{Comparison with Other Initial UE Discovery Approaches}
\label{sec_5e}
Jeong {\em et al.}~\cite{jeong_et_al} study a random access procedure for initial
UE discovery based on beam scanning and propose different approaches to enhance the
performance of beam sweep such as the use of multiple RF chains, enhanced preamble
detection, optimal cell design, etc. However, the viability of beam scanning as a
beamforming procedure for initial UE discovery (relative to other signal
processing-based techniques) is not considered in~\cite{jeong_et_al}.

Ghadikolaei {\em et al.}~\cite{ghadikolaei} also address the initial UE discovery
problem and characterize the essential tradeoffs in the design of control channels.
Initial UE search approaches such as an omni-directional beam sweep, one-sided
directional beam sweep, or a bi-directional beam sweep are considered. Since the
performance tradeoffs are addressed from a MAC layer perspective, PHY aspects
such as beamformer design under RF constraints, tradeoffs between and implications
of different beamforming approaches on mmW system design, etc. are not considered
in~\cite{ghadikolaei}. Nevertheless, our work is similar in flavor to~\cite{ghadikolaei}
in terms of the received ${\sf SNR}$ vs.\ initial UE discovery latency tradeoff that
both works quantify from PHY and MAC viewpoints, respectively.

The theme considered in Barati {\em et al.}~\cite{barati} for the initial UE discovery
problem is closely aligned with the theme of our work. The authors consider a beam sweep
procedure and derive the structure of generalized likelihood ratio detectors for detecting the
primary sync signal. Based on this study, the authors conclude that omni-directional
beam scanning provides a better tradeoff point in the received ${\sf SNR}$ vs.\ initial
UE discovery latency curve than a beam sweep. At this stage, note that an extreme case
of beam broadening considered in our work (where only one beam covers the entire coverage
area) is equivalent to an omni-directional scan considered in~\cite{barati} and thus the
issue lies in figuring out the best tradeoff point on the received ${\sf SNR}$ vs.\ initial
UE discovery latency tradeoff curve. Independent of performance comparisons, from a
system design perspective, as the authors rightly point out in~\cite{barati}, detection
of the primary sync signal via an omni-directional scan provides no knowledge of the AoD
after the detection/discovery of the UE unlike in the beam sweep case (broadened codebooks
lead to increasing uncertainty on the AoD that needs to be refined subsequently). For
an asymmetric downlink setup (as considered in this work), knowledge of AoD is more
important than that of AoA for the subsequent data delivery stage. Thus, it would be
of interest in understanding the latency tradeoffs between a broad beam scan (with
omni-directional scan as an extreme case) for the initial UE discovery stage followed by
beam/AoD refinement for the data delivery stage. While Fig.~8(b) of this work provides
a certain preliminary analysis, a more detailed study would be of interest.
Furthermore, architecturally speaking, an omni-directional scan would also render the
discovery of multiple MWBs impossible. Given that a mmW setup is expected to be primarily
of use in a small cell setting with a number of potential MWBs available for handoff,
lack of knowledge of AoD in an omni-directional scan could be potentially a disadvantage.

\section{Concluding Remarks}
We studied the efficacy of different beamforming approaches for initial UE discovery
in mmW MIMO systems in this work. The structure of the (near-)optimal beamformers
suggested RSV learning as a useful strategy for beamformer learning. However, RSV
learning is sensitive to small path length changes, a problem of serious importance
at mmW carrier frequencies. A further examination of the beamformer structure suggests
direction (AoA/AoD) learning as a viable strategy. We started by studying the utility
of classical approaches such as MUSIC in the context of direction learning. As with
RSV learning, MUSIC also requires a non-broadcast system design that could render it
unattractive from a system level standpoint. An alternate strategy motivated by the
limited feedback approach at cellular frequencies (albeit a directional codebook in 
the mmW context) of ${\sf SNR}$ estimation via the
use of a codebook of beamforming vectors at the MWB and UE is seen to result in a
broadcast solution that is conducive for initial UE discovery. While this approach has
a slightly poor performance relative to RSV learning and MUSIC, its simplicity
overweighs its sub-optimality. Table~\ref{table1} provides a brief summary of the
features of the different beamforming algorithms such as computational complexity,
${\sf PAR}$ of beamforming vector, system design issues and scaling with different
beamformer architectures.

\begin{table*}[htb!]
\caption{Qualitative Comparison Between Different Beamforming Approaches}
\label{table1}
\begin{center}
\begin{tabular}{|c||c|c|c|c|}
\hline
Issue of interest & RSV Learning & MUSIC/ESPRIT & Compressive Sensing & Beam Sweep \\
\hline
\hline
Computational& Iterative method & Computing eigenvectors &
Convex optimization & Received ${\sf SNR}$ \\
complexity & & & or similar & computation \\
\hline
${\sf PAR}$ of & Non-constant & CPO beam for & Non-constant & CPO \\
beamforming vector & amplitude & training & amplitude for & beam \\
& & & good dictionary & \\
\hline
Performance & Poor & Poor & Reasonable & Reasonable \\
robustness & & & & \\
\hline
System design & Unicast & Unicast & Broadcast & Broadcast \\
issues & Bi-directional & Bi-directional & Uni-directional & Uni-directional \\
\hline
Scaling with & Poor & Poor & Poor with a & Comparable \\
analog beamforming & & & general dictionary & \\
\hline
\end{tabular}
\end{center}
\end{table*}
%\end{center}

\section*{Acknowledgments}
The authors would like to thank Prof.\ Sundeep Rangan (the Associate Editor), the
reviewers, Dr.\ Saurabh Tavildar, Dr.\ Omar El Ayach and Dr.\ Tianyang Bai for 
providing constructive comments that helped in improving the exposition of this paper.

\appendix
\subsection{Proof of Theorem~\ref{thm1}}
\label{appA}

The $N_t \times N_t$ matrix ${\sf H}^H {\sf H}$ can be expanded as
\begin{eqnarray}
\frac{L}{N_t N_r} \cdot {\sf H}^H {\sf H} & = &
\sum_{i,j} %{\ell_1, \hsppp \ell_2} \alpha_{\ell_1}^{\star} \alpha_{\ell_2} \cdot
%\left( {\bf u}_{\ell_1}^H {\bf u}_{\ell_2} \right) \cdot {\bf v}_{\ell_1} {\bf v}_{\ell_2}^H
\alpha_i^{\star} \alpha_j \cdot \left( {\bf u}_i^H {\bf u}_j \right) \cdot
{\bf v}_i {\bf v}_j^H
%\nonumber
\label{eq_HhermH}
%\\ & = &
%\left[ \begin{array}{ccc}
%\alpha_1^{\star} \hsppp {\bf v}_1 & \cdots & \alpha_L^{\star} \hsppp {\bf v}_L
%\end{array} \right]
= {\sf V} \hsppp {\sf A} \hsppp {\sf V}^H
%\left[ \begin{array}{c}
%\alpha_1 \hsppp {\bf v}_1^H \\
%\vdots \\
%\alpha_L \hsppp {\bf v}_L^H
%\end{array}
%\right]
%\nonumber
\end{eqnarray}
where ${\sf V} = \left[ \alpha_1^{\star} \hsppp {\bf v}_1, \hsppp \cdots , \hsppp
\alpha_L^{\star} \hsppp {\bf v}_L \right]$ and ${\sf A}(i,j) = {\bf u}_i^H {\bf u}_j,
\hsppp i, j = 1, \cdots, L$.
%Without loss in generality, c
%Consider the case where $L \leq N_t$.
Let ${\sf X}$ be an $L \times L$ eigenvector matrix of ${\sf A} \hsppp
{\sf V}^H \hsppp {\sf V}$ with the corresponding diagonal matrix of eigenvalues
denoted by ${\sf D}$. That is (the eigenvalue equation is given as),
\begin{eqnarray}
\left( {\sf A} \hsppp {\sf V}^H \hsppp {\sf V} \right) \cdot {\sf X} =
{\sf X} \cdot {\sf D}.
\label{eq_premultiply_eigs}
%\nonumber
\end{eqnarray}
Pre-multiplying both sides of~(\ref{eq_premultiply_eigs}) by ${\sf V}$, we have
\begin{eqnarray}
{\sf V} \hsppp {\sf X} \cdot {\sf D} =
\left( {\sf V} \hsppp {\sf A} \hsppp {\sf V}^H \hsppp {\sf V} \right) \cdot {\sf X} =
\left( \frac{L}{N_t N_r} \cdot {\sf H}^H {\sf H}  \right) \cdot
{\sf V} \hsppp {\sf X}.
%\nonumber
\label{eq_premultiply_eigs1}
\end{eqnarray}
Reading~(\ref{eq_premultiply_eigs1}) from right to left, we see that
${\sf V} \hsppp {\sf X}$ forms the eigenvector matrix for ${\sf H}^H {\sf H}$ with the
diagonal eigenvalue matrix being the same as ${\sf D}$. In other words, all the
eigenvectors of ${\sf H}^H {\sf H}$ (and hence ${\bf f}_{\sf opt}$) can be represented
as {\em linear combinations} of ${\bf v}_1, \cdots , {\bf v}_L$.
% \end{proof}
The only difference between the $L \leq N_t$ and $L > N_t$ cases is that the number
of distinct eigenvectors of ${\sf X}$ is less than or equal to $L$ and $N_t$ in the
two cases, respectively.

Given the structure of ${\bf f}_{\sf opt} = \sum_{j = 1}^L \beta_j {\bf v}_j$, we have
\begin{eqnarray}
{\sf H} \hsppp {\bf f}_{\sf opt} = \left ( \sum_{i = 1}^L \alpha_i {\bf u}_i {\bf v}_i^H \right)
\cdot \left( \sum_{j = 1}^L \beta_j {\bf v}_j \right)
= \sum_{i = 1}^L \alpha_i \cdot \left( \sum_j \beta_j {\bf v}_i^H {\bf v}_j \right)
{\bf u}_i
\label{eq43}
\end{eqnarray}
and thus ${\bf g}_{\sf opt}$ is a linear combination of $\{ {\bf u}_1, \cdots,
{\bf u}_L \}$.
\qed

\subsection{Proof of Theorem~\ref{thm2}}
\label{appB}
Since ${\sf SNR}_{\sf rx} = \rho_{\sf f} \cdot
\frac{ |{\bf g}^H {\sf H} {\bf f} |^2} { {\bf g}^H {\bf g} }$ is invariant to
$\| {\bf g} \|_2$, we let ${\bf g} = \| {\bf g} \|_2
\cdot {\widetilde{\bf g}}$ where $\| {\widetilde{\bf g} } \|_2 = 1$ (Constraint 1).
The $2$- and $\infty$-norm constraints on ${\bf g}$ are equivalent to $\| {\bf g} \|_2
\leq 1$ and $\| {\bf g} \|_{\infty} = \| {\bf g} \|_2 \cdot \| {\widetilde{\bf g}}
\|_{\infty} \leq \frac{1}{ \sqrt{N_r}}$ (Constraints 2 and 3). The received ${\sf SNR}$
maximization can then be recast as
\begin{eqnarray}
\widetilde{\bf g}_{\sf opt} = \arg \max_{ \widetilde{\bf g}}  \frac{ {\sf SNR}_{\sf rx} }
{\rho_{\sf f} } = \arg \max_{ \widetilde{\bf g}} \frac{ |\widetilde{\bf g}^H {\sf H} {\bf f}|^2 }
{ \widetilde{\bf g}^H \widetilde{\bf g} }
\label{eq44}
\end{eqnarray}
subject to Constraints 1 to 3. Ignoring Constraints 2 and 3 in the above optimization,
${\sf SNR}_{\sf rx}$ can be upper bounded as $\rho_{\sf f} \cdot  \| {\sf H}{\bf f}\|_2^2$
with equality if $\widetilde{\bf g} = \alpha \cdot {\sf H} {\bf f}$ for some $\alpha$.
We now consider a specific choice ${\bf g} = \frac{1}{ \sqrt{N_r} } \cdot \frac{
{\sf H} {\bf f} }{ \| {\sf H} {\bf f} \|_{\infty} }.$ We note that this choice is in
${\cal F}_{\infty}^{N_r}$ by satisfying all the three constraints since $\widetilde{\bf g}
= \frac{ {\sf H}{\bf f} }{ \| {\sf H} {\bf f} \|_2 }$, $\| {\bf g} \|_2 =
\frac{ \| {\sf H} {\bf f} \|_2 }{ \sqrt{N_r} \cdot \| {\sf H} {\bf f} \|_{\infty} }
\leq 1$ and $\| {\bf g} \|_2 \cdot \| {\widetilde{\bf g}} \|_{\infty} = \frac{1}{
\sqrt{N_r} }$. Further, the upper bound for ${\sf SNR}_{\sf rx}$ is also met and this
choice is optimal from ${\cal F}_{\infty}^{N_r}$.

%We now develop the structure of ${\bf f}$ from ${\cal F}_{\infty}^{N_t}$ that
%maximizes ${\bf f}^H {\sf H}^H{\sf H} {\bf f}$. This choice has to satisfy
%$\|{\bf f}\|_2 = 1$ and $\|{\bf f} \|_{\infty} \leq \frac{1}{ \sqrt{N_t}}$. For this
%statement, as before, we decompose ${\bf f}$ as ${\bf f} = \| {\bf f} \|_2 \cdot
%\widetilde{\bf f}$ with the following constraints: 1) $\| \widetilde{\bf f } \|_2 = 1$,
%2) $\| {\bf f} \|_2 \leq 1$ and 3) $\| {\bf f} \|_2 \cdot \| \widetilde{\bf f} \|_{\infty}
%\leq \frac{1} { \sqrt{N_t} }$. Clearly, ${\bf f}^H {\sf H}^H {\sf H} {\bf f} =
%\| {\bf f} \|_2^2 \cdot \widetilde{\bf f}^H {\sf H}^H {\sf H} \widetilde{\bf f}$ is
%increasing in $\| {\bf f} \|_2$ and is maximized when $\| {\bf f } \|_2 = 1$. By
%replacing $\widetilde{\bf f}$ with ${\bf f}$, we have the needed conclusion. The simultaneous
%satisfaction of the conditions $\| {\bf f} \|_2 = 1$ and $\| {\bf f} \|_{\infty} \leq
%\frac{1}{ \sqrt{N_t} }$ necessitate that $| {\bf f}_{\sf opt}(i) | = \frac{1}
%{\sqrt{N_t} }$ for all $i$.

Let ${\bf f}$ be the optimal beamformer from ${\cal F}_{\infty}^{N_t}$ that maximizes
${\bf f}^H {\sf H}^H{\sf H} {\bf f}$ and let the magnitude of at least one of its entries
not equal $\frac{1}{\sqrt{N_t}}$. Without loss in generality (by appropriate rotations with
permutation matrices), let one of these entries be the first entry. Let ${\bf f}$ and
${\sf H}^H {\sf H}$ be partitioned as
\begin{eqnarray}
{\bf f} = \left[
\begin{array}{c}
\underbrace{ {\bf f}(1) }_{1 \times 1} \\
\underbrace{ {\bf f}_{\sf r} }_{ N_t - 1 \times 1 }
\end{array}
\right], \hspp
{\sf H}^H {\sf H} = \left[
\begin{array}{cc}
\underbrace{ {\sf h} }_{1 \times 1} & \underbrace{ {\sf h}_{\sf r}^H }_{1 \times N_t - 1} \\
\underbrace{ {\sf h}_{\sf r} }_{ N_t - 1 \times 1} & \underbrace{ {\sf H}_{\sf r} }
_{ N_t - 1 \times N_t - 1}
\end{array}
\right].
%\nonumber
\end{eqnarray}
With this partition, we have the following expansion:
\begin{eqnarray}
{\bf f}^H {\sf H}^H{\sf H} {\bf f} =
|{\bf f}(1)|^2 \hsppp {\sf h} +
2 \hsppp {\sf Re} \left( {\bf f}(1) \cdot {\bf f}_{\sf r}^H {\sf h}_{\sf r} \right) +
{\bf f}_{\sf r}^H {\sf H}_{\sf r} {\bf f}_{\sf r}. %Call this expansion (1).
\label{eq_exp1}
\end{eqnarray}
Since ${\bf f}$ is optimal, whatever be the choice of ${\bf f}_{\sf r}$, we should have
$\angle{ {\bf f}(1) } = \angle{ {\bf h}_{\sf r}^H {\bf f}_{\sf r} }$. Otherwise, we can find
a better choice of ${\bf f}(1)$ for the same ${\bf f}_{\sf r}$. With this optimal choice
for $\angle{ {\bf f}(1) }$,~(\ref{eq_exp1}) reduces to
\begin{eqnarray}
{\bf f}^H {\sf H}^H{\sf H} {\bf f} =
|{\bf f}(1)|^2 \hsppp {\sf h} + 2 |{\bf f}(1)| \cdot | {\bf f}_{\sf r}^H {\bf h}_{\sf r} | +
{\bf f}_{\sf r}^H {\sf H}_{\sf r} {\bf f}_{\sf r}.
\label{eq_exp2}
\end{eqnarray}
Clearly,~(\ref{eq_exp2}) is increasing in $|{\bf f}(1)|$ and thus $|{\bf f}(1)| = \frac{1}{ \sqrt{N_t} }$.
Thus, the assumption that the magnitude of at least one of the entries of ${\bf f}$ is not
$\frac{1}{ \sqrt{N_t} }$ is untenable and we end up with a contradiction to the statement. This
implies that equal gain transmit beamforming is optimal.
\qed

\subsection{Proof of Prop.~\ref{prop1}}
\label{appC}
Let ${\bf f}(i)  = \frac{1}{ \sqrt{N_t} } \cdot e^{j \theta_i}$ for some $\theta_i$.
With the matched filter structure for ${\bf g}$ and with ${\sf H}$ as in the statement
of the proposition, ${\bf f}^H{\sf H}^H {\sf H}{\bf f}$ is given as
\begin{eqnarray}
\frac{ {\sf SNR}_{\sf rx}} { \rho_{\sf f}} = {\bf f}^H {\sf H}^H {\sf H} {\bf f} & = &
\sum_{i, k  = 1}^{N_t} {\sf h}_i^H {\sf h}_k \cdot e^{j (\theta_k - \theta_i)}
%\nonumber
\label{eq45}
\\
& = & \sum_{i = 1}^{N_t} {\sf h}_i^H {\sf h}_i + 2 \sum_{ k < i }
{\sf Re} \left( {\sf h}_i^H {\sf h}_k \cdot e^{j (\theta_k - \theta_i)} \right).
\label{eq46}
\end{eqnarray}
While the optimal beamforming vector requires a simultaneous optimization over
$\{ \theta_i \}$, we can rewrite the objective function as
\begin{eqnarray}
{\bf f}^H {\sf H}^H {\sf H} {\bf f} =
\sum_{i = 1}^{N_t} {\sf h}_i^H {\sf h}_i +
2 \sum_{i = 1}^{N_t} {\sf Re} \left( e^{-j\theta_i} \cdot
\left( \sum_{k = 1}^{i-1}e^{j \theta_k} \cdot {\sf h}_i^H {\sf h}_k
\right) \right).
\label{eq_34}
\end{eqnarray}
With a recursive structure as in~(\ref{eq18}), each term in~(\ref{eq_34}) is
maximized, even though the impact of this structure on the sum of these terms
is unclear.
\qed

\subsection{Proof of Theorem~\ref{thm3}}
\label{appD}
With $\{ \Omega_j \}$ as sampling frequencies, we have
\begin{eqnarray}
\min_{ j = 1, \cdots, J} | {\bf F}(\Omega_j)|^2 & \leq &
\frac{1}{J} \cdot \sum_{j = 1}^J  | {\bf F}(\Omega_j)|^2 =
\frac{1}{J} \cdot {\bf f}^H \left( \sum_{j = 1}^J {\bf a}(\Omega_j)
{\bf a}(\Omega_j)^H \right) {\bf f}
\label{eq48}
\\
& \leq & \frac{ \lambda_{\sf max} \left(
\sum_{j = 1}^J {\bf a}(\Omega_j) {\bf a}(\Omega_j)^Hi \right)}{J}
\label{eq_36}
\end{eqnarray}
where the last inequality follows from the Rayleigh coefficient of a Hermitian
matrix and its largest eigenvalue. A straightforward argument shows that the $J = 2$
setting minimizes the above upper bound when $\Omega_0 = \frac{ {\sf A} \cdot 2 \pi}
{N_t}$ where $0 \leq {\sf A} \leq 1$. Further, in this case, the choice $\Omega_2 =
\frac{ \pi}{2} + \frac{ {\sf A} \cdot \pi}{N_t}$ and $\Omega_1 = \frac{\pi}{2} -
\frac{ {\sf A} \cdot \pi}{N_t}$ minimizes the upper bound resulting in
\begin{eqnarray}
\lambda_{\sf max} \left(
\sum_{j = 1}^J {\bf a}(\Omega_j) {\bf a}(\Omega_j)^H \right) =
N_t + \Bigg| \frac{ \sin( N_t(\Omega_2 - \Omega_1)/2) }
{ \sin( (\Omega_2 - \Omega_1)/2) } \Bigg|.
\label{eq50}
\end{eqnarray}
Using this fact, the worst-case beamforming gain is seen to be
\begin{eqnarray}
{\sf BF \hspp Gain} \leq 10 \log_{10} \left( \min \left( N_t,
\hspp \frac{N_t}{2} + \frac{1}{2} \cdot
\left| \frac{ \sin \left( \frac{ \overline{\Omega} \cdot N_t }
{2 \hsppp {\sf No. \hsppp beams} } \right) }
{ \sin \left( \frac{ \overline{\Omega}  }
{2 \hsppp {\sf No. \hsppp beams} } \right) }
\right| \right) \right)
\label{eq51}
\end{eqnarray}
provided ${\sf No. \hsppp beams} \geq \frac{ \overline{\Omega} \cdot N_t} {2 \pi}$.
In general, it is difficult to obtain closed-form expressions for the eigenvalues of
$\sum_{j = 1}^J {\bf a}(\Omega_j) {\bf a}(\Omega_j)^H$ with $J = 3,4$ and this task
is impossible if $J > 4$. Thus, in these settings (${\sf No. \hsppp beams} <
\frac{ \overline{\Omega} \cdot N_t}{2 \pi}$), the best-case (smallest) upper bound is
obtained by minimizing the quantity in~(\ref{eq_36}) over different choices of $J$
and $\{ \Omega_j \}$ leading to~(\ref{eq_23}), where $\Omega_j \in
\left[ \Omega_{\sf c} - \frac{\Omega_0}{2}, \hspp \Omega_{\sf c} + \frac{ \Omega_0}{2}
\right]$ with $\Omega_0 = \frac{ \overline{\Omega}}{ {\sf No. \hsppp beams}} >
\frac{2 \pi}{N_t}$.
\qed

{\vspace{-0.05in}}
\bibliographystyle{IEEEbib}
\bibliography{newrefsx}

\begin{thebibliography}{10}

\bibitem{rusek}
F.~Rusek, D.~Persson, B.~K. Lau, E.~G. Larsson, T.~L. Marzetta, O.~Edfors, and
  F.~Tufvesson,
\newblock ``{Scaling up MIMO: Opportunities and challenges with very large
  arrays},''
\newblock {\em IEEE Sig. Proc. Magaz.}, vol. 30, no. 1, pp. 40--60, Jan. 2013.

\bibitem{qualcomm}
N.~Bhushan, J.~Li, D.~Malladi, R.~Gilmore, D.~Brenner, A.~Damnjanovic, R.~T.
  Sukhasvi, C.~Patel, and S.~Geirhofer,
\newblock ``{Network densification: The dominant theme for wireless evolution
  into 5G},''
\newblock {\em IEEE Commun. Magaz.}, vol. 52, no. 2, pp. 82--89, Feb. 2014.

\bibitem{boccardi1}
F.~Boccardi, R.~W. Heath, Jr., A.~Lozano, T.~L. Marzetta, and P.~Popovski,
\newblock ``{Five disruptive technology directions for 5G},''
\newblock {\em IEEE Commun. Magaz.}, vol. 52, no. 2, pp. 74--80, Feb. 2014.

\bibitem{qualcomm1}
Qualcomm,
\newblock ``{The 1000-X data challenge},''
\newblock Available: [Online]. \newline
  {\tt{http://www.qualcomm.com/solutions/wireless-networks/technologies/1000x-data}}.

\bibitem{khan}
F.~Khan and Z.~Pi,
\newblock ``{An introduction to millimeter wave mobile broadband systems},''
\newblock {\em IEEE Commun. Magaz.}, vol. 49, no. 6, pp. 101--107, June 2011.

\bibitem{torkildson}
E.~Torkildson, U.~Madhow, and M.~Rodwell,
\newblock ``{Indoor millimeter wave MIMO: Feasibility and performance},''
\newblock {\em IEEE Trans. Wireless Commun.}, vol. 10, no. 12, pp. 4150--4160,
  Dec. 2011.

\bibitem{rappaport}
T.~S. Rappaport, S.~Sun, R.~Mayzus, H.~Zhao, Y.~Azar, K.~Wang, G.~N. Wong,
  J.~K. Schulz, M.~Samimi, and F.~Gutierrez,
\newblock ``{Millimeter wave mobile communications for 5G cellular: It will
  work!},''
\newblock {\em IEEE Access}, vol. 1, pp. 335--349, 2013.

\bibitem{rangan}
S.~Rangan, T.~S. Rappaport, and E.~Erkip,
\newblock ``{Millimeter wave cellular networks: Potentials and challenges},''
\newblock {\em Proc. IEEE}, vol. 102, no. 3, pp. 366--385, Mar. 2014.

\bibitem{roh}
W.~Roh, J.-Y. Seol, J.~Park, B.~Lee, J.~Lee, Y.~Kim, J.~Cho, K.~Cheun, and
  F.~Aryanfar,
\newblock ``{Millimeter-wave beamforming as an enabling technology for 5G
  cellular communications: Theoretical feasibility and prototype results},''
\newblock {\em IEEE Commun. Magaz.}, vol. 52, no. 2, pp. 106--113, Feb. 2014.

\bibitem{akdeniz}
M.~R. Akdeniz, Y.~Liu, M.~K. Samimi, S.~Sun, S.~Rangan, T.~S. Rappaport, and
  E.~Erkip,
\newblock ``{Millimeter wave channel modeling and cellular capacity
  evaluation},''
\newblock {\em IEEE Journ. Selected Areas in Commun.}, vol. 32, no. 6, pp.
  1164--1179, June 2014.

\bibitem{lim}
C.~Lim, T.~Yoo, B.~Clerckx, B.~Lee, and B.~Shim,
\newblock ``{Recent trend of multiuser MIMO in LTE-Advanced},''
\newblock {\em IEEE Commun. Magaz.}, vol. 51, no. 3, pp. 127--135, Mar. 2013.

\bibitem{venkateswaran}
V.~Venkateswaran and A.-J. van~der Veen,
\newblock ``{Analog beamforming in MIMO communications with phase shift
  networks and online channel estimation},''
\newblock {\em IEEE Trans. Sig. Proc.}, vol. 58, no. 8, pp. 4131--4143, Aug.
  2010.

\bibitem{brady}
J.~Brady, N.~Behdad, and A.~M. Sayeed,
\newblock ``{Beamspace MIMO for millimeter-wave communications: System
  architecture, modeling, analysis and measurements},''
\newblock {\em IEEE Trans. Ant. Propag.}, vol. 61, no. 7, pp. 3814--3827, July
  2013.

\bibitem{oelayach}
O.~El Ayach, S.~Rajagopal, S.~Abu-Surra, Z.~Pi, and R.~W. Heath, Jr.,
\newblock ``{Spatially sparse precoding in millimeter wave MIMO systems},''
\newblock {\em IEEE Trans. Wireless Commun.}, vol. 13, no. 3, pp. 1499--1513,
  Mar. 2014.

\bibitem{hur}
S.~Hur, T.~Kim, D.~J. Love, J.~V. Krogmeier, T.~A. Thomas, and A.~Ghosh,
\newblock ``Millimeter wave beamforming for wireless backhaul and access in
  small cell networks,''
\newblock {\em IEEE Trans. Commun.}, vol. 61, no. 10, pp. 4391--4403, Oct.
  2014.

\bibitem{alkhateeb}
A.~Alkhateeb, O.~El Ayach, G.~Leus, and R.~W. Heath, Jr.,
\newblock ``Channel estimation and hybrid precoding for millimeter wave
  cellular systems,''
\newblock {\em IEEE Journ. Selected Topics in Sig. Proc.}, vol. 8, no. 5, pp.
  831--846, Oct. 2014.

\bibitem{sun}
S.~Sun, T.~S. Rappaport, R.~W. Heath, Jr., A.~Nix, and S.~Rangan,
\newblock ``{MIMO for millimeter wave wireless communications: Beamforming,
  spatial multiplexing, or both?},''
\newblock {\em IEEE Commun. Magaz.}, vol. 52, no. 12, pp. 110--121, Dec. 2014.

\bibitem{zhao}
H.~Zhao, R.~Mayzus, S.~Sun, M.~K. Samimi, J.~K. Schulz, Y.~Azar, K.~Wang, G.~N.
  Wong, F.~Gutierrez, and T.~S. Rappaport,
\newblock ``{28 GHz millimeter wave cellular communication measurements for
  reflection and penetration loss in and around buildings in New York City},''
\newblock {\em Proc. IEEE Intern. Conf. on Commun.}, pp. 5163--5167, June 2013.

\bibitem{akbar}
A.~M. Sayeed,
\newblock ``Deconstructing multi-antenna fading channels,''
\newblock {\em IEEE Trans. Sig. Proc.}, vol. 50, no. 10, pp. 2563--2579, Oct.
  2002.

\bibitem{vasanth_jstsp}
A.~M. Sayeed and V.~Raghavan,
\newblock ``{Maximizing MIMO capacity in sparse multipath with reconfigurable
  antenna arrays},''
\newblock {\em IEEE Journ. Selected Topics in Sig. Proc.}, vol. 1, no. 1, pp.
  156--166, June 2007.

\bibitem{vasanth_it2}
V.~Raghavan and A.~M. Sayeed,
\newblock ``{Sublinear capacity scaling laws for sparse MIMO channels},''
\newblock {\em IEEE Trans. Inf. Theory}, vol. 57, no. 1, pp. 345--364, Jan.
  2011.

\bibitem{tky_lo}
T.~K.~Y. Lo,
\newblock ``{Maximum ratio transmission},''
\newblock {\em IEEE Trans. Commun.}, vol. 47, no. 10, pp. 1458--1461, Oct.
  1999.

\bibitem{oelayach1}
O.~El Ayach, R.~W. Heath, Jr., S.~Abu-Surra, S.~Rajagopal, and Z.~Pi,
\newblock ``{The capacity optimality of beam steering in large millimeter wave
  MIMO systems},''
\newblock {\em Proc. IEEE Conf. Sig. Proc. Advances in Wireless Commun., Cesme,
  Turkey}, vol. 1, pp. 100--104, June 2012.

\bibitem{david_egc}
D.~J. Love and R.~W. Heath, Jr.,
\newblock ``{Equal gain transmission in multiple-input multiple-output wireless
  systems},''
\newblock {\em IEEE Trans. Commun.}, vol. 51, no. 7, pp. 1102--1110, July 2003.

\bibitem{pi}
Z.~Pi,
\newblock ``{Optimal MIMO transmission with per-antenna power constraints},''
\newblock {\em Proc. IEEE Global Telecommun. Conf., Anaheim, CA}, vol. 1, pp.
  2493--2498, Dec. 2012.

\bibitem{dahl}
T.~Dahl, N.~Christophersen, and D.~Gesbert,
\newblock ``{Blind MIMO eigenmode transmission based on the algebraic power
  method},''
\newblock {\em IEEE Trans. Sig. Proc.}, vol. 52, no. 9, pp. 2424--2431, Sept.
  2004.

\bibitem{swindlehurst_mmw}
A.~L. Swindlehurst, E.~Ayanoglu, P.~Heydari, and F.~Capolino,
\newblock ``{Millimeter-wave massive MIMO: The next wireless revolution?},''
\newblock {\em IEEE Commun. Magaz.}, vol. 52, no. 9, pp. 56--62, Sept. 2014.

\bibitem{schmidt}
R.~O. Schmidt,
\newblock ``Multiple emitter location and signal parameter estimation,''
\newblock {\em IEEE Trans. Ant. Propag.}, vol. AP-34, no. 3, pp. 276--280, Mar.
  1986.

\bibitem{roy_kailath}
R.~Roy and T.~Kailath,
\newblock ``{ESPRIT: Estimation of signal parameters via rotational invariance
  techniques},''
\newblock {\em IEEE Trans. Acoustics, Speech, Sig. Proc.}, vol. 37, no. 7, pp.
  984--995, July 1989.

\bibitem{krim}
H.~Krim and M.~Viberg,
\newblock ``Two decades of array signal processing research: The parametric
  approach,''
\newblock {\em IEEE Sig. Proc. Magaz.}, vol. 13, no. 4, pp. 67--94, July 1996.

\bibitem{david_review}
D.~J. Love, R.~W. Heath, Jr., V.~K.~N. Lau, D.~Gesbert, B.~D. Rao, and
  M.~Andrews,
\newblock ``{An overview of limited feedback in wireless communication
  systems},''
\newblock {\em IEEE Journ. Selected Areas in Commun.}, vol. 26, no. 8, pp.
  1341--1365, Oct. 2008.

\bibitem{david_grass}
D.~J. Love, R.~W. Heath, Jr., and T.~Strohmer,
\newblock ``{Grassmannian beamforming for multiple-input multiple-output
  wireless systems},''
\newblock {\em IEEE Trans. Inf. Theory}, vol. 49, no. 10, pp. 2735--2747, Oct.
  2003.

\bibitem{mukkavilli}
K.~K. Mukkavilli, A.~Sabharwal, E.~Erkip, and B.~Aazhang,
\newblock ``{On beamforming with finite rate feedback in multiple antenna
  systems},''
\newblock {\em IEEE Trans. Inf. Theory}, vol. 49, no. 10, pp. 2562--2579, Oct.
  2003.

\bibitem{rajagopal}
S.~Rajagopal,
\newblock ``{Beam broadening for phased antenna arrays using multi-beam
  subarrays},''
\newblock {\em Proc. IEEE Intern. Conf. Commun., Ottawa, Canada}, vol. 1, pp.
  3637--3642, June 2012.

\bibitem{jeong_et_al}
C.~Jeong, J.~Park, and H.~Yu,
\newblock ``{Random access in millimeter-wave beamforming cellular networks:
  Issues and approaches},''
\newblock {\em IEEE Commun. Magaz.}, vol. 53, no. 1, pp. 180--185, Jan. 2015.

\bibitem{ghadikolaei}
H.~S.-Ghadikolaei, C.~Fischione, G.~Fodor, P.~Popovski, and M.~Zorzi,
\newblock ``{Millimeter wave cellular networks: A MAC layer perspective},''
\newblock {\em arXiv preprint arXiv:1503.00697}, 2015.

\bibitem{barati}
C.~N. Barati, S.~A. Hosseini, S.~Rangan, P.~Liu, T.~Korakis, S.~S. Panwar, and
  T.~S. Rappaport,
\newblock ``{Directional cell discovery in millimeter wave cellular
  networks},''
\newblock {\em To appear in IEEE Trans. Wireless Commun.}, 2015.

\bibitem{saleh}
A.~A.~M. Saleh and R.~Valenzuela,
\newblock ``{A statistical model for indoor multipath propagation},''
\newblock {\em IEEE Journ. Selected Areas in Commun.}, vol. 5, no. 2, pp.
  128--137, Feb. 1987.

\bibitem{sayeed_itw}
A.~M. Sayeed, V.~Raghavan, and J.~H. Kotecha,
\newblock ``{Capacity of space-time wireless channels: A physical
  perspective},''
\newblock {\em {Proc. IEEE Inf. Theory Workship, San Antonio, TX}}, pp.
  434--439, Oct. 2004.

\bibitem{gesbert_review}
D.~Gesbert, H.~B$\ddot{{\rm o}}$lcskei, D.~A. Gore, and A.~J. Paulraj,
\newblock ``{Outdoor MIMO wireless channels: Models and performance
  prediction},''
\newblock {\em IEEE Trans. Commun.}, vol. 50, no. 12, pp. 1926--1934, Dec.
  2002.

\bibitem{andrea_review}
A.~J. Goldsmith, S.~A. Jafar, N.~Jindal, and S.~Vishwanath,
\newblock ``{Capacity limits of MIMO channels},''
\newblock {\em IEEE Journ. Selected Areas in Commun.}, vol. 21, no. 5, pp.
  684--702, June 2003.

\bibitem{vasanth_asilomar}
V.~Raghavan, A.~S.~Y. Poon, and V.~V. Veeravalli,
\newblock ``{MIMO systems with arbitrary antenna array architectures: Channel
  modeling, capacity and low-complexity signaling},''
\newblock {\em Proc. IEEE Asilomar Conf. Signals, Systems and Computers,
  Pacific Grove, CA}, vol. 1, pp. 1219--1223, Nov. 2007.

\bibitem{vasanth_arxiv2011}
V.~Raghavan, S.~V. Hanly, and V.~V. Veeravalli,
\newblock ``{Statistical beamforming on the Grassmann manifold for the two-user
  broadcast channel},''
\newblock {\em IEEE Trans. Inf. Theory}, vol. 59, no. 10, pp. 6464--6489, Oct.
  2013.

\bibitem{vasanth_gcom15}
V.~Raghavan, S.~Subramanian, J.~Cezanne, and A.~Sampath,
\newblock ``{Directional beamforming for millimeter-wave MIMO systems},''
\newblock {\em Proc. IEEE Global Telecommun. Conf., San Diego, CA}, Dec. 2015,
\newblock Extended version, Available: Online. {\tt
  http://www.arxiv.org/abs/1601.02380}.

\bibitem{golub}
G.~H. Golub and C.~F.~Van Loan,
\newblock {\em {Matrix Computations}},
\newblock Johns Hopkins University Press, 3rd edition, 1996.

\bibitem{vasanth_isit05}
V.~Raghavan, A.~M. Sayeed, and N.~Boston,
\newblock ``{When is limited feedback for transmit beamforming beneficial?},''
\newblock {\em Proc. IEEE Intern. Symp. Inf. Theory, Adelaide, Australia}, vol.
  1, pp. 1544--1548, Sept. 2005.

\bibitem{vasanth_jsac}
V.~Raghavan, R.~W. Heath, Jr., and A.~M. Sayeed,
\newblock ``{Systematic codebook designs for quantized beamforming in
  correlated MIMO channels},''
\newblock {\em IEEE Journ. Selected Areas in Commun.}, vol. 25, no. 7, pp.
  1298--1310, Sept. 2007.

\bibitem{wax_kailath}
M.~Wax and T.~Kailath,
\newblock ``{Detection of signals by information theoretic criteria},''
\newblock {\em IEEE Trans. Acoustics, Speech, Sig. Proc.}, vol. ASSP-33, no. 4,
  pp. 387--392, Apr. 1985.

\bibitem{fessler}
J.~A. Fessler and A.~O. Hero,
\newblock ``{Space-alternating generalized expectation maximization
  algorithm},''
\newblock {\em IEEE Trans. Sig. Proc.}, vol. 42, no. 10, pp. 2664--2677, Oct.
  1994.

\bibitem{fleury}
B.~H. Fleury, M.~Tschudin, R.~Heddergott, D.~Dahlhaus, and K.~I. Pedersen,
\newblock ``{Channel parameter estimation in mobile radio environments using
  the SAGE algorithm},''
\newblock {\em IEEE Journ. Selected Areas in Commun.}, vol. 17, no. 3, pp.
  434--450, Mar. 1999.

\bibitem{richter}
A.~Richter, M.~Landmann, and R.~Thoma,
\newblock ``{RIMAX -- a flexible algorithm for channel parameter estimation
  from channel sounding measurements},''
\newblock {\em COST273, 9th Management Committee Meeting, Athens, Greece,
  TD(04)045}, Jan. 2004.

\bibitem{ramasamy}
D.~Ramasamy, S.~Venkateswaran, and U.~Madhow,
\newblock ``{Compressive tracking with 1000-element arrays: A framework for
  multi-Gbps mm-wave cellular downlinks},''
\newblock {\em Proc. Annual Allerton Conf. Commun., Control and Computing,
  Allerton, IL}, pp. 690--697, Oct. 2012.

\bibitem{schniter}
P.~Schniter and A.~M. Sayeed,
\newblock ``{Channel estimation and precoder design for millimeter-wave
  communications: The sparse way},''
\newblock {\em Proc. IEEE Asilomar Conf. Signals, Systems and Computers}, pp.
  273--277, Nov. 2014.

\bibitem{recht}
B.~Recht, M.~Fazel, and P.~A. Parrilo,
\newblock ``{Guaranteed minimum-rank solutions of linear matrix equations via
  nuclear norm minimization},''
\newblock {\em SIAM Review}, vol. 52, no. 3, pp. 471--501, Aug. 2010.

\bibitem{adhikary}
A.~Adhikary, E.~Al Safadi, M.~K. Samimi, R.~Wang, G.~Caire, T.~S. Rappaport,
  and A.~F. Molisch,
\newblock ``{Joint spatial division and multiplexing for mm-Wave channels},''
\newblock {\em IEEE Journ. Selected Areas in Commun.}, vol. 32, no. 6, pp.
  1239--1255, June 2014.

\bibitem{balanis}
C.~A. Balanis,
\newblock {\em {Antenna Theory: Analysis and Design}},
\newblock Wiley-Interscience, 3rd edition, 2005.

\end{thebibliography}

\end{document}